% THIS IS AN EXAMPLE DOCUMENT FOR VLDB 2012
% based on ACM SIGPROC-SP.TEX VERSION 2.7
% Modified by  Gerald Weber <gerald@cs.auckland.ac.nz>
% Removed the requirement to include *bbl file in here. (AhmetSacan, Sep2012)
% Fixed the equation on page 3 to prevent line overflow. (AhmetSacan, Sep2012)

\documentclass{vldb}

\usepackage{balance}  % for  \balance command ON LAST PAGE  (only there!)

%!TEX root = one_sided_privacy.tex

\usepackage{algorithm,algorithmicx,algpseudocode}
\usepackage{amsthm,caption}
\usepackage{amssymb,amsmath,amsthm,amsfonts}
\usepackage[usenames, dvipsnames]{color}
\usepackage[inline]{enumitem}
\usepackage{graphicx}
\usepackage{subcaption}
\usepackage{tikz}
\usepackage{multirow}
\usepackage{url}

\usepackage{tabularx}
\usepackage{dcolumn}
\usepackage{multirow}

\usepackage{soul}

\definecolor{lightblue}{rgb}{.75,.95,.95}
\sethlcolor{lightblue}
\sethlcolor{white}

\usepackage{multirow}
\usepackage{balance}
\usepackage{xspace}
\usepackage{hyperref}
\usepackage[capitalize]{cleveref}
\usepackage{outlines}
\usepackage{hyperref}

\usepackage{array}       
\usepackage{tabularx}
\usepackage{dcolumn} 
  \newcolumntype{d}[1]{D{.}{.}{#1|}}    
\usepackage{booktabs}

\newcommand{\eat}[1]{}

\newcolumntype{d}[1]{D{.}{.}{#1|}}

\newcommand{\dom}{\mathcal{T}}

\newcommand{\pol}{P}
\newcommand{\polMinRelaxSuffix}{mr}
\newcommand{\polMinRelax}{\pol_{\polMinRelaxSuffix}}
\newcommand{\polSensV}{0}
\newcommand{\polNSensV}{1}
\newcommand{\polExt}{\pol:\dom\rightarrow \{\polSensV , \polNSensV \}}
\newcommand{\polRelax}{\succeq_p}

\newcommand{\ospN}[2][]{N_{\pol_{#1}}(#2)}
\newcommand{\eospN}[2][]{N^e_{\pol_{#1}}(#2)}

\newcommand{\ospNMinRelax}[1]{\ospN[\polMinRelaxSuffix]{#1}}

\newcommand{\algoname}[1]{\textnormal{\textsc{#1}}}
\newcommand{\allsens}{all}

\newcommand{\PrTheta}{\Pr\mbox{$_\theta$}}

\newcommand{\mech}{\mathcal{M}}
\newcommand{\out}{\mathcal{O}}
\newcommand{\rr}[1][]{%
\ifthenelse{\equal{#1}{}}{ \algoname{OsdpRR}}{ OsdpRR_{#1}}%
}

\newcommand{\negLap}{Lap_-}

\newcommand{\domD}{\mathcal{D}}

\renewcommand{\vec}[1]{\mathbf{#1}}
\newcommand{\nvec}[1]{\widetilde{\mathbf{#1}}}
\newcommand{\minuseq}{\mathrel{-}=}

\newcommand{\tippers}{\mbox{TIPPERS}}

\newcommand{\funcDef}{f: \domD \rightarrow \mathbb{R}^k}

% Shortcuts to simplify examples and proofs

% Commands for VLDB format 

\theoremstyle{plain}
\newtheorem*{problemStatement*}{Problem Statement}

\newtheorem{definition}{Definition}[section]
\newtheorem{theorem}{Theorem}[section]
\newtheorem*{theorem*}{Theorem}

\newtheorem{lemma}{Lemma}[section]
\theoremstyle{definition}
\newtheorem{example}{Example}

\usepackage{array}
\newcolumntype{P}[1]{>{\centering\arraybackslash}p{#1}}

% Comments
%\newcommand{\todo}[1]{\textcolor{green}{\emph{[[TODO: #1]]}}}

\begin{document}

% ****************** TITLE ****************************************

\title{One-sided Differential Privacy}

% ****************** AUTHORS **************************************

\numberofauthors{5} 

\author{
\alignauthor
Stelios Doudalis\\
\affaddr{U. C. Irvine}\\
\email{sdoudali@uci.edu}
\alignauthor
Ios Kotsoginannis\\
\affaddr{Duke University}\\
\email{iosk@cs.duke.edu}
\alignauthor
Samuel Haney\\
\affaddr{Duke University}\\
\email{shaney@cs.duke.edu}
\and
\alignauthor
Ashwin Machanavajjhala\\
\affaddr{Duke University}\\
\email{ashwin@cs.duke.edu}
\alignauthor
Sharad Mehrotra\\
\affaddr{U. C. Irvine}\\
\email{sharad@ics.uci.edu}
}

\maketitle

\begin{abstract}
In this paper, we study the problem of privacy-preserving data sharing, wherein only a subset of the records in a database are sensitive, possibly based on predefined \emph{privacy policies}. 
Existing solutions, viz, differential privacy (DP),  are over-pessimistic and treat all information as sensitive.
Alternatively, techniques, like access control and personalized differential privacy, reveal all non-sensitive records truthfully, and they indirectly leak information about sensitive records through \emph{exclusion attacks}.

Motivated by the limitations of prior work, we introduce the notion of \emph{one-sided differential privacy} (OSDP).
We formalize the exclusion attack and we show how OSDP protects against it.
OSDP offers differential privacy like guarantees, but only to the sensitive records.
OSDP allows the truthful release of a subset of the non-sensitive records.
The sample can be used to support applications that must output true data, and is well suited for publishing complex types of data, e.g. trajectories.
Though some non-sensitive records are suppressed to avoid exclusion attacks, our experiments show that the suppression results in a small loss in utility in most cases. 
Additionally, we present a recipe for turning DP mechanisms for answering counting queries into OSDP techniques for the same task. 
Our OSDP algorithms leverage the presence of non-sensitive records and are able to offer up to a $25\times$ improvement in accuracy over state-of-the-art DP-solutions.

\end{abstract}

%!TEX root = one_sided_privacy.tex
\section{Introduction}
\label{sec:intro}

%%%%%%% 
%%%%%%% General Introduction
%%%%%%% 

Public and private organizations capture and store diverse types of information about individuals that  are either  obligated to publish or  could benefit from sharing with others \cite{medicalSharingImperative}. 
A key impediment in sharing is the concern of privacy.
For instance, information about visited locations, online purchases, movie preferences \cite{netflixLeak}, or web searches \cite{aolLeak} has been shown to lead to inferences about sensitive properties like health, religious affiliation, or sexual orientation.
% , or family situation.

%%%%%%% 
%%%%%%% DP is the de-facto solution. Give intuition of DP. 
%%%%%%% 

In recent years, the field of privacy preserving data-sharing has flourished and \emph{differential privacy} \cite{fnt:DworkRoth14} (DP) has emerged as the predominant privacy standard.
Informally, DP states that for a database that contains a single record per user, the participation of an individual will change the output of an analysis by a bounded factor proportional to a privacy parameter $\epsilon$.
%In a database that contains a single record per user, DP states that the output of a differential private mechanism should change only by bounded factor proportional to a privacy parameter $\epsilon$, when a single record is added, removed or modified. 
Most work in differential privacy (with the exceptions of \cite{alaggan:jpc17, jorgensen:icde15}) has  been developed  under the assumption that every record in the database is \textit{equally} sensitive.

%%%%%%% 
%%%%%%%  We study a different problem 
%%%%%%%  Policies appear naturally in real life scenarios. 
%%%%%%%  Show with examples that can be cited that policies are real. 
%%%%%%% 

In this work, we depart from the above well trodden path, as we consider data sharing when only part of the data is sensitive, while the remainder (that may consist the majority) is non-sensitive. 
In real-world scenarios, the sensitive/non-sensitive dichotomy of data is not always apparent, but in certain cases it naturally occurs as a consequence of legislature, personal preferences, convenience, and privacy \emph{policies}.

%%%% First example legislature %%%%f

\begin{example}
The General Data Protection Regulation \cite{eu:gdpr, gdprConsent} (GDPR), that will take effect in European Union (EU) in 2018, imposes strict rules on how information about EU citizens is handled by companies.
Individuals must provide an affirmative consent (opt-in) in order to allow the collection and analysis of their data.
Additionally, information about minors under 16 years of age can not be stored and processed without parental authorization.
Failure to comply with GDPR could ensue a fine of up to 4\% of the company's annual turnover.
The previous legislative rules can easily be used to classify a citizen's record as sensitive or not. 
\end{example}

%%%% Second example personal preferences / convenience %%%%
\begin{example}
In psychology literature, it is well established that privacy attitudes depend on personal preferences and convenience \cite{ackerman:ec99,acquisti:sp05}.
In a large scale study in e-commerce \cite{berendt:cacm05}, researchers found two major groups of individuals: those who had little to no privacy concerns and the ``\textit{privacy fundamentalists}''. 
Moreover, individuals' tendency to opt-in or opt-out of data sharing is  a function of perception of utility offered by an application versus the cost of privacy leakage.
Thus, if the gain is sufficient, a big portion of users will willingly release their information.
\end{example}

%%%% Third example policies , smart building %%%%
\begin{example}
\label{exmpl:smart_builds}
%One motivating use case we explore is the application of privacy in the world of Internet of Things (IoT) and specifically in case of smart building management systems\cite{intelBIot} (SBMS).
We examine the use case of privacy applications in a smart building management system \cite{intelBIot}  (SBMS). Such systems, are capable of tracking individuals using indoor localization sensors and devices.
The collected information is  used to improve the performance of HVAC (heating, ventilation, and air conditioning) subsystems or offer location based services to the tenants, such as allowing to locate each other. 
SBMS need to comply with legislature, e.g. GDPR, and privacy policies. 
For instance, individuals should not be tracked in certain locations (e.g. restrooms, smoker lounge, etc.).
Additionally,  ``\textit{privacy fundamentalists}'' can opt-out from the tracking system, at the expense of enjoying only limited services. 
\end{example}

%%%%%%% 
%%%%%%% Access Control Could solve the problem, but exclusion attacks 
%%%%%%% 

%Query answering in the presence of policies has been extensively studied in the access control literature (e.g. tuple-level access control \cite{browder2002virtual, sandhu1996role}, positive and negative authorization views \cite{rizvi:sigmod2004}).

In this paper, we use policies as the language for specifying sensitivity.
Traditionally, query answering in the presence of policies has been the focus of access control literature \cite{browder2002virtual, rizvi:sigmod2004, sandhu1996role}.
The \emph{Truman} and \emph{non-Truman} models  \cite{rizvi:sigmod2004} are the two primary models for query answering in access control.
Based on the Truman model, queries are transparently re-written to ensure users only see what is contained in the authorized view.
Thus, the result is  correct only with respect to the restricted view.
In  non-Truman models queries that can not be answered correctly are rejected.
However, neither approach guarantees privacy. 
Consider example \ref{exmpl:smart_builds}, where the smoker's lounge is the only sensitive location and Alice is attempting to locate Bob.
If Bob is at the sensitive location, the Truman model will return nothing and the non-Truman model will honestly reject the query. 
Therefore, in both cases, Bob's location is indirectly revealed, because the sensitivity of his record is correlated with the value of the record. 
\hl{This attack is an example of what we call an \textit{exclusion attack}, which we formally define in section} \ref{sec:osp_exclusion_attack}.

  % \footnote{\small \hl{A  concept  similar to exclusion attack has also been observed  in a very different context of simulatable auditing} \cite{Kenthapadi:2005}.}

%\hl{This attack is an example of what we call an \textit{exclusion attack} and as a concept it has already been observed in the literature of simulatable auditing \cite{Kenthapadi:2005}.

%%%%%%% 
%%%%%%%  DP is an overkill
%%%%%%%  PDP , HDP could solve the problem, but exclusion attacks.
%%%%%%% 
 
An alternative solution is to directly apply differential privacy.
Such an approach would hide all properties of a record (including its sensitivity) and protect against exclusion attacks.
However, DP algorithms do not leverage on the fact that part of the data is non-sensitive to lower the error rates of  the query answers. 
In prior work, heterogeneous DP \cite{alaggan:jpc17} and personalized differential privacy (PDP) \cite{jorgensen:icde15} have tried to take advantage of different privacy levels between records.
Specifically, PDP \cite{jorgensen:icde15} considers the case where 
each record $r$ is associated with its own privacy parameter $\epsilon(r)$. 
PDP, for instance, allows suppressing records with strong privacy requirements $(\epsilon(r) < t)$, and analyzing only the records with weaker privacy requirements using a $t$-DP algorithm. 
In this context, if we model non-sensitive records by setting their privacy level to $\infty$, PDP would release all the non-sensitive records unperturbed (which is an $\infty$-DP algorithm), which results in an exclusion attack. 

%%%%%%% 
%%%%%%%  Overview why exclusion attacks happen
%%%%%%% 

In general, the fundamental problem  ignored by prior work is that an individual's privacy attitude (and hence the fact that their records is sensitive or non-sensitive) can be correlated with attributes in their record (and whether they are present or absent in the database).
For instance in \cite{milne:optin00}, the authors note that an individual's privacy preferences are correlated to demographic attributes like income, age, and political views.

%%%%%%% 
%%%%%%%  Contributions
%%%%%%% 

\noindent 
In this paper, we make the following contributions:

\begin{enumerate}[leftmargin=*,align=left]
	\item Motivated by the shortcomings of existing solutions, we introduce \emph{one-sided differential privacy} (OSDP), a novel privacy definition that provides rigorous privacy guarantees on the sensitive portion of the data, and guards against exclusion attacks.

	\item We develop a variant of the randomized response algorithm, that can truthfully release a sample of the non-sensitive data, while satisfying OSDP.
	We use this to release and analyze mobility trajectories of users in the context of smart buildings. 
	DP algorithms are unable to analyze these data with low error. In contrast, data released under OSDP supports these analyses with reasonably high accuracy, especially when the fraction of sensitive data is small. 	

	\item 
	We propose a new general recipe for adapting DP algorithms for answering counting queries to OSDP. 
	Our approach leverages on the non-sensitive records, and it is able to improve state-of-the-art DP algorithms by up to $25\times$ for certain inputs.
\end{enumerate}

%%%%%%% 
%%%%%%%  Paper organization
%%%%%%% 
%
The remainder of the paper is organized as follows. 
In Section \ref{sec:preliminaries}, we present differential privacy and its properties. 
We define one-sided differential privacy in Section \ref{sec:osp}, prove its composition properties, and discuss its relation to similar privacy definitions.
In Section \ref{sec:osp}, we also formalize the exclusion attack and we show how OSDP protects against it.
Section \ref{sec:release_true_data} contains an algorithm for releasing true data under OSDP.
In Section \ref{sec:release_aggregates}, we shift our focus on releasing counting queries under OSDP. 
In Section \ref{sec:experiments}, we empirically evaluate our algorithms. 
Section \ref{sec:extentions_future} contains interesting new directions of research and extensions to OSDP.
In Section \ref{sec:conclusion}, we summarize our contributions and experimental results.

%!TEX root = one_sided_privacy.tex
\section{Preliminaries}
\label{sec:preliminaries}
In this section, we introduce differential privacy (DP), its composition properties, and a basic mechanism that satisfies differential privacy. 

We adopt the bounded model of differential privacy \cite{tcc:DworkMNS06}. \hl{Let $\dom$ be the universe of records, $D$ be a database defined as a multiset of records $r \in \mathcal{T}$, and $\domD$ be the universe of all possible databases.}

%\begin{definition}[Bounded Neighbors]
\begin{definition}[Neighboring Databases]
\label{def:dp_bounded_neighbors}
 Databases $D$ and $D'$ are neighboring if they differ in the value of at most one record.
 That is $D' = D \backslash \{r\} \cup \{r'\}$.
 We denote the neighbors of $D$ as $N(D)$.
\end{definition}

\begin{definition}[Differential privacy \cite{tcc:DworkMNS06}]
  \label{def:differential-privacy}
 A randomized algorithm $\mech$ satisfies $\epsilon$-differential privacy if for all $D$ and $D'\in N(D)$ and all $\out \subseteq range(\mech)$:
 \begin{equation*}
    Pr[\mech(D) \in \out] \le e^\epsilon Pr[\mech(D') \in \out].
 \end{equation*}
 Where the probabilities are taken over coin tosses  of $\mech$.
\end{definition}

In DP, $\epsilon$ plays the role of a privacy knob -- with lower $\epsilon$ values  corresponding  to higher levels of privacy.
% DP mechanisms satisfy two important properties - sequential and parallel composition \cite{fnt:DworkRoth14}.

\begin{theorem}[Sequential Composition \cite{fnt:DworkRoth14}] 
\label{thm:seq}
  Let $\mech_1, \cdots \mech_k$ be a set of mechanisms such that $\mech_i$ satisfies $\epsilon_i$-differential privacy.
  Then $\mech_1  \circ \cdots \circ \mech_k$ satisfies $\sum_{i=1}^k \epsilon_i$-differential privacy.
\end{theorem}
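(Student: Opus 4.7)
The plan is to fix neighboring databases $D, D' \in N(D)$ and an arbitrary output set $\out$ in the range of $\mech_1 \circ \cdots \circ \mech_k$, and to show $\Pr[(\mech_1 \circ \cdots \circ \mech_k)(D) \in \out] \le e^{\sum_{i=1}^k \epsilon_i} \Pr[(\mech_1 \circ \cdots \circ \mech_k)(D') \in \out]$ by exploiting that the $\mech_i$ draw their coin tosses independently.

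I would proceed by induction on $k$. The base case $k=1$ is exactly Definition 2 applied to $\mech_1$. For the inductive step, rewrite the composition as $\mech_k \circ (\mech_1 \circ \cdots \circ \mech_{k-1})$ and use independence of the internal randomness to express $\Pr[(\mech_1 \circ \cdots \circ \mech_k)(D) \in \out]$ as a sum (or integral, in the continuous case) over possible outputs $o_k$ of $\mech_k$ of the form $\Pr[\mech_k(D) = o_k] \cdot \Pr[(\mech_1 \circ \cdots \circ \mech_{k-1})(D) \in \out_{o_k}]$, where $\out_{o_k}$ denotes the slice $\{(o_1, \ldots, o_{k-1}) : (o_1, \ldots, o_k) \in \out\}$. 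By the inductive hypothesis, the second factor is bounded by $e^{\sum_{i=1}^{k-1} \epsilon_i}$ times the analogous probability under $D'$; by the $\epsilon_k$-DP of $\mech_k$, the first factor is bounded by $e^{\epsilon_k}$ times its value under $D'$. Multiplying the two bounds and summing back over $o_k$ delivers the desired $e^{\sum_{i=1}^k \epsilon_i}$ factor and the corresponding probability under $D'$.

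The main obstacle I anticipate is applying $\epsilon_k$-DP to a singleton $\{o_k\}$ rather than to a measurable set, which is immediate in the discrete case but requires care when $range(\mech_k)$ is continuous. The clean workaround is to express the factorization in terms of densities via a Radon–Nikodym argument and invoke Definition 2 density-wise, or equivalently to partition $range(\mech_k)$ into measurable cells, apply DP on each cell as a set, and take a limit; neither is deep, but it is the only bookkeeping that goes beyond a direct appeal to Definition 2. Note that the theorem as stated is the non-adaptive version — the $\mech_i$ do not depend on one another's outputs — so independence of coins is justified and no additional uniformity over output histories is needed.
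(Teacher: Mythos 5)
Your proof is correct; the paper does not prove this theorem itself but simply cites it from Dwork--Roth, and your induction with the factorization over $\mech_k$'s output (using independence of the coins and applying $\epsilon_k$-DP to one factor and the inductive hypothesis to the other) is exactly the standard argument behind the cited result. Your handling of the continuous-range subtlety via densities or a partition-and-limit argument is the right bookkeeping, and your remark that the statement is the non-adaptive version is accurate.
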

A consequence of Theorem \ref{thm:seq} is that complex DP algorithms can be designed by sequentially composing simpler algorithms on the data.
For this reason, it is useful to consider the privacy parameter $\epsilon$ as a \textit{privacy budget} that we slowly spend by performing private analyses on the data.

%%%% Parallel composition DP

%
%\begin{theorem}[Parallel Composition]
%\label{thm:par}
%Let $D$ be a database and $\mech_1,\ldots,\mech_k$ be $\epsilon_1,\ldots,\epsilon_k$ differentially privacy algorithms respectively.  For $D_1, \dotsc, D_k$ $k$ disjoint subsets of $D$ i.e., $\forall i,j \in [k]: D_i \cap D_j = \emptyset$ and $\bigcup_{i\in [k]} D_i = D$, the sequence $\mech_i(D_i)$ satisfies $2\cdot \max_i \epsilon_i$-differential privacy
%\end{theorem}
%
%Theorem \ref{thm:par} states that applying differentially private algorithms on disjoint parts of the database comes with a small additional privacy cost.
%

\begin{definition}[Laplace Distribution]
\label{def:lap-dist}
The probability density function of the Laplace distribution with mean $\mu$ and scale $\beta$ is:
\begin{equation*}
f(x; \mu, \beta) = \frac{1}{2\beta}\exp\bigg(-\frac{|x - \mu|}{\beta} \bigg)
\end{equation*}
\end{definition}
For the remainder, we use $\mbox{Lap} (\beta) $ to denote a random variable drawn from the Laplace distribution with mean $\mu = 0 $ and scale $\beta$.

\begin{definition}[Sensitivity] Given a function $f: \domD \rightarrow \mathbb{R}$, the sensitivity of $f$ is:
$$ S(f) = \max\limits_{D, D'\in N(D)}\big\|f(D)-f(D')\big\|_{1}$$
\end{definition}
Sensitivity is a property of a numerical function $f$, as it is an upper bound to the output of $f$ between any two neighboring databases.

\begin{definition}[Laplace Mechanism \cite{tcc:DworkMNS06}] Consider a function $f: \domD \rightarrow \mathbb{R}^d$, for $d \in \mathbb{N}^+$, and database $D$. The  Laplace mechanism $\algoname{Laplace}$ is defined as $\algoname{Laplace}(D) = f(D) + \mathbf{z}$, where $\mathbf{z}$ is a vector of $d$ random variables s.t. $\forall i: z_i \sim \mbox{Lap}(S(f)/\epsilon)$.
\end{definition}

The Laplace mechanism satisfies $\epsilon$-differential privacy \cite{tcc:DworkMNS06} and is the standard mechanism for providing DP on tasks $f$ whose output is either a scalar or a vector of real numbers.

%!TEX root = one_sided_privacy.tex
\section{One-Sided Privacy}
\label{sec:osp}

In this section, we formally define  one-sided differential privacy (OSDP).
In subsection \ref{sec:osp_exclusion_attack}, we define the exclusion attack, and we prove that OSDP protects against it. Furthermore, in subsection \ref{sec:osp_composition}, we examine the composition properties of OSDP and in \ref{sec:osp_connection_with_other} we compare and contrast OSDP with related work.

\subsection{Definition}
\label{sec:osp_definition}

%The heart of OSDP are privacy policies that specify which records are sensitive and which are not. 

\begin{definition}[Policy Function]
	\label{def:policy}
	%Let $r \in \dom$ be a record.
	A policy function $\polExt$ denotes whether a record $r \in \dom$ is sensitive ($\pol(r) = 0$) or non-sensitive ($\pol(r) = 1$).
\end{definition}
A \emph{policy function} (or simply policy) classifies database  records as either sensitive or non-sensitive, examples of policy functions include: 
\begin{itemize}
	\item $\lambda r. if (r.Age \le 17): 0; else: 1$  encodes the policy that records corresponding to minors are sensitive. 
	\item $\lambda r. if (r.Race = NativeAmerican \vee r.Opt\-in = False): 0; else: 1$ implies that all records who have either opted out, or are native Americans are sensitive.
\end{itemize}
%Note that all policy functions depend on the value of  records in the database. 

Policies considered in this work are \textit{non-trivial} i.e., they result in at least one sensitive and one non-sensitive record. If all records are sensitive the private analysis can be done with standard DP algorithms. Conversely, if all records are non-sensitive then any non-private algorithm can be used.
%Private analysis under trivial policies can be done under either standard DP algorithms, or non-private algorithms. 

Our goal is to design a privacy definition that provides a rigorous privacy guarantee for the sensitive records (i.e., $\{r: \pol(r) = 0\}_{\forall r \in D}$). Towards that goal, we define one-sided neighboring databases under policy $P$.
\begin{definition}[One-sided $\pol$-Neighbors]
  \label{def:osp-neighbors}
\hl{Let $D$, $D'$ be two databases and $\pol$ a policy function. $D'$ is a one-sided $P$-neighbor of $D$, i.e., $D' \in \ospN{D}$, if and only if $\exists r \in D$ s.t., $P(r) = 0$, $\exists r' \in D'$ s.t., $r' \neq r$, and $D' = D \backslash \{r\} \cup \{r'\}$.}
\end{definition}
%  Let $D$, $D'$ be two databases and $\pol$ a policy function with $r \in D$ a sensitive record, i.e. $\pol(r) = \polSensV$, and $r' \in D' \backslash \{r\}$. % any possible record other than $r$.
%  Database $D'$ is a one-sided $P$-neighbor of $D$, i.e. $D' \in \ospN{D}$, if and only if $D' = D \backslash \{r\} \cup \{r'\}$.

%
One-sided neighbors under policy $P$ are created by replacing a single sensitive record with any other possible record. 
This implies that the $N_{\pol}$ relation is asymmetric, i.e., $D' \in \ospN{D}$ does not imply $D \in \ospN{D'}$. 
For instance, if $D$ contains only non-sensitive records then $\ospN{D} = \emptyset$.
On the other hand, $D$ is the $P$-neighbor of every database $D'$ with the same size as $D$,  that differs from $D$ by one sensitive record. 

\begin{definition}[One-sided Differential Privacy]
  \label{def:one-sided-privacy}
  Let $\mech$ be a randomized algorithm.
  Algorithm $\mech$ satisfies $(\pol, \epsilon)$-one-sided differential privacy, if and only if $\forall \out \subseteq range(\mech)$, and $\forall D, D' 
  \in \ospN{D}$:
  \begin{equation}
    \label{eq:osp-indistinguishability}
    \Pr [\mech(D) \in \out] \le e^{\epsilon} \Pr [\mech(D') \in \out]
  \end{equation}
 Where the probabilities are taken over coin tosses of $\mech$.
\end{definition}

% A mechanism satisfies OSDP, if the likelihood of any output does not change by too much when a sensitive record is replaced with any other record.
Essentially, OSDP  provides an indistinguishability property similar to DP, but only for the sensitive records.
As we show  in section \ref{sec:osp_exclusion_attack}, OSDP mechanisms not only ``hide'' the value of sensitive records, but also the fact that they are sensitive or not. 
% As we will see in the following subsection, this key result allows OSDP to protect against exclusion attacks.
%Also note, that for $D' = D \backslash \{r\} \cup \{r'\}$ and $\pol(r) = \pol(r') = 0$, OSDP gives exactly the same privacy guarantee as DP. 
Lastly, OSDP does not constraint the information leakage of non-sensitive records as long as the privacy of the sensitive records is preserved. 
\subsection{OSDP and the Exclusion Attack}
\label{sec:osp_exclusion_attack}

 In this subsection, we (a) formalize the notion of freedom from exclusion attacks, (b) show that prior work fails to satisfy it,  and (c) show that all one-sided differentially private algorithms satisfy it.

\noindent  \textbf{Exclusion Attacks:}
\hl{Informally, by exclusion attack we refer to attacks where adversaries decrease their uncertainty about the value of a sensitive record when that record is excluded from a private release. In example} \ref{exmpl:smart_builds}\hl{, if smoker's lounge is the only sensitive location, then excluding Bob's  data leaks information about his presence in the smoker's lounge. We formalize the concept of exclusion attacks  by formalizing its converse -- i.e., when a  mechanism is free of exclusion attacks. }

% by virtue of knowledge about the absence of that record from a private release.
% OLd text
%This approach can lead to privacy leaks on sensitive records, consider the case where an adversary knows Bob is in the dataset, but doesn't know Bob's value, if Bob's record is not released the adversary immediately learns that Bob has a sensitive value and can further infer Bob's value based on the policy used. 

% As discussed earlier, a common approach to share data with records of varying sensitivity levels  is to suppress all sensitive records and release all the non-sensitive record unperturbed. This approach can lead to privacy leaks on sensitive records. Consider the introductory example on SBMS, wherein the smoker lounge is the only sensitive location.
% Alice knows that Bob is in the database, but she is not aware of Bob's location. If Bob's location is not released, then she immediately learns that he is at the smoker's lounge based on the policy. 

% Formalize prior knowledge of adversary:
\hl{First we specify the background information of an adversary. Let $\theta$ be a probability distribution that describes the prior knowledge of an adversary. In general $\theta$ can be any arbitrary function that maps a database to a real number in $[0,1]$, i.e., $\theta: \dom^n \rightarrow [0,1]$. Informally, a mechanism $\mech$ satisfies freedom from exclusion attacks if an attacker's certainty about whether a target record is sensitive increases by at most a bounded amount after accessing $\mech$'s output. We now formalize \emph{freedom from exclusion attack}.}

%The \emph{problem} with the algorithm that releases all the non-sensitive records is that it does in fact disclose information about  sensitive records. First, revealing all the non-sensitive records can reveal whether or not individual's record is sensitive. 

%For instance, an adversary may know that a target record is in the dataset (based on a subset of attribute values). If the adversary does not see any released non-sensitive records matching the target record, then the adversary is certain that the target record is sensitive. Second, learning a record is sensitive is bad, since the fact that a record is sensitive or not is correlated with the values of the record (and in combination with side information can further reveal other properties of the individual). 

\begin{definition}[$\phi$-Freedom from Exclusion Attacks]
\label{def:freedom}
A mechanism $\mech$ satisfies $\phi$-freedom from exclusion attacks for policy $\pol$ and parameter $\phi$ if:
\begin{align*}\label{eq:freedom}
\lefteqn{\forall x: \pol(x)=0, \forall y \in \dom, \forall \out \subseteq range(\mech)}\\
\lefteqn{\forall \theta: \PrTheta(r = x) > 0, \PrTheta(r = y) > 0}\\
&& 
\frac{\PrTheta(r = x | \mech(D) \in \out)}{\PrTheta(r = y | \mech(D) \in \out)} \leq e^\phi \frac{\PrTheta(r = x)}{\PrTheta(r = y)}
\end{align*}
Where $r$ is the target record, $x$ is a value that makes the record sensitive, $y$ is another value in the domain, and $\theta$ is the adversary's prior about the database.
\end{definition}

The definition states that the adversary should not reduce his uncertainty about whether a record is sensitive after seeing the output of the mechanism by more than a multiplicative factor of $e^\phi$. It follows, that a mechanism that reveals all non-sensitive records is not free from exclusion attacks, since there are datasets and outputs where $P_\theta(r = y | \mech(D) \in O)$ becomes $0$ for values $y$ that are non-sensitive according to the policy, resulting in an unbounded posterior odds-ratio. 

\noindent \textbf{OSDP and Freedom from Exclusion Attacks:} Any mechanism that satisfies $(P, \epsilon)$-OSDP also satisfies $\epsilon$-freedom from exclusion attacks as long as the attacker does not know any correlations between the target record and the rest of the database. 
\hl{Suppose $\Theta$ denotes the set of probability distributions over databases of the form $\theta(D) = \prod_{t \in D} \theta(t)$, where $\theta(t)$ is the adversary's prior for tuple $t$. then we can show the following: }

\begin{theorem}\label{thm:osp-freedom}
	A mechanism satisfying $(\pol, \epsilon)$-OSDP satisfies $\epsilon$-freedom from exclusion attacks for policy $\pol$ \hl{for all adversaries whose prior can be expressed as a product of priors over individual records (i.e., $\theta \in \Theta$).}
\end{theorem}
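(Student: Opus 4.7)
The plan is to reduce the posterior odds-ratio in the definition of $\phi$-freedom from exclusion attacks to a likelihood ratio via Bayes' rule, and then bound the likelihood ratio using the $(\pol, \epsilon)$-OSDP guarantee applied to a carefully chosen pairing of databases. Throughout, $r$ denotes the target record, $x$ is a sensitive value ($\pol(x) = 0$), and $y$ is any value in $\dom$.

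First I would apply Bayes' rule to both the numerator and denominator of the posterior ratio. Since $\PrTheta(r = x)$ and $\PrTheta(r = y)$ appear on both sides of the desired inequality, this reduces the claim to showing
\[
\frac{\PrTheta(\mech(D) \in \out \mid r = x)}{\PrTheta(\mech(D) \in \out \mid r = y)} \;\le\; e^{\epsilon}.
\]
Next, I would expand each conditional probability as a sum over all possible completions of the remaining $n-1$ records in $D$:
\[
\PrTheta(\mech(D) \in \out \mid r = x) = \sum_{D_{-r}} \theta(D_{-r} \mid r = x) \Pr[\mech(D_{-r} \cup \{x\}) \in \out],
\]
and analogously for $y$. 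Here the key use of the hypothesis $\theta \in \Theta$ is that $\theta(D_{-r} \mid r = x) = \theta(D_{-r}) = \theta(D_{-r} \mid r = y)$, because under a product prior the remaining records are independent of $r$. So the two sums are taken against identical weights $\theta(D_{-r})$.

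I would then pair, term-by-term, the database $D = D_{-r} \cup \{x\}$ with $D' = D_{-r} \cup \{y\}$. Since $\pol(x) = 0$ and $y \neq x$ (the case $y = x$ is trivial), $D'$ is precisely a one-sided $\pol$-neighbor of $D$ in the sense of Definition~\ref{def:osp-neighbors}: we are replacing a single sensitive record $x$ by some other record $y$. Hence the $(\pol, \epsilon)$-OSDP guarantee gives
\[
\Pr[\mech(D) \in \out] \;\le\; e^{\epsilon}\, \Pr[\mech(D') \in \out]
\]
for every summand. Summing both sides against the common weights $\theta(D_{-r})$ yields the likelihood-ratio bound, and combining with Bayes' rule gives the stated $e^{\epsilon}$ bound on the posterior odds-ratio.

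The only delicate step is the termwise pairing: it relies crucially on the product-prior assumption so that the weights $\theta(D_{-r} \mid r = x)$ and $\theta(D_{-r} \mid r = y)$ coincide, and on the asymmetry of $\ospN{\cdot}$ being correctly oriented — we replace a sensitive $x$ by any $y$, which is exactly the direction in which OSDP gives a bound. Without the factorization hypothesis, conditioning on $r = x$ versus $r = y$ would generally induce different distributions over the remaining records, and the termwise OSDP bound could not be aggregated into a single inequality; this is the main obstacle and the reason the theorem is restricted to $\theta \in \Theta$.
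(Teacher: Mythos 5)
Your proof is correct and follows essentially the same route as the paper's: bound the posterior odds-ratio by the prior odds-ratio times a likelihood ratio, observe that the relevant database pairs $D_{-r}\cup\{x\}$ and $D_{-r}\cup\{y\}$ are exactly one-sided $\pol$-neighbors, and invoke the OSDP guarantee. The paper states the key inequality without derivation; your version supplies the details it omits (the Bayes step, the expansion over completions $D_{-r}$, and the role of the product prior in making the weights coincide so the termwise bound aggregates), which is exactly the right justification.
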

\begin{proof}
	Since the adversary's prior can be decomposed into a product of priors over individual records, we can show that:
	\begin{align*}
	%\label{eq:freedom}
	\frac{\PrTheta(r = x | \mech(D) \in \out)}{\PrTheta(r = y | \mech(D) \in \out)} 
	 \le  \frac{\PrTheta(r = x)}{\PrTheta(r = y)} \max_{D, D'} \frac{\Pr_\mech(\mech(D) \in \out)}{\Pr_\mech(\mech(D') \in \out)}
	\end{align*} 
where $D$ is a database where record $r$ takes the value $x$ and $D'$ is a database where record $r$ takes the value $y$. These are precisely neighboring databases under OSDP (when $x$ is a sensitive record, and $y$ is an arbitrary record). Thus,
\begin{align*}
	\frac{\PrTheta(r = x | \mech(D) \in \out)}{\PrTheta(r = y | \mech(D) \in \out)}
	\le \frac{\PrTheta(r = x)}{\PrTheta(r = y)} \cdot e^\epsilon
\end{align*}
\end{proof}
\noindent
The above proof also shows that all $DP$ mechanisms also satisfy $\epsilon$-freedom from exclusion attacks for any policy. 

The assumption that the adversary does not know any correlations between the target record and the rest of the database is critical. 
Prior work has shown a ``No Free Lunch Theorem'' \cite{sigmod:KiferM11} that states that any mechanism that ensures properties of records are protected against all possible adversaries can not necessarily provide any useful information about the data. 
To illustrate this, consider an $\epsilon$-differentially private mechanism that releases the number of non-sensitive records (using the Laplace mechanism). 
Suppose the adversary knows that either all the records are  sensitive, or all the records are non-sensitive.
Under such a strongly correlated prior, the adversary can tell with high probability whether a target record is sensitive (when the noisy count is close to 0) or not. 
Thus, useful mechanisms can only ensure freedom from exclusion attacks under some assumption on the attacker prior. 
We prove freedom from exclusion attack under the independence assumption (which is the standard assumption used when proving semantic privacy properties of differential privacy \cite{pods:KiferM12}).  
Considering freedom from exclusion attacks when records are correlated is an interesting direction for future work.

\subsection{Composition}
\label{sec:osp_composition}

Much like DP, one-sided differential privacy also composes with itself. First we define policy relaxations, which will help us to compose OSDP instantiations with different policies.

\begin{definition}[Policy Relaxation]
\label{def:pol_containment}
Let $\pol_1$, $\pol_2$ be two policy functions.
$\pol_1$ is a relaxation of $\pol_2$, denoted as $\pol_1 \polRelax \pol_2$, if and only if for every record $r$, $\pol_1(r) \ge \pol_2(r)$.
\end{definition}

If $\pol_1$ is a relaxation of $\pol_2$, then every record that is sensitive under $\pol_1$ is also sensitive under $\pol_2$. 
We also say that $\pol_1$ is \emph{weaker} than $\pol_2$ or 
$\pol_2$ is \emph{stricter} than $\pol_1$.
Relaxing the policy also results in a weaker privacy definition.
 
\begin{theorem}[Privacy Relaxation]
\label{def:relaxation}
Let $\mech$ be a $(\pol_2, \epsilon)$-OSDP mechanisms, . If $\pol_1 \polRelax \pol_2$, then $\mech_2$ also satisfies $(\pol_1, \epsilon)$-OSDP.
\end{theorem}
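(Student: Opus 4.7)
The plan is to reduce the statement to a simple containment of neighbor sets: if $\pol_1 \polRelax \pol_2$, then every one-sided $\pol_1$-neighbor of $D$ is also a one-sided $\pol_2$-neighbor of $D$, and hence the indistinguishability guarantee inherited from $(\pol_2,\epsilon)$-OSDP automatically applies on the (smaller) collection of $\pol_1$-neighbors.

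First I would unpack the policy relaxation: by Definition~\ref{def:pol_containment}, $\pol_1 \polRelax \pol_2$ means $\pol_1(r) \geq \pol_2(r)$ for every $r \in \dom$. Since the sensitive value is $0$ and the non-sensitive value is $1$, this is exactly the statement that $\pol_1(r) = 0$ implies $\pol_2(r) = 0$, i.e., the set of $\pol_1$-sensitive records is contained in the set of $\pol_2$-sensitive records.

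Next I would use this to show $\ospN[1]{D} \subseteq \ospN[2]{D}$ for every database $D$. Pick an arbitrary $D' \in \ospN[1]{D}$. By Definition~\ref{def:osp-neighbors}, there is a record $r \in D$ with $\pol_1(r) = 0$ and a record $r' \neq r$ with $D' = D \setminus \{r\} \cup \{r'\}$. By the containment above, $\pol_2(r) = 0$ as well, so the very same pair $(r, r')$ witnesses $D' \in \ospN[2]{D}$.

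Finally I would invoke the hypothesis. Fix any $D$, any $D' \in \ospN[1]{D}$, and any $\out \subseteq \mathrm{range}(\mech)$. By the containment just established, $D' \in \ospN[2]{D}$, and since $\mech$ satisfies $(\pol_2, \epsilon)$-OSDP we get $\Pr[\mech(D) \in \out] \leq e^\epsilon \Pr[\mech(D') \in \out]$, which is exactly the condition of $(\pol_1,\epsilon)$-OSDP in Definition~\ref{def:one-sided-privacy}. There is no real obstacle here; the proof is essentially a one-line observation about the monotonicity of the neighbor relation in the sensitivity set, and the only thing worth double-checking carefully is the orientation of the inequality $\pol_1 \geq \pol_2$ versus ``sensitive $=0$'', which is the sole potential source of sign confusion.
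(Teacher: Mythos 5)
Your proof is correct and follows exactly the paper's argument: establish the containment $\ospN[1]{D} \subseteq \ospN[2]{D}$ from the pointwise inequality $\pol_1(r) \geq \pol_2(r)$, then transfer the $(\pol_2,\epsilon)$-OSDP inequality to the smaller neighbor set. The paper states this in two lines; you merely spell out the same steps in more detail, including the correct handling of the sign convention (sensitive $=0$).
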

\begin{proof}
If $\pol_1 \polRelax \pol_2$, then for every database $D$, $\ospN[1]{D} \subseteq \ospN[2]{D}$.
Thus based on Definition \ref{def:osp-neighbors}, $\mech_2$ also satisfies $(\pol_1, \epsilon)$-OSDP.
\end{proof}

We next define the minimum relaxation of a pair of policies that are not relaxations of each other. 

\begin{definition}[Minimum Relaxation]
Let $\pol_1, \ldots, \pol_k$ be a set of policy functions.
We define the minimum relaxation as the policy function $\polMinRelax$ such that for every record $r$, $\polMinRelax(r) = \max(\pol_1(r), \ldots, \pol_k(r))$. 
\end{definition}

The minimum relaxation of two policies, $\pol_1$, $\pol_2$ classifies as sensitive the records that are sensitive under both policies. 
Records that are deemed non-sensitive by either $\pol_1$ or $\pol_2$ are considered non-sensitive by $\polMinRelax(r)$. 
$\polMinRelax$ is the strictest policy that is a relaxation of both $\pol_1$ and $\pol_2$. When $\pol_1 = \pol_2$, then $\polMinRelax = \pol_1 = \pol_2$.  
We are now ready to define the sequential composition theorem for OSDP. 

\begin{theorem}[Sequential Composition]
  \label{thm:OSP_sequential}
  Let $\mech_1, \ldots, \mech_k$ be $(\pol_1, \epsilon_1), \ldots, (\pol_k, \epsilon_k)$-OSDP algorithms and 
  $\polMinRelax$ the minimum relaxation of $\pol_1, \ldots, \pol_k$.
  Then the composition of mechanisms $\mech_C = \mech_1 \circ \ldots \circ \mech_k$ satisfies ($\polMinRelax$, $\sum_i \epsilon_i$)-OSDP.
\end{theorem}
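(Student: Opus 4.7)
The plan is to reduce this to essentially the same argument used for sequential composition of standard DP (Theorem \ref{thm:seq}), after first observing that neighbors under the minimum-relaxation policy $\polMinRelax$ are simultaneously neighbors under every individual policy $\pol_i$.

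First, I would verify the following neighbor-containment claim: for any database $D$, $\ospNMinRelax{D} \subseteq \ospN[i]{D}$ for every $i \in \{1, \ldots, k\}$. To see this, take any $D' \in \ospNMinRelax{D}$. By Definition \ref{def:osp-neighbors}, $D' = D \setminus \{r\} \cup \{r'\}$ for some $r \in D$ with $\polMinRelax(r) = 0$. Since $\polMinRelax(r) = \max_j \pol_j(r)$, having this max equal to $0$ forces $\pol_i(r) = 0$ for every $i$, so $r$ is also sensitive under $\pol_i$ and hence $D' \in \ospN[i]{D}$. Equivalently, this just says $\polMinRelax \polRelax \pol_i$ for each $i$, and then Theorem \ref{def:relaxation} immediately implies that each $\mech_i$ satisfies $(\polMinRelax, \epsilon_i)$-OSDP.

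Second, I would run the standard chain-rule argument. Fix $D$ and $D' \in \ospNMinRelax{D}$, and fix an output tuple $(o_1, \ldots, o_k)$ of $\mech_C$. Treating composition in its most general adaptive form (each $\mech_i$ may depend on $o_1, \ldots, o_{i-1}$), I would write
\begin{equation*}
\Pr[\mech_C(D) = (o_1, \ldots, o_k)] = \prod_{i=1}^k \Pr[\mech_i(D, o_1, \ldots, o_{i-1}) = o_i].
\end{equation*}
For each factor, the mechanism $\mech_i$ (with its earlier outputs now fixed as auxiliary inputs) still satisfies $(\polMinRelax, \epsilon_i)$-OSDP from the first step, so
\begin{equation*}
\Pr[\mech_i(D, o_1, \ldots, o_{i-1}) = o_i] \le e^{\epsilon_i} \Pr[\mech_i(D', o_1, \ldots, o_{i-1}) = o_i].
\end{equation*}
Multiplying these $k$ inequalities yields an $e^{\sum_i \epsilon_i}$ bound on the ratio, and integrating or summing over any output set $\out \subseteq range(\mech_C)$ finishes the argument.

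The only part that needs care, rather than being routine, is the first step: ensuring that ``minimum relaxation'' is defined correctly so that $\ospNMinRelax{D} \subseteq \ospN[i]{D}$ for every $i$. Once that containment is in hand, the rest is a mechanical replay of the DP sequential-composition proof and I don't anticipate any real obstacle.
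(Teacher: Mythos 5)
Your proposal is correct and follows essentially the same route as the paper: first use the relaxation property (Theorem~\ref{def:relaxation}, via the observation that $\polMinRelax(r)=\max_j \pol_j(r)=0$ forces every $\pol_i(r)=0$, so $\ospNMinRelax{D}\subseteq\ospN[i]{D}$) to conclude each $\mech_i$ is $(\polMinRelax,\epsilon_i)$-OSDP, then multiply the per-mechanism ratios to get the $e^{\sum_i\epsilon_i}$ bound. Your version is marginally more careful than the paper's (you spell out the neighbor containment and handle adaptive composition explicitly), but the argument is the same.
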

\begin{proof}
According to Theorem \ref{def:relaxation}, every algorithm $\mech_i$ satisfies ($\polMinRelax$, $\epsilon_i$)-OSDP.
Let $D$ be a database and $D' \in \ospNMinRelax{D}$, then
  \begin{align*}
    \frac{\Pr [\mech(D) = (x,y)]}{\Pr [\mech(D') = (x,y)]} 
    & = \prod_i \frac{\Pr [\mech_1(D) = x]}{\Pr [\mech_1(D') = x]} 
    &\le e^{\sum_i \epsilon_i}
  \end{align*}
Thus $\mech$ satisfies ($\polMinRelax$, $\sum_i \epsilon_i$)-OSDP.
\end{proof}

\subsection{Connection with other Privacy Definitions}
\label{sec:osp_connection_with_other}
In this subsection, we discuss the connection of OSDP with other three similar privacy definitions: differential privacy, personalized differential privacy and blowfish privacy. 

%%%%%%%%%%%
%%%%%%%%%%% Differential Privacy
%%%%%%%%%%%

\noindent {\bf Differential Privacy:} % \cite{fnt:DworkRoth14} - we've already added citation earlier.
OSDP is a generalization of differential privacy; the latter is a special case of the former when every record is sensitive as shown below. 

\begin{definition}[All Sensitive Policy]
	\label{def:dp-policy}
	Let $\pol_{\allsens}$ denote the policy function that classifies every record as sensitive.That is, $\pol_{\allsens}(r) = 0$, $\forall r$. We call this the \emph{all sensitive policy}. 
\end{definition}

\begin{lemma}
\label{thm:DP-compatibility}
\hl{Any $\epsilon$-differential private mechanism also satisfies $(\pol, \epsilon)$-OSDP for any policy function $\pol$.}
\end{lemma}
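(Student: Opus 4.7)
The plan is to show that the one-sided neighbor relation under any policy $\pol$ is a subset of the standard (bounded) DP neighbor relation, so the DP guarantee immediately transfers to OSDP.

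First I would unfold the two neighbor definitions. Standard neighbors (Definition \ref{def:dp_bounded_neighbors}) are any pair $D,D'$ with $D' = D\setminus\{r\}\cup\{r'\}$ for some records $r,r'$. One-sided $\pol$-neighbors (Definition \ref{def:osp-neighbors}) have the same syntactic form $D'=D\setminus\{r\}\cup\{r'\}$, but with the additional constraint that $\pol(r)=0$ and $r'\neq r$. Consequently, for every database $D$ and every policy $\pol$, $\ospN{D}\subseteq N(D)$.

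Next I would invoke the hypothesis: if $\mech$ satisfies $\epsilon$-DP, then for every $D$, every $D'\in N(D)$, and every $\out\subseteq \mathrm{range}(\mech)$,
\[
\Pr[\mech(D)\in\out]\le e^{\epsilon}\Pr[\mech(D')\in\out].
\]
Restricting this inequality to the (possibly smaller) collection of pairs with $D'\in\ospN{D}$ is immediate, and matches exactly the condition required by Definition \ref{def:one-sided-privacy}. Therefore $\mech$ satisfies $(\pol,\epsilon)$-OSDP.

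There is no real obstacle here: the argument is essentially a containment of neighbor relations followed by specialization of the DP inequality. The only subtle point worth flagging is that OSDP uses an asymmetric neighbor relation, so one has to check the direction of the inequality carries over; this is fine because DP's symmetric guarantee already covers both directions, and OSDP only demands the one-sided bound. The lemma can therefore be stated and proved in just a few lines.
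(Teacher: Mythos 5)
Your proof is correct and rests on the same core fact as the paper's: the one-sided neighbor relation $\ospN{D}$ is contained in the standard DP neighbor relation, so the DP inequality specializes to OSDP. The paper merely packages this containment differently, first observing that $\epsilon$-DP is $(\pol_{\allsens},\epsilon)$-OSDP and then invoking the Privacy Relaxation theorem (whose own proof is the same neighbor-set containment), whereas you argue the containment directly for an arbitrary $\pol$; the substance is identical.
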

\begin{proof}
An $\epsilon$-differentially private algorithm satisfies $(\pol_{\allsens}, \epsilon)$-OSDP. 
Every policy $\pol$ is a relaxation of $\pol_{\allsens}$. 
Thus, by Theorem~\ref{def:relaxation}, a mechanism that satisfies $\epsilon$-differential private also satisfies $(\pol, \epsilon)$-OSDP.
\end{proof}

\begin{lemma}
\label{thm:DP-compatibility2}
\hl{
Any $(P_{all}, \epsilon)$-OSDP private mechanism also satisfies $\epsilon$-differential privacy.}
\end{lemma}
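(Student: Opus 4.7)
The plan is to show that the set of DP-neighboring pairs coincides (up to the trivial case $D=D'$) with the set of one-sided $P_{\allsens}$-neighboring pairs, so the OSDP indistinguishability bound transfers directly to a DP bound.

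First I would unpack the two neighbor relations. By \cref{def:dp_bounded_neighbors}, $D$ and $D'$ are DP-neighbors iff $D' = D \setminus \{r\} \cup \{r'\}$ for some $r \in D$ and some $r' \in \dom$ (including the possibility $r' = r$, which gives $D' = D$). By \cref{def:osp-neighbors}, $D' \in \ospN[\allsens]{D}$ iff there exists $r \in D$ with $\pol_{\allsens}(r) = 0$ and some $r' \neq r$ such that $D' = D \setminus \{r\} \cup \{r'\}$. Since $\pol_{\allsens}(r) = 0$ holds for every $r$ by \cref{def:dp-policy}, the sensitivity condition is automatic, so $\ospN[\allsens]{D}$ is exactly the set of DP-neighbors of $D$ other than $D$ itself.

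Next I would conclude the DP inequality in two cases. Fix an arbitrary pair of DP-neighbors $D, D'$ and any measurable $\out \subseteq range(\mech)$. If $D = D'$ the inequality $\Pr[\mech(D) \in \out] \le e^\epsilon \Pr[\mech(D') \in \out]$ holds trivially since both sides are equal. Otherwise $D' \in \ospN[\allsens]{D}$ by the observation above, and then the hypothesis that $\mech$ satisfies $(\pol_{\allsens}, \epsilon)$-OSDP directly yields $\Pr[\mech(D) \in \out] \le e^\epsilon \Pr[\mech(D') \in \out]$. This verifies \cref{def:differential-privacy}, so $\mech$ is $\epsilon$-DP.

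There is essentially no hard step here; the lemma is a bookkeeping exercise showing that the OSDP neighbor relation collapses to the (symmetric) DP neighbor relation when the policy marks every record sensitive. The only mild subtlety to be careful about is the degenerate $D = D'$ case, which OSDP excludes by requiring $r' \neq r$ but DP does not; handling it separately (as above) avoids any appearance of a gap. Combined with \cref{thm:DP-compatibility}, this lemma shows that $(\pol_{\allsens}, \epsilon)$-OSDP and $\epsilon$-DP are in fact equivalent, justifying the claim in \cref{sec:osp_connection_with_other} that OSDP is a strict generalization of DP.
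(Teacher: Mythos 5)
Your proof is correct and follows the same route as the paper: observe that every DP-neighboring pair is a $\pol_{\allsens}$-one-sided neighboring pair (since $\pol_{\allsens}$ makes the sensitivity condition vacuous), so the OSDP indistinguishability bound transfers verbatim to Definition~\ref{def:differential-privacy}. Your separate handling of the degenerate $D = D'$ case is a harmless extra precision that the paper's one-line argument elides.
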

\begin{proof}
\hl{Neighboring datasets under DP (Def.} \ref{def:dp_bounded_neighbors}\hl{) are also $P_{all}$-one-sided neighbors (Def.} \ref{def:osp-neighbors}\hl{). Then any mechanism that satisfies $\epsilon$-indistinguishability between any two $P_{all}$-one-sided neighboring datasets also satisfies $\epsilon$-indistinguishability between any two neighboring datasets.}
\end{proof}

Differential privacy is the strictest form of OSDP  and any differential private algorithm will  also satisfy any version of  one-sided differential privacy for the same $\epsilon$. 

%%%%%%%%%%%
%%%%%%%%%%% Personalized Differential Privacy
%%%%%%%%%%%
% The fact that under PDP the adversary is allowed to know $\Phi$, explains how PDP is vulnerable to exclusion attacks.}

% Moreover, PDP requires not only that $\Phi$ is public but also that ``\textit{a record's privacy parameter must not indicate anything about their sensitive values}''\cite{jorgensen:icde15}, i.e., the values of $\Phi$ for each record in the dataset.

% Summary of PDP
\noindent {\bf Personalized Differential Privacy (PDP) \cite{jorgensen:icde15}:}
In PDP different records in the dataset are allowed different levels of privacy. This is modeled by a publicly known function $\Phi$ that associates records $r$ with  a personalized privacy level $\Phi(r) = \epsilon_r$ -- \hl{i.e., every record must publicly ``declare" its privacy level \emph{independent of its value}. However, in the privacy model of OSDP the fact that a record is sensitive or not is secret. This the key distinction between OSDP and PDP.}
	
\hl{To explain this further, while it seems one can model $(\pol,\epsilon)$-OSDP using PDP with function $\Phi_\pol$ that associates sensitive records with privacy parameter $\epsilon$, and non-sensitive records with a privacy parameter of $\infty$, we can construct mechanisms that satisfy $\Phi_\pol$-PDP, but not $(\pol,\epsilon)$-OSDP. For example, consider the following algorithm: $\algoname{Suppress}$ that picks a threshold $\tau$, suppresses all sensitive records with privacy parameter $\epsilon_r < \tau$, and then runs a $\tau$-differentially private computation on the rest of the records. Then for $\tau = \infty$ $\algoname{Suppress}$ will suppress all sensitive records and permit the release of the non-sensitive records with no perturbation. While $\algoname{Suppress}$ satisfies $\Phi_\pol$-PDP, since it is a special case of the $\algoname{Threshold}$ algorithm} \cite{jorgensen:icde15}, \hl{it fails to satisfy OSDP.}

\hl{A second key difference is that while all OSDP mechanisms enjoy $\epsilon$-freedom from exclusion attacks, PDP is a definition that allows algorithms that are susceptible to exclusion attacks. It naturally follows from Definition }\ref{def:freedom} \hl{that $\algoname{Suppress}$ with $\tau = \infty$ is vulnerable to exclusion attacks. Even in the case where one chooses a large but finite value for $\tau$, $\algoname{Suppress}$ satisfies Definition} \ref{def:freedom} \hl{only for $\phi = \tau$ and hence does not satisfy $(\pol,\epsilon)$-OSDP, which requires all mechanisms to satisfy freedom from exclusion attacks for $\phi = \epsilon$.}

\begin{theorem}
\label{thm:pdp}
\hl{The $\algoname{Suppress}$ algorithm with threshold  $\tau$ satisfies freedom from exclusion attack for $\phi = \tau$.}
\end{theorem}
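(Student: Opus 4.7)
The plan is to follow the same two-step template as the proof of Theorem~\ref{thm:osp-freedom}. First, assuming the adversary's prior factorizes over records (as in Theorem~\ref{thm:osp-freedom}), I would decompose the posterior odds-ratio in Definition~\ref{def:freedom} and reduce the statement to showing that
\[
\max_{D_x, D_y}\frac{\Pr[\mech(D_x)\in\out]}{\Pr[\mech(D_y)\in\out]} \le e^\tau,
\]
where $\mech = \algoname{Suppress}$, and $D_x$, $D_y$ agree on every record except the target $r$, which takes the value $x$ with $\pol(x)=0$ in $D_x$ and an arbitrary value $y$ in $D_y$.

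The second step is a case analysis on $\pol(y)$. Since $\pol(x)=0$, the record $r = x$ carries privacy parameter $\Phi_\pol(x) = \epsilon < \tau$ and is therefore stripped from $D_x$ by $\algoname{Suppress}$ before the inner $\tau$-DP computation is invoked. If $\pol(y) = 0$ as well, then $y$ is likewise stripped from $D_y$; the inner computation sees identical inputs on both sides, the output distributions coincide, and the ratio is exactly $1 \le e^\tau$. If $\pol(y) = 1$, then $\Phi_\pol(y) = \infty \ge \tau$, so $y$ survives suppression; the two post-suppression databases then differ by exactly the single record $y$, and the $\tau$-DP guarantee of the inner computation bounds the ratio by $e^\tau$. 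Combining the two steps yields $\phi = \tau$, as claimed.

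The main obstacle is a piece of bookkeeping in the second sub-case: the post-suppression databases differ by an insertion/deletion of $y$ rather than by a substitution, whereas the paper's bounded neighboring relation in Definition~\ref{def:dp_bounded_neighbors} uses substitution. This can be handled either by invoking the $\tau$-DP guarantee of the inner mechanism under the unbounded neighboring relation (which the PDP framework naturally supports), or, failing that, by having $\algoname{Suppress}$ replace suppressed records with a dummy placeholder so that database sizes are preserved. Either patch leaves the $e^\tau$ bound intact, and the rest of the argument is essentially a transcription of the proof of Theorem~\ref{thm:osp-freedom}.
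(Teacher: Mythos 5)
Your proposal is correct and takes essentially the same route as the paper's proof: the same decomposition of the posterior odds-ratio into the prior odds times a worst-case likelihood ratio $\max_{D_1,D_2}\Pr[\mech(D_1)\in\out]/\Pr[\mech(D_2)\in\out]$, followed by the same case split on $\pol(y)$ (ratio exactly $1$ when $y$ is also suppressed, ratio at most $e^\tau$ via the inner $\tau$-DP computation when $y$ survives). The insertion/deletion bookkeeping you flag is in fact glossed over in the paper, which writes the neighbors directly as $D\cup\{x\}$ versus $D\cup\{y\}$ and appeals to ``the definition of PDP''; your observation that an unbounded (add/remove) neighboring relation or a dummy-placeholder convention is what is implicitly being used is a fair and harmless refinement.
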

\hl{Since $\algoname{Suppress}$ provides a $\tau$ indistinguishability property for the non-sensitive records, it follows that it only enjoys $\tau$-freedom of exclusion attacks. We defer details of the proof to the full paper.}

%%%%%%%%%%%
%%%%%%%%%%% Blowfish Privacy
%%%%%%%%%%%

\noindent {\bf Blowfish Privacy \cite{Haney15:vldb,He14:sigmod}:} 
Recent work has considered generalizations of differential privacy by modifying the notion of neighboring datasets. One example is Blowfish Privacy, where the notion of neighboring datasets is captured using a graph theoretic construct also called a \emph{policy}. Unlike an OSDP policy, which encodes whether a record is sensitive or not, a Blowfish policies specify which properties of records are sensitive (for instance, whether neighboring databases must differ in the entire record, or just in a single attribute of a record). While OSDP policies provide varied privacy protection to different records, Blowfish policies permit the same level of privacy for all records. Finally, unlike OSDP, in Blowfish privacy the resulting neighboring database relation is \emph{symmetric}. 
%The one-sided privacy neighbor relation $N_{\pol}$ (Definition~\ref{def:osp-neighbors}) is \emph{assymetric}. %For instance, for $N_{\pol}(D)$ is empty if $D$ contains only non-sensitive records, and is not empty if $D$ contains even one sensitive record. 

\noindent{\bf Other Related Work:}
In \cite{Böhler2017}\hl{ the authors consider a new privacy model for releasing exact value of outliers in a data-stream while preserving the privacy of the rest of the data. We conjecture that releasing exact values of outliers leads to vulnerability to exclusion attacks to the rest of the records and defer a full proof as future work.
Our notion of exclusion attacks is also reminiscent of how denial of query answering reveals information about records in the context of query auditing. To limit this leakage simulatable auditing has been proposed }\cite{Kenthapadi:2005}\hl{but the algorithms proposed in that work cannot be applied in our setting. Lastly, in }\cite{Cao:2010}, \cite{Cao:2012} \hl{the authors consider the private release of data where only a subset of the attributes are sensitive. This is in contrast to the data model we adopt where the sensitive/non-sensitive  distinction occurs on a per tuple basis.
Note that unlike OSDP, the privacy notions proposed in these papers are defined as properties of the output of a mechanism, do not support composition properties, and do not protect against adversaries with side information.}

\section{Releasing True Data}
\label{sec:release_true_data}
\label{sec:osp-rr}
%(and hence prevents exclusion attacks)
% %Examples of such tasks include the creation of \emph{extractive summaries} \cite{Li:2009, jie2014} that return a ``representative" sample of the input data. 
% Extractive summaries are common in applications such as image summarization and sharing, as well as, summarizing reviews in services such as Yelp and Amazon. 
% Other tasks, that can greatly benefit from a sample of true data, include complex analyses, e.g. classification, or releasing histograms over very large domains and with very large sensitivity. 
% In Section~\ref{sec:exp-results}, we will empirically show the advantages of using \algoname{OsdpRR} for the last two cases.

In this section, we introduce  \algoname{OsdpRR}, a randomized response based algorithm  that satisfies OSDP. \algoname{OsdpRR} works by releasing a true sample of the non-sensitive records in the database. 
The ability to output a true data sample, while ensuring privacy, enables a new class of privacy preserving analysis tasks that
require true data records.  Examples of such tasks include the creation of \emph{extractive summaries} \cite{Li:2009, jie2014}, classification, or releasing histograms over very large domains and with very large sensitivity. In Section~\ref{sec:exp-results}, we empirically show the advantages of using \algoname{OsdpRR} for the last two cases.

% Extractive summaries  return a ``representative" sample of the input data and are common in applications such as image summarization and sharing, as well as, summarizing reviews in services such as Yelp and Amazon.

%Other tasks (that require sample of true data) include text processing such as clustering, classification, sentiment analyses, etc. for text collections  —  tasks vital for large number of big data applications.
%In the era of big data, where the collected information is in the order of petabytes, 
%the ability to output a true sample of data while ensuring formal privacy guarantees 

%\ios{Make the examples here reflect the experiments we do in 6.1 and 6.2 - i.e., classification tasks that need true data + very high dimensional histograms where DP algorithms simply cannot run (due to representation problems). Then add forward pointer to the experiment section.}
%\ios{I don't think we need this sentence} Often, given the large size of data (e.g., order of petabytes), a sample of the data suffices to derive insights and understand trends.

\begin{algorithm}
\caption{\algoname{OsdpRR} ($D, \pol, \epsilon$)}
\label{algo:ospRR}
\begin{algorithmic}[1]
\State $S \leftarrow \{ \}$
\For {$r \in D$} \label{algo:ospRR:loop_start}
	\If {$\pol(r) = 1 $} \label{algo:ospRR:sub_start}
		\State $S \leftarrow S \cup \{ r \}$ with probability $1 - e^{-\epsilon}$
	\EndIf	\label{algo:ospRR:sub_end}
\EndFor \label{algo:ospRR:loop_end}
\State \Return $S$
\end{algorithmic}
\end{algorithm}

$\rr$ is described in Algorithm \ref{algo:ospRR}.
The inputs to $\rr$ are a database $D$, a policy function $\pol$, and the privacy parameter $\epsilon$. 
In lines \ref{algo:ospRR:loop_start} to \ref{algo:ospRR:loop_end}, $\rr$ iterates through every record in the database. 
%All sensitive records, i.e. $\pol(r) = 0$, are suppressed.  
Non-sensitive records, i.e. $\pol(r) = 1$, are added to the output set $S$ with probability equal to $1 - e^{-\epsilon}$. In all other cases the record is ignored. 

\begin{theorem}
	 \label{thm:ospRR-privacy} $\rr$ satisfies $(\pol, \epsilon)$-OSDP.
\end{theorem}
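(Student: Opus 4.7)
The plan is to directly check the OSDP indistinguishability inequality by reasoning about a single differing position in a pair of one-sided $\pol$-neighbors, using a coupling argument on the internal coin tosses of \algoname{OsdpRR}. Fix databases $D$ and $D' \in \ospN{D}$. By Definition \ref{def:osp-neighbors} there exist a sensitive record $r \in D$ (so $\pol(r) = 0$) and some $r' \in \dom$ with $D' = D \setminus \{r\} \cup \{r'\}$. All other records are identical. First I would observe that the loop in lines \ref{algo:ospRR:loop_start}--\ref{algo:ospRR:loop_end} processes each record independently using its own fresh coin toss, so I can couple the randomness at every position other than the one where $D$ and $D'$ differ. Let $S'$ denote the (random) multiset of records that $\rr$ would include from those common positions; $S'$ has the same distribution under $\mech(D)$ and under $\mech(D')$.

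Next I would analyze the contribution from the position where the two databases differ, splitting into two cases based on $\pol(r')$. In either case the contribution from $r$ in $D$ is empty, since line \ref{algo:ospRR:sub_start}'s test fails for $r$. In the first case $\pol(r') = 0$, and the contribution from $r'$ in $D'$ is likewise empty; hence $\mech(D)$ and $\mech(D')$ have identical distributions and the ratio in \eqref{eq:osp-indistinguishability} is trivially $1 \le e^{\epsilon}$. In the second case $\pol(r') = 1$, so $\mech(D) = S'$ deterministically given the coupled randomness, whereas $\mech(D') = S'$ with probability $e^{-\epsilon}$ and $\mech(D') = S' \cup \{r'\}$ with probability $1 - e^{-\epsilon}$.

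Then for any $\out \subseteq range(\mech)$ I would bound
\begin{align*}
\Pr[\mech(D') \in \out]
&= e^{-\epsilon}\,\Pr[S' \in \out] + (1-e^{-\epsilon})\,\Pr[S' \cup \{r'\} \in \out] \\
&\ge e^{-\epsilon}\,\Pr[S' \in \out] \\
&= e^{-\epsilon}\,\Pr[\mech(D) \in \out],
\end{align*}
which rearranges to the required $\Pr[\mech(D) \in \out] \le e^{\epsilon}\,\Pr[\mech(D') \in \out]$, completing both cases.

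I do not anticipate a real obstacle here; the one subtlety is purely bookkeeping, namely being careful that the output is naturally viewed as a (multi)set of records and that the coupling is well defined when some records in $D$ coincide in value with $r'$. This is handled by the per-position view above, where each included record is attributed to the position that generated it, so the conditional analysis at the differing position goes through regardless of whether repeated values appear elsewhere. No property of $\pol$ beyond its binary nature is used, and the constant $1 - e^{-\epsilon}$ in Algorithm \ref{algo:ospRR} is exactly calibrated so that the worst-case ratio between the ``keep'' and ``drop'' probabilities at the differing position is $e^{\epsilon}$.
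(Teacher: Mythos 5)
Your proof is correct and follows essentially the same route as the paper's: both reduce to the single differing position via independence of the per-record coin tosses, and both rest on the same case analysis (the sensitive record is never released, and the replacement record changes the suppression probability by at most a factor of $e^{\epsilon} = 1/e^{-\epsilon}$). Your coupling/event-level presentation is a slightly cleaner packaging of the paper's point-mass factorization, but it is not a different argument.
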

\begin{proof}	  
%	We show that equation~\eqref{eq:osp-indistinguishability} holds for all possible pairs of neighbors and outputs. 
	%As a first step, we prove Theorem \ref{thm:ospRR-privacy} when the size of $D$ is equal to 1. 	
	Let $D$ be a database with size equal to $1$, i.e., it contains a single record.	
	In that case we represent neighboring databases as records $r$, $r'$. 
	There are only two possible outputs, either release record $r$ or suppress it (that we denote as $\rr(r) = \varnothing$).	
	If $r$ is non-sensitive then $r$ has no neighbors (see Definition~\ref{def:osp-neighbors}).
	Thus, we are interested only in cases that $r$ is sensitive ($\pol(r) = \polSensV$) and $r'$ is any possible record, such that $r \ne r'$. 	
	Based on $\rr$'s possible outputs, we have the following cases: 

\noindent 	
\emph{Case 1: $\mathit{\rr(r) = r}$ and $\mathit{r}$ is sensitive}.
	If $r$ is sensitive $\Pr[\rr(r) = r] = 0$, thus equation \eqref{eq:osp-indistinguishability} holds trivially.

\noindent
\emph{Case 2.1: $\mathit{\rr(r) = \varnothing}$ and $\mathit{r}$, $\mathit{r'}$ are sensitive}.	
	\begin{equation*}
    	\frac{\Pr[\rr(r) = \varnothing]}{\Pr[\rr(r') = \varnothing]} = 1 \le e^{\epsilon}
	\end{equation*}
	
\noindent
\emph{ Case 2.2: $\mathit{\rr(r) = \varnothing}$ and $\mathit{r}$ is sensitive, $\mathit{r'}$ is non sensitive}.	
	\begin{equation*}
    	\frac{\Pr[\rr(r) = \varnothing]}{\Pr[\rr(r') = \varnothing]} = \frac{1}{e^{-\epsilon}} = e^{\epsilon}
	\end{equation*}

\noindent
This concludes the proof for the case that $|D| = 1$. 
Next, we extend the proof for any possible database size. 
	Let $D$, $D'$ be two neighbors, that differ on the $k^{th}$ record.
 	Let $S = \{s_1, s_2, \ldots \}$ be a possible output, with $s_i$ being the output for the $i^{th}$ record. 
 	Since $\rr$ makes independent decisions for each record, we have the same result whether $\rr$ is applied to the complete database or each record separately. 
 	Thus we have
 \begin{equation*}		
	\frac{\Pr[\rr(D) = S]}{\Pr[\rr(D') = S]} = 		
	\frac{\Pr[\rr(r_k) = s_k]}{\Pr[\rr(r_k') = s_k]}  
	\le e^{\epsilon}		
 \end{equation*}		
This concludes the proof.
\end{proof}

\begin{table}[t]
 \centering
 \caption{Percentage of released non-sensitive (ns) records using $\rr$ vs $\epsilon$.}
 \begin{tabular}{ P{3.5cm} | c | c | c }
  privacy parameter $\epsilon$ & $1.0$ & $0.5$ & $0.1$ \\
 \hline
  \% of released ns records & $\sim 63\%$ & $\sim 39\%$ & $\sim 9.5\%$\\
 \end{tabular}
 \label{tbl:osp-rr-perc}
\end{table}

%%%%%
%%%%% Sample size
%%%%%
The probability distribution of the sample size, released by $\rr$,  is the Bernoulli distribution with probability of success equal to $1 - e^{-\epsilon}$ and number of trials equal to the number of non-sensitive records. 
Table~\ref{tbl:osp-rr-perc} shows the expected percentage of released non-sensitive data as a function of privacy parameter $\epsilon$. 
For $\epsilon$ equal to $1.0$, the percentage is as high as $63\%$,
while for low epsilon values ($\epsilon = 0.1$), the percentage  drops to approximately $9.5\%$.

\section{Answering Counting Queries}
\label{sec:release_aggregates}

% Put overview of the subsection

% General introduction - what is sec 7 all about
We shift our focus on the task of releasing histograms over a dataset. We define a histogram query to be a set of counts defined over a non-overlapping partitioning of the dataset. One could think of a histogram query as: 
\begin{align*}
&\text{SELECT group, COUNT(*)}\\
&\text{FROM TABLE WHERE $\langle$condition$\rangle$}\\
&\text{GROUP BY $\langle$keys$\rangle$}
\end{align*}
where the output reports all groups with both zero and non-zero counts. 

Histogram estimation under differential privacy is a well researched area with rich literature to draw from. Hay et al \cite{DPBench} present a near-complete benchmark comparison of differentially private algorithms.  The algorithms examined in \cite{DPBench} are sophisticated, over-engineered algorithms that permit the release of accurate low-dimensional histograms -- specifically, 1- or 2-dimensional histograms with the number of bins in the thousands. As shown in Lemma \ref{thm:DP-compatibility}, any DP algorithm for histogram release will satisfy OSDP. In this section, we develop OSDP algorithms that  leverage on the presence of non-sensitive records to provide lower error than even the state-of-the-art DP algorithms for this task.

% This is a task that DP algorithms perform well on (especially for releasing lower dimensional histograms) and one could use any of the state-of-the-art DP algorithms to release counting queries under OSDP. In this section we explore whether non-sensitive records in a dataset allow OSDP algorithms to release counts with significantly lower error than DP algorithms. For the remainder we focus on the specific task of releasing histograms under OSDP. 

One would expect that in the presence of non-sensitive records histogram release becomes automatically easier, but this is not the case as we identify two main issues with releasing histogram estimates under OSDP. First, even in the presence of a \textit{single} sensitive record the global sensitivity of releasing a count is still $1$ (and the sensitivity of a histogram is still $2$). In Section~\ref{sec:primitives} we show that the sensitivity can be reduced by answering queries only on non-sensitive records. This helps when some of the bins in the histogram have exclusively sensitive records and some bins have exclusively non-sensitive records. We can then use a DP algorithm for the bins with sensitive records, and an OSDP algorithm (from Section~\ref{sec:primitives}) for bins with non-sensitive records. 
However, in most cases, every bin the histogram could have a mix of sensitive and non-sensitive records. This occurs, for instance, when the policy is derived from an opt-in/opt-out scenario. Another example is when the policy depends on a single attribute (like race), while the histogram is constructed on a different attribute (like age). This might result in big discrepancies between the distribution of non-sensitive records and the distribution of sensitive records and using just non-sensitive records as a proxy for the entire data will result in high error. For such cases, in Section~\ref{sec:alg-dev}  we present a general recipe for improving the error of state-of-the-art DP algorithms by leveraging non-sensitive records, and relaxing their privacy guarantee to OSDP. 

%When the distribution of non-sensitive records differs from the distribution of sensitive records, then using just non-sensitive records as a proxy for the entire data will result in high error. 

\eat{The organization of the current section is as follows: in section \ref{sec:primitives} we introduce a perturbation OSDP algorithm for answering counting queries under OSDP, in sections \ref{sec:counting-rel} and \ref{sec:helps} we examine under which conditions using OSDP algorithms provides definite accuracy gains over DP algorithms, and in section \ref{sec:alg-dev} we show a general methodology for creating OSDP variants based on algorithms proposed in the rich literature of DP.
Lastly, we refer the reader to section 6 for a comprehensive evaluation of our proposed methods.
%, and lastly in section \ref{sec:agg-exp} we present our experimental results.
}

\subsection{OSDP Primitives}

\label{sec:primitives}
\algoname{OsdpRR} can be used to release histograms by running the query on the sample of non-sensitive records output by \algoname{OsdpRR}. However, this approach can be worse than the Laplace mechanism in many cases. 
\begin{theorem}
	Let $n = |D|$ denote the number of records in database $D$, and $d$ denote the number of bins in the histogram query. Then the expected $L_1$ error of computing the histogram on the output of an \algoname{OsdpRR} algorithm that satisfies $(P, \epsilon)$-OSDP is higher than the expected $L_1$  error of Laplace mechanism that satisfied $\epsilon$-DP, whenever: 
	\begin{equation}
		n\cdot\epsilon > 2d \cdot e^\epsilon
	\end{equation}
\end{theorem}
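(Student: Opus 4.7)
The proof plan is to derive (or tightly bound) the expected $L_1$ error of each mechanism and then compare them directly. The key structural observation that makes the argument clean is that \algoname{OsdpRR} is a pure \emph{suppression} mechanism: the returned sample $S$ is always a sub-multiset of $D$, so the computed histogram is pointwise below the true one and per-bin errors cannot cancel when summed.

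First I would pin down the Laplace side exactly. Because the paper adopts bounded neighbors (Definition \ref{def:dp_bounded_neighbors}), the $L_1$-sensitivity of a $d$-bin histogram query is $2$: a single record replacement decrements one bin and increments another. The Laplace mechanism therefore adds an independent $\Laplace(2/\epsilon)$ draw to every bin, and since $\mathbb{E}\!\left[\,|\Laplace(b)|\,\right]=b$, the expected total $L_1$ error is exactly $2d/\epsilon$.

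Next I would lower-bound the \algoname{OsdpRR} error. Let $n^{ns}$ be the number of non-sensitive records, and recall from the algorithm that every sensitive record is discarded with probability $1$ while each non-sensitive record is kept with probability $1-e^{-\epsilon}$. Because the per-bin error is $c_i - X_i \geq 0$ deterministically, the total $L_1$ error equals $n - |S|$ on every realization. Taking expectations, $\mathbb{E}[|S|]=n^{ns}(1-e^{-\epsilon}) \leq n(1-e^{-\epsilon})$, so
\begin{equation*}
\mathbb{E}\bigl[\,\text{$L_1$ error of }\rr\,\bigr] \;\geq\; n\,e^{-\epsilon},
\end{equation*}
uniformly in the policy $\pol$ and in the split between sensitive and non-sensitive records. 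Comparing the two bounds, a sufficient condition for the \algoname{OsdpRR} error to exceed the Laplace error is $n e^{-\epsilon} > 2d/\epsilon$, which rearranges immediately to the stated inequality $n\epsilon > 2d\,e^{\epsilon}$.

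The main thing that needs care in the write-up is the ``no cancellation'' step underlying the lower bound, i.e., justifying the passage from $\sum_i |c_i - X_i|$ to $\sum_i (c_i - X_i) = n - |S|$; this is where the fact that \rr\ can only remove records (never relabel or add them) is essential. Once that is nailed down, the rest of the proof is a one-line algebraic comparison, so I do not foresee any serious technical obstacle.
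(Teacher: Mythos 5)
Your proof is correct and follows essentially the same route as the paper: compute the Laplace mechanism's expected $L_1$ error as $2d/\epsilon$ (sensitivity $2$ over $d$ bins), lower-bound the \algoname{OsdpRR} error by $n\,e^{-\epsilon}$ via the expected number of suppressed records, and compare. Your treatment is somewhat more careful than the paper's (you make explicit the no-cancellation step and show the bound holds uniformly over the sensitive/non-sensitive split rather than only in the limit of vanishing sensitive records), but the underlying argument is identical.
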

\begin{proof}
The expected $L_1$ error of Laplace mechanism is $2d/\epsilon$. In the case of OSDP, error is due to (a) the suppression of all sensitive records, and (b) suppression of $n \cdot e^{-\epsilon}$ non-sensitive records. Even if the number of sensitive records tends to zero, the $L_1$ error of \algoname{OsdpRR} is at least $n \cdot e^{-\epsilon}$. 
\end{proof}
For instance, a 2-d histogram query over a uniform grid that divides each dimension into $100$ disjoint bins would have $d = 10^4$. When $\epsilon = .1$, \algoname{OsdpRR} will have higher error than the Laplace mechanism if $n > 2.2 \times 10^5$.

For the remainder we use $\mathbf{x}$ to denote the histogram over all records in $D$, and $\mathbf{x}_{ns}$ over the non-sensitive records of $D$ alone. Note that under OSDP, a database $D' \in N_P(D)$ is a neighbor of $D$ in one of two ways: (a) a sensitive record in $D$ is replaced with another sensitive record in $D'$, or (b) a sensitive record in $D$ is replaced with a non-sensitive record in $D'$. Thus, if $\mathbf{x}_{ns}$ ($\mathbf{x'}_{ns}$) denotes the output of a histogram query over the \textit{non-sensitive records} of  $D$ ($D'$, resp.), then the neighboring histograms differ in at most one count, and all counts in $\mathbf{x'}_{ns}$ are no smaller than the counts in $\mathbf{x}_{ns}$. Thus, we can release histogram counts on the non-sensitive records by adding one-sided Laplace noise to ensure OSDP. The one-sided Laplace distribution, $\negLap(\lambda)$, is the mirrored version of exponential distribution with scale equal to $\lambda$, where all the mass is on the negative values.

\begin{definition}[One-Sided Laplace Distribution]
  \label{def:osp-noise}
  The probability density function of the One-Sided Laplace Distribution is:
    \begin{equation*}
    \label{def:osp-pdf}
    f_{\negLap}(x; \lambda) =
    \begin{cases}
      \frac{1}{\lambda}\exp\left(\frac{x}{\lambda}\right) & \text{ if $x \le 0$} \\
      0  & \text{otherwise}
    \end{cases}
  \end{equation*}
\end{definition}

%%% EATING SENSITIVITY
\eat{ We can define sensitivity under OSDP analogous to global sensitivity under DP. 

\begin{definition}[$P$-Sensitivity]
Let $D, D'$ be neighboring databases under policy $\pol$ i.e., $D' \in \ospN{D}$.
Let $\funcDef$ be a function, where $\domD$ is the universe of databases and $k \in \mathbb{N}^+$. Then the $\pol$-sensitivity of $f$ is:
\begin{equation*}
	S_P(f) = \max_{D, D' \in \ospN{D}} \big\| f(D) - f(D') \big\|_1
\end{equation*}
\end{definition}
} %%%%%

The \algoname{OsdpLaplace} answers queries by adding one-sided noise to the query answer computed only using non-sensitive records. 

\begin{definition}[\algoname{OsdpLaplace} ]
Let $D$ be a database, $\pol$ a policy, $D_{ns} =  \{ r \in D ~|~ \pol(r) = 1\}$ the subset of non-sensitive records. The $\algoname{OsdpLaplace}$ answers a histogram query with $d$ bins, by computing the histogram on the non-sensitive records $D_{ns}$, and adding a vector of  $d$ i.i.d. random variables drawn from $\negLap(\lambda)$. 
\end{definition}

\begin{theorem}\label{thm:osplap}
For $\lambda = 1/\epsilon$ \algoname{OsdpLaplace} satisfies $(\pol, \epsilon)$-OSDP.
\end{theorem}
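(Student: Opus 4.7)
The plan is to reduce the claim to a coordinatewise comparison of one-sided Laplace densities by exploiting the asymmetric structure of $P$-neighbors. Fix a database $D$ and $D' \in \ospN{D}$. By Definition~\ref{def:osp-neighbors}, $D'$ is obtained from $D$ by replacing a \emph{sensitive} record $r$ with some record $r'$. I would split into two cases based on the sensitivity of $r'$:

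Case (i), where $r'$ is sensitive: then the non-sensitive sub-databases agree, $D_{ns}=D'_{ns}$, so $\mathbf{x}_{ns} = \mathbf{x}'_{ns}$, and the output distributions of \algoname{OsdpLaplace} on $D$ and $D'$ coincide; inequality~\eqref{eq:osp-indistinguishability} holds with ratio $1$. Case (ii), where $r'$ is non-sensitive: then $D'_{ns}=D_{ns}\cup\{r'\}$, and there is exactly one bin $j$ (the bin to which $r'$ maps) in which $\mathbf{x}'_{ns}[j] = \mathbf{x}_{ns}[j]+1$; all other coordinates agree. In particular $\mathbf{x}'_{ns} \ge \mathbf{x}_{ns}$ coordinatewise with total $L_1$ difference equal to $1$.

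Next I would compute the density ratio directly. For an output $\mathbf{y} \in \mathbb{R}^d$, the noise required from $D$ is $\mathbf{z} = \mathbf{y} - \mathbf{x}_{ns}$, and from $D'$ it is $\mathbf{z}' = \mathbf{y} - \mathbf{x}'_{ns}$. Since $\negLap(\lambda)$ is supported on $(-\infty,0]$, achievability from $D$ (i.e.\ $\mathbf{y} \le \mathbf{x}_{ns}$ coordinatewise) immediately implies achievability from $D'$ because $\mathbf{x}'_{ns} \ge \mathbf{x}_{ns}$. On that set, using independence of the $d$ noise coordinates and the explicit form from Definition~\ref{def:osp-noise},
\begin{equation*}
\frac{f_\mech(\mathbf{y}\mid D)}{f_\mech(\mathbf{y}\mid D')} = \prod_{i=1}^d \exp\!\left(\frac{(\mathbf{y}[i]-\mathbf{x}_{ns}[i]) - (\mathbf{y}[i]-\mathbf{x}'_{ns}[i])}{\lambda}\right) = \exp\!\left(\frac{\|\mathbf{x}'_{ns}-\mathbf{x}_{ns}\|_1}{\lambda}\right) = e^{1/\lambda}.
\end{equation*}
Setting $\lambda = 1/\epsilon$ gives ratio $e^\epsilon$.

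For outputs $\mathbf{y}$ that are achievable from $D'$ but not from $D$ (i.e.\ $\mathbf{y}[j] > \mathbf{x}_{ns}[j]$), the numerator density is zero, so the pointwise ratio is $0 \le e^\epsilon$ trivially. Finally, I would integrate the pointwise bound over an arbitrary measurable output set $\out \subseteq \mathbb{R}^d$ to obtain $\Pr[\mech(D) \in \out] \le e^\epsilon \Pr[\mech(D') \in \out]$, which is exactly~\eqref{eq:osp-indistinguishability}.

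The only subtle step is recognizing that the one-directional nature of OSDP (the ratio is only required in the $D \to D'$ direction, with $D'$ coming from replacing a sensitive record) perfectly matches the asymmetric support of $\negLap$: the neighbor $D'$ can only have \emph{larger} non-sensitive counts, so the one-sided noise always has enough mass to dominate the distribution under $D$ by a factor of $e^\epsilon$. A symmetric notion of neighbor would break this argument, but OSDP does not ask for it. Everything else is routine manipulation of exponential densities.
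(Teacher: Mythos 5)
Your proposal is correct and follows essentially the same route as the paper's proof: both rest on the observation that one-sided $P$-neighbors yield non-sensitive histograms with $\mathbf{x}'_{ns} \ge \mathbf{x}_{ns}$ coordinatewise and $\|\mathbf{x}'_{ns}-\mathbf{x}_{ns}\|_1 \le 1$, split into the zero-density case versus the case $\mathbf{y} \le \mathbf{x}_{ns}$, and bound the product of one-sided exponential density ratios by $e^{\epsilon}$. Your version is marginally more explicit (separating the sensitive-to-sensitive replacement, where the distributions coincide, and integrating the pointwise bound over measurable output sets), but these are presentational refinements of the same argument.
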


\begin{proof}
	Let $D, D'$  $P$-neighboring databases. Let $\vec{x}$ ($\vec{x}'$) denote the true {\bf non-sensitive} histogram on $D$ ($D'$).
	For simplicity, let $\mech$ be an alias for \algoname{OsdpLaplace}.
	Note that  $range(\mech(D, \pol))$ $\subseteq range(\mech(D', \pol))$ since $\forall i$,  $x_i \le x'_i$. 
	Let $\vec{y} \in range(\mech(D', \pol))$, then there are two cases: 
	
	\noindent
	{\bf Case 1. $\exists i: x_i < y_i \le x'_i$}.
	Then, $\Pr [\mech(D) = \vec{y}] = 0$, thus equation \eqref{eq:osp-indistinguishability} holds trivially.
	
	\noindent	
	{\bf Case 2. $\forall i : y_i \le x_i$}
	\begin{align*}
		\label{}
		\frac{\Pr [\mech(D) = \vec{y}]}{\Pr [\mech(D') = \vec{y}]} 
		& = \frac{ 
		\prod\limits_{i} \exp\left( \epsilon\cdot(y_i - x_i)\right)}
		{\prod\limits_{i} \exp\left(\epsilon\cdot(y_i - x'_i)\right)} \\		
		& \le \exp\left( \epsilon \cdot  \| \vec{x}' - \vec{x} \|_1 \right)
		\le \exp(\epsilon)
	\end{align*}
\end{proof}

It is easy to see that the \algoname{OsdpLaplace} mechanism introduces noise with $1/8$ the variance of the differentially private Laplace mechanism. The exponential distribution has half the variance, and another factor of 4 reduction in variance comes from the sensitivity dropping from 2 (in the case of DP) to 1 (in the case of OSDP). The error can be further reduced by using the following algorithm that we call \algoname{OsdpLaplaceL1} that leverages the fact that all input counts are non-negative.

%\begin{algorithm}
%\caption{\algoname{OsdpLaplace} ( $\vec{x}_{ns}, \epsilon$)}
%\label{algo:osp-laplace}
%\begin{algorithmic}[1]
%\State $\nvec{x}_{ns} = \vec{x}_{ns} + \negLap (2 / \epsilon)^d$
%\State $\nvec{x}_{ns}[\nvec{x}_{ns} < 0.0] = 0.0$ \label{algo:osp-laplace:rm-zeros}
%\State Return $\nvec{x}_{ns}$
%\end{algorithmic}
%\end{algorithm}
%
%Line \ref{algo:osp-laplace:rm-zeros} of Algorithm \ref{algo:osp-laplace} is a standard post-processing step that sets all negative frequencies to $0$.

%We also propose a variant of $\algoname{OsdpLaplace}$   that uses two additional post-processing steps. The first ensures non negativity and the second takes into account the fact that we add only negative noise to the counts. From each positive-valued noisy count we subtract the median value of the noise added (i.e., $med(\negLap(\beta)) = - \ln(2) \beta$). We can easily show that this additional step decreases the $L_1$ error of the noisy counts.

\begin{algorithm}
\caption{\algoname{OsdpLaplaceL1} ($\vec{x}_{ns}, \epsilon$)}
\label{algo:osp_laplace_l1}
\begin{algorithmic}[1]
\State $\nvec{x}_{ns} = \vec{x}_{ns} + \negLap (1 / \epsilon)^d$
\State $\nvec{x}_{ns}[\nvec{x}_{ns} < 0.0] = 0.0$ 
\State $\mu = - \ln(2) \cdot 1/\epsilon $
\State $\nvec{x}_{ns}[\nvec{x}_{ns} > 0.0] \minuseq \mu$ 
\State \Return $\nvec{x}_{ns}$
\end{algorithmic}
\end{algorithm}
Step 2 of the algorithm sets all negative counts after adding noise to 0. Note that after this step, every count in $\mathbf{x}_{ns}$ that was originally 0 is also output as 0. However, there could be other bins that are output as 0 (since their noisy count was negative).  Next, note that the one-sided Laplace distribution is not unbiased, step 4 adds back the median value $\mu$ of a one-sided Laplace random variable to make the positive noisy counts unbiased. While \algoname{OsdpLaplace} and \algoname{OsdpLaplaceL1} add significantly lower noise than the Laplace mechanism, they can have higher overall error than the Laplace mechanism since they only use the non-sensitive records to answer queries.

\subsection{A Recipe for OSDP Algorithms}
\label{sec:alg-dev}
We now explore how to utilize the rich literature of DP algorithms in order to build new OSDP algorithms that take advantage of the presence of non-sensitive records in order to reduce the added noise when applicable.

We focus on an important class of DP algorithms for histogram release that can be abstracted to two distinct phases. In the first phase they query a set of statistics on the data and learn an underlying model of it.  Then they use the learnt model  and the $\algoname{Laplace}$ mechanism to add noise to a set of associated aggregate counts  which will be used to estimate the private histogram.  Examples of these algorithms include: \algoname{PrivBayes}\cite{Zhang2014} for high dimensional histograms, \algoname{AGrid}\cite{AGrid} for 2 dimensional histograms, \algoname{AHP}\cite{AHP} for general histograms, and \algoname{DAWA} for 1 and 2 dimensional histograms. Algorithms of this type can be extended under OSDP in the following way.

% A general recipe for extensions
The naive extension  replaces the noise addition step with an OSDP primitive and changes the input histogram to $\mathbf{x}_{ns}$. This extension is easy to implement as it treats the underlying DP algorithm as a black box, requiring from the user just a minor change. However, it suffers from the same problem other OSDP primitives do: if $\mathbf{x}$ and $\mathbf{x}_{ns}$ are dissimilar then the incurred error will be high.

%
%\hl{The second extension  combines the  techniques of established DP algorithms for releasing histograms with a zero count detection method from OSDP algorithms.}

\hl{We now describe our second, more elaborate extension that enhances the performance of established DP algorithms for releasing histograms.}  Let A be a two-phase algorithm as described earlier, we extend it as follows. 
First, we dedicate a fraction $\rho$ of the privacy budget to release the histogram of $\mathbf{x}_{ns}$ using an OSDP primitive from section \ref{sec:primitives}. We use this noisy histogram to compute the set of zero count bins $Z$. Then, we use the rest of the privacy budget to run the algorithm A. At the end of this process we get a set of zero bins $Z$ generated under $(P, \rho\cdot\epsilon)$-OSDP, and the noisy output of A under $(1-\rho)\epsilon$-DP. Before releasing the final noisy histogram, we add a post-processing step that (a) sets to zero all the counts of bins in $Z$, (b) uses the learned model to update the rest of the counts accordingly i.e., we redistribute the value of the counts we removed. By sequential  composition the entire algorithm satisfies $(P, \epsilon)$-OSDP. 

We apply this second technique to extend the DAWA algorithm, the state-of-the-art DP algorithm for this task \cite{DPBench}. We leave extensions of other algorithms as an interesting direction of future work. We call this new OSDP algorithm \algoname{DAWAz} (Alg. \ref{algo:dawaz}).
\algoname{DAWAz} executes an OSDP subroutine to get the zero count cells $Z$ and executes $\algoname{DAWA}$ on the full histogram to get $\bar{\mathbf{x}}$. Then,  $\algoname{DAWAz}$ examines the noisy estimate returned from the OSDP subroutine, finds the  bins with 0 count and replaces their values on $\bar{\mathbf{x}}$ with 0. Lastly, for every DAWA partition $\algoname{DAWAz}$ finds the number of replaced bins in that partition and reallocates the mass total count of the partition to the non replaced bins. 

\begin{theorem}
	\algoname{DAWAz} satisfies $(P, \epsilon)$-OSDP.
\end{theorem}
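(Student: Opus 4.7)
The plan is to decompose \algoname{DAWAz} into two privacy-consuming stages followed by a deterministic post-processing step, argue that each stage is OSDP for the correct policy and budget, and then invoke sequential composition (Theorem~\ref{thm:OSP_sequential}) to obtain the overall $(\pol, \epsilon)$-OSDP guarantee.

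First I would isolate the two data-dependent subroutines. Stage A is the OSDP subroutine (e.g.\ \algoname{OsdpLaplace} or \algoname{OsdpLaplaceL1}) run with budget $\rho\epsilon$ on $\vec{x}_{ns}$, whose output determines the zero set $Z$. By Theorem~\ref{thm:osplap} (or the corresponding claim for \algoname{OsdpLaplaceL1}), Stage A satisfies $(\pol, \rho\epsilon)$-OSDP. Stage B is \algoname{DAWA} executed on the full histogram $\vec{x}$ with budget $(1-\rho)\epsilon$, which by its established guarantee satisfies $(1-\rho)\epsilon$-DP. Invoking Lemma~\ref{thm:DP-compatibility}, Stage B then also satisfies $(\pol, (1-\rho)\epsilon)$-OSDP for the same policy $\pol$ used in Stage A, since $\pol$ is a relaxation of $\pol_{\allsens}$.

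Next I would compose these two stages. Because both stages share the same policy $\pol$, the minimum relaxation $\pol_{mr}$ of $\{\pol, \pol\}$ equals $\pol$, so Theorem~\ref{thm:OSP_sequential} yields that the joint release $(Z, \bar{\vec{x}})$ satisfies $(\pol, \rho\epsilon + (1-\rho)\epsilon) = (\pol, \epsilon)$-OSDP. The remaining piece is the post-processing step of \algoname{DAWAz}: zeroing out the bins in $Z$ inside $\bar{\vec{x}}$ and redistributing mass within each DAWA partition. This step is a deterministic function of $(Z, \bar{\vec{x}})$ and touches the input database only through these two random outputs. A standard post-processing argument adapted to OSDP---for any output set $\out$ and any $P$-neighbors $D, D'$, the preimage of $\out$ under the post-processing map lies in $\mathrm{range}(\mech_A \circ \mech_B)$ and so the indistinguishability inequality transfers unchanged---shows that applying a data-independent map cannot increase the privacy loss. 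Hence the overall algorithm satisfies $(\pol, \epsilon)$-OSDP.

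The main subtlety, and the only place where some care is required, is the post-processing claim. The paper does not explicitly state a post-processing lemma for OSDP, so I would include a one-line justification: for any measurable $f$ and any $D' \in \ospN{D}$, $\Pr[f(\mech(D)) \in \out] = \Pr[\mech(D) \in f^{-1}(\out)] \le e^\epsilon \Pr[\mech(D') \in f^{-1}(\out)] = e^\epsilon \Pr[f(\mech(D')) \in \out]$, which follows directly from Definition~\ref{def:one-sided-privacy}. Aside from this observation, the proof is a clean instantiation of OSDP composition and the DP-to-OSDP lifting lemma, and should be short.
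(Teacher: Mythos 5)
Your proof is correct and takes essentially the same route as the paper's: the OSDP subroutine consumes $\rho\epsilon$, \algoname{DAWA} consumes $(1-\rho)\epsilon$ and is lifted to OSDP via Lemma~\ref{thm:DP-compatibility}, sequential composition (Theorem~\ref{thm:OSP_sequential}) gives the $(\pol,\epsilon)$ total, and the final rescaling is post-processing. The only difference is that you explicitly state and justify the OSDP post-processing property, which the paper invokes without proof --- a worthwhile addition since that lemma does not appear anywhere in the text.
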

\begin{proof}
	Running the OSDP algorithm satisfies $(P, \rho\cdot\epsilon)$-OSDP, while the histogram construction using DAWA satisfies $(1-\rho)\epsilon$-DP. The rest of the algorithm is post-processing. Thus, the proof follows from sequential composition theorem. 
\end{proof}

\begin{algorithm}
\caption{$\algoname{DAWAz}(\mathbf{x}, \mathbf{x}_ns, \epsilon, \rho)$}
\label{algo:dawaz}
\begin{algorithmic}[1]
\State $\epsilon_1 \leftarrow \rho \cdot \epsilon$
\State $\epsilon_2 \leftarrow (1 - \rho)\cdot \epsilon$
\State $\tilde{\mathbf{x}}_{ns} \leftarrow \Call{Osdp}{\mathbf{x}_{ns}, \epsilon_1} $
\State $\bar{\mathbf{x}} \leftarrow \Call{DAWA}{\mathbf{x}, \epsilon_2}$
\State $Z = \{ i ~|~ \tilde{\mathbf{x}}_{ns}  = 0\}$
\State $\forall i \in Z: \bar{x}_i = 0$
\State Let $\mathcal{B}$ the set of partitions returned from \algoname{DAWA}.\
\For {$\forall B \in \mathcal{B}$}
	\State $rescale\_ratio = \frac{|B|}{|Z \cap B |}$
	\State $\forall i \in B: \bar{x}_i = \bar{x}_i \cdot rescale\_ratio$
\EndFor
\State \Return $\bar{\mathbf{x}}$
\end{algorithmic}
\end{algorithm}

\section{Experiments}\label{sec:experiments}

We use real and benchmark datasets to empirically demonstrate that the OSDP algorithms presented in this paper permit accurate data analysis of private datasets by leveraging  the presence of non-sensitive records. We show:
\begin{itemize}[leftmargin=*,align=left]
\item \algoname{OsdpRR} allows the release of true data records, which permits building classifiers with high accuracy \emph{close to that of a non-private baseline} (Section~\ref{sec:exp-results-classification}). In contrast, a state of the art DP classifier (that considers all the data as sensitive) has accuracy close to that of a baseline that does not have access to the data.
\item \algoname{OsdpRR}'s output also permits releasing high dimensional histograms with \emph{an order of magnitude smaller error} than that of a DP algorithm on the same data (Section~\ref{sec:exp-results-high})
\item We also consider answering low dimensional histogram queries, a task where over-engineered DP algorithms achieve highly competitive error rates. We show that  OSDP algorithms outperform state of the art DP algorithms  (Section~\ref{sec:exp-results-low}). In a comparison of 4 OSDP algorithms and 2 DP algorithms on a set of benchmark datasets, \algoname{DAWAz} has on average less than $2\times$ the error compared to the optimal error achieved by any one of these algorithms. In contrast, the state-of-the-art DP algorithm for this task \algoname{DAWA}, incurs $6\times$ the error of the optimal (Fig.~\ref{fig:MRE-RP-close}). In some cases we observe that OSDP algorithms achieve up to $25\times$ lower error rate than DAWA  (Fig.~\ref{fig:sparse99}).

%\item For answering low dimensional histogram queries, where over-engineered DP algorithms exist that can achieve low error, we show that OSDP algorithms outperform state of the art differentially private algorithms (Section~\ref{sec:exp-results-low-real}). In fact, on a set of benchmark datasets OSDP algorithm $\algoname{DAWAz}$ has at least \emph{a factor of $6\times$ lower mean relative error} than DP algorithm $\algoname{DAWA}$ for the task of releasing 1-D histograms (Section~\ref{sec:exp-results-low-benchmark}). 
\end{itemize}

We describe the datasets and the policy functions used in Section~\ref{sec:exp-data}, the analysis tasks in Section~\ref{sec:exp-tasks}, and the results in Section~\ref{sec:exp-results}.

\subsection{Datasets and Privacy Policies}\label{sec:exp-data}
%To compare DP and OSP in both settings described above, 

We use two datasets: $\tippers$, a real dataset from a live IoT testbed under development at UCI \cite{tippers}, and 
%\footnote{\url{http://tippersweb.ics.uci.edu/}
DPBench-1D, a set of $7$  one-dimensional real histograms, used to benchmark the performance of DP algorithms \cite{DPBench, kotsogiannis2017pythia}.

\subsubsection{$\tippers$} \label{sec:data-tip}
\noindent \textbf{Dataset:} The $\tippers$ dataset consists of  a trace of distinct devices
connected to 64  Wi-Fi access points located in the Bren Hall building at UC Irvine collected over a 9 month period from January to September of 2016. 
Each Wi-Fi trace consist of a triple $\langle$AP\_mac, 
device\_mac, timestamp$\rangle$ \footnote{the device\_mac is pseudo-anonymized to meet the IRB requirements for this study} and we use them to construct a total of $585$K unique daily trajectories with $16$K unique users. Users might connect to the same access point multiple times, or might cross areas covered by multiple access points. To address this, we discretized time to 10 minute intervals and set the location at each interval as the most frequent access point during that time. 

% One challenge arising from real sensor data is the high level of noise.
% Additionally, we split the original data into daily trajectories.

%One of the  challenges of working with real life sensor data is the high level of noise. As users move through the building, they might connect to the same access point multiple times, or they might cross areas that are covered by multiple access points.

%In Table \ref{tbl:wifi_general_stats}, we present some general statistics about our dataset.  

%\begin{table}[t]
%\centering
%\caption{Wi-Fi dataset statistics}
%\begin{tabular}{ l | c ||  l | c }
%Statistic &  & Statistic &\\
%\hline
% \# access points & 64 &
% \# users & $\sim$16K \\
% \# records & $\sim$585K  &
%\end{tabular}
%\label{tbl:wifi_general_stats}
%\end{table}

%To classify datasets into sensitive/non-sensitive, we devised based on access points. 
%The different levels of sensitivity, i.e. \% of sensitive records, are created by varying the number of sensitive access points. 
%At the first level of sensitivity, only a single access point is marked as sensitive. 
%In the second level, two access points are sensitive and so forth.
%The order that the access points are chosen is based on how ``busy'" an access point is, i.e. the number of daily trajectories that cross it. 
%Therefore, at the $k^{th}$ level of sensitivity, the $k$ least ``busy" access points are marked as sensitive.

\noindent  \textbf{Privacy Policy:}
% Thus, under DP, neighboring databases differ by arbitrarily changing one user's daily trajectory on a specific day. 
%and classifies all trajectories passing at least one time into a region having a sensitive access point as sensitive. 
 We consider a user's daily trajectory as the object of privacy protection.
Thus, neighboring databases under DP differ in a single user's daily trajectory on a specific day. This protects all properties of an individual within the time period of a specific day. We use access-point level policies to partition the dataset to its sensitive and non-sensitive parts. More specifically, our policies assume a sensitive set of access points (e.g., lounge or restroom) and classify as sensitive all trajectories that pass at least once through a sensitive access point. Based on this general recipe we define a policy $P_\rho$, that results in a non-sensitive dataset with $\rho/100$ share of non-sensitive records (e.g., $P_{99}$ results in a dataset with 99\% non-sensitive daily trajectories). \hl{In our experiments we use $\rho = \{99, 90, 75, 50, 25, 10, 1\}$ to capture different fractions of non-sensitive records.}

\subsubsection{DPBench-1D}
\label{sec:data-benchmark}

\begin{table}[t]
\caption{\label{table:Datasets} Histogram Benchmark}
\begin{center}
\small
	\begin{tabular}{ |l|c|r| }
	\hline
	Dataset & Sparsity & Scale\\  \hline \hline
	Adult & 0.98 & 17,665   \\ \hline
	Hepth  & 0.21 & 347,414   \\ \hline
	Income  & 0.45 & 20,787,122   \\ \hline
	Nettrace  & 0.97 & 25,714   \\ \hline
	Medcost  & 0.75 &  9,415  \\ \hline
	Patent & 0.06 &  27,948,226   \\ \hline
	Searchlogs & 0.51 & 335,889   \\ \hline
\end{tabular}
\end{center}
\end{table}
%
%We also evaluate our algorithms on a set of benchmark datasets that have been heavily utilized in prior work in DP \cite{DPBench} \cite{kotsogiannis2017pythia}. The datasets, shown in table~\ref{table:Datasets}, consist of $1$-d histograms with domain size (i.e., number of bins) equal to $4,096$ and they differ in properties like their sparsity, their scale (i.e., the total count of each histogram), and their partitionality. 
%%
%  In contrast with $\tippers$ where we used value based policies, we aim to use complex policies that result in an unclear partition of the data (e.g., similar to an opt-in/opt-out scenario). Towards that goal we created two novel biased sampling techniques: \algoname{MSampling} and \algoname{HiLoSampling}. These techniques are given a histogram $\mathbf{x}$ and return a non-sensitive histogram $\mathbf{x}_{ns}$ that follows either a Close (\algoname{MSampling}), or a Far (\algoname{HiLoSampling}) policy. Details of the sampling techniques are omitted due to space limitations.  For each our two policies, we used $4$ values of $\rho_\mathbf{x}$: $\{0.99, 0.90, 0.75, 0.50\}$. Thus, for each original histogram we have $4\times 2 = 8$ different non-sensitive histograms for a total of $56$ unique $(\mathbf{x}, \mathbf{x}_{ns})$ pairs to evaluate our algorithms on.
 
\noindent \textbf{Dataset:} We use the DPBench benchmark datasets \cite{DPBench} to evaluate the performance of our algorithms for the task of releasing 1-D histograms. This benchmark has a collection of 7 datasets. Tuples in each of these datasets are drawn from a categorical domain with  size 4,096. The datasets are of varying scales (i.e., number of records), sparsity (i.e., fraction of the total domain that does not appear in the active domain of the dataset) and shape (i.e., empirical distribution), as detailed in  table~\ref{table:Datasets}.

\noindent \textbf{Privacy Policy:} These datasets were originally used to benchmark DP algorithms, and hence do not have any notion of a policy. We simulate opt-in/opt-out policies by sampling a subset of the records as non-sensitive. We denote by $\mathbf{x}$ the true histogram (on $d = 4096$ bins), and by $\mathbf{x}_{ns}$ the histogram over the non-sensitive records. We use two biased sampling methods to generate $\mathbf{x}_{ns}$: \algoname{MSampling}, where the empirical distribution of $\mathbf{x}_{ns}$ is close to that of $\mathbf{x}$, and \algoname{HiLoSampling}, where the empirical distribution of $\mathbf{x}_{ns}$ significantly differs from that of $\mathbf{x}$. \algoname{MSampling} simulates the scenario when an individual's opt-in/opt-out preference has a low correlation with their value, while \algoname{HiLoSampling} simulates the case where the privacy preference is highly correlated with the individual's record value. We call the former policy {\em Close}, and the latter policy {\em Far}.

\algoname{MSampling} takes as input a histogram $\mathbf{x}$, a  sampling ratio $\rho_{\mathbf{x}}$, and a parameter $\theta$, and outputs $\mathbf{x'}$, a sample of $\mathbf{x}$ with $\|\mathbf{x}'\|_1 = \rho_\mathbf{x} \|\mathbf{x}\|_1$. The mean $\mu'$ and the standard deviation $\sigma'$ of $\mathbf{x'}$ are within a $1 \pm \theta$ multiplicative factor of the mean $\mu$ and standard deviation $\sigma$ of $\mathbf{x}$. We use $\theta = 0.1$. 

\algoname{HiLoSampling} works differently, its input is again the histogram $\mathbf{x}$ and the sampling ratio $\rho_{\mathbf{x}}$ and two additional parameters $\gamma > 1$ and $\beta$.  This technique partitions the domain of the histogram into ``High'' and ``Low'' regions. The algorithm picks a bin $b$ at random and sets the bins in the range $b \pm d \cdot\beta$ as ``High'' bins.  Every bin not in the High region is automatically in the Low region. Then, we proceed by sampling with replacement from bins of the histogram, bins of the ``High'' area are sampled with weight $\gamma$ , while bins of the ``Low'' area are sampled with weight $1$. Choosing a high $\gamma$ and a smaller $\beta$  allows us to create an $\mathbf{x}_{ns}$ that is highly dissimilar from $\mathbf{x}$. We set $\gamma = 5$ and $\beta = .4$.

 \hl{For each policy, we use $7$ different values of $\rho_\mathbf{x}$:} $\{0.99, 0.90,\allowbreak 0.75,$\hl{$ 0.50, 0.25, 0.10, 0.01\}$}\hl{. Thus, for each of the 7 datasets we have $7\times 2 = 14$ different non-sensitive histograms for a total of $98$ $(\mathbf{x}, \mathbf{x}_{ns})$ pairs to evaluate our algorithms on.}

%!TEX root = one_sided_privacy.tex

\subsection{Analysis Tasks}\label{sec:exp-tasks}

\vskip 3pt \noindent  {\bf Classification:} 
First we consider a classification task on the $\tippers$ dataset:  based on a user's  daily trajectory we classify whether the user is a \textit{resident} or a \textit{visitor}. More specifically, we learn a logistic regression (LR) model to predict whether a daily trajectory corresponds to a resident. Due to IRB restrictions, we do not have access to the true identities of the owners of each device\_mac in the data.  For that reason, to create the training data we classify as residents owners of device\_mac that (a) visit Bren Hall at least 10 days per month for the last 5 months and either (b) work beyond 7pm once a week, or (c) work more than 6 hours at once per week; every other user is classified as a visitor. In total, we detected $381$ residents that correspond to $\sim 43K$ daily trajectories (out of $\sim 553K$ total daily trajectories).
 We use the following features derived from the daily trajectories:  duration of stay at Bren Hall, distinct number of access points visited during a day, and the number of times each individual access point is visited in a trajectory. We also consider frequent patterns of the form $(AP_1, AP_2, AP_3)$ from the trajectories. A trajectory satisfies a frequent pattern if it visits the three access points during the day at consecutive time intervals. We chose patterns that appear in at least 50 trajectories and construct a feature for each pattern using the number of times the pattern appears in a daily trajectory. 

We evaluate the performance of all LR classifiers using the receiver operating characteristic (ROC) curve, that is created by plotting the true positive rate versus the false positive rate for multiple thresholds over the probability of predicting a test record as belonging to a resident. The accuracy is usually reported using the average area under the ROC curve (AUC) over 10-fold cross validation, which is a value between $[0,1]$.  We report $1.0 - \algoname{AUC}$ as a measure of prediction error.

\vskip 3pt \noindent  {\bf High Dimensional Histograms:}
We also consider the task of privately releasing high dimensional histograms on the \tippers dataset.
More specifically, we consider the number of distinct users for every n consecutive access points in a trajectory (i.e., the n-gram). This is a highly non-trivial task since (a) the domain size for an n-gram is $64^n$ and (b) the histogram over all bins has sensitivity $64^n$ since a user might appear in all n-grams. 
We use the mean relative error (MRE) to measure the quality of our private n-gram estimates. 
More specifically, for a histogram $\mathbf{x}$ of size $d$ and its private estimate $\mathbf{\tilde{x}}$, we have: 
\begin{align*}
		\algoname{MRE} (\mathbf{x}, \mathbf{\tilde{x}})  = \frac{1}{d}\displaystyle\sum_{i=1}^d  |x_i - \tilde{x}_i|/\max (x_i, \delta)
\end{align*}
In our experiments, we set $\delta = 1$ and report the average MRE over $10$ independent executions.

%!TEX root = one_sided_privacy.tex
\vskip 3pt \noindent {\bf Low Dimensional Histograms:}
We also evaluate our algorithms on releasing  $1$ and $2$-D histograms under OSDP. Our goal is twofold, we want to: evaluate the performance of OSDP algorithms in a realistic scenario and  explore under which conditions OSDP algorithms offer advantages over their DP counterparts. Towards these goals we use histograms from both the benchmark datasets and \tippers. More specifically,  we used  \tippers to generate a 2-D histogram that contains the number of distinct users that connected to every access point and for every hour of a single day. For the DPBench-1D datasets, we consider the task of releasing 1-D histograms. 
% Error Metrics
In addition to MRE defined earlier, we also use per bin-relative error to measure the utility of the algorithms. Per-bin relative error is defined as a vector with the same size as the input histogram, and contains one relative error value per bin. More specifically:
\begin{align*}
 	\algoname{Rel}(\mathbf{x}, \mathbf{\tilde{x}}) &= \big[ |x_i - \tilde{x}_i|/\max (x_i, \delta ) \big]_{i\in [d]} 
\end{align*}
We report the median ($\algoname{Rel}_{50}$) and 95\% ($\algoname{Rel}_{95}$) of the per-bin relative error to capture the average and worst-case empirical per-bin errors. For our experiments we ran each algorithm $10$ times on  each input and estimate the value of the loss function by taking the average across the executions, we also use $\delta = 1.0$.

%!TEX root = one_sided_privacy.tex

\subsection{Experimental Results}
\label{sec:exp-results}

\subsubsection{Classification}
\label{sec:exp-results-classification}

\noindent \textbf{Algorithms:} We compare four algorithms.
\begin{enumerate*}[label=\arabic*)]
	\item \emph{All NS} is a {\em baseline that does not satisfy OSDP} that runs a non-private classifier on all the non-sensitive records. This captures the threshold algorithm that satisfies personalized DP \cite{jorgensen:icde15}. Recall that personalized DP is susceptible to the exclusion attack. 
	\item \emph{\algoname{OsdpRR}} is our OSDP strategy of releasing true non-sensitive records using \algoname{OsdpRR} and then running a non-private classifier. 
	\item \emph{ObjDP} is an implementation of the objective perturbation technique for differentially private empirical risk minimization \cite{chaudhuri2011differentially}. As prescribed by the authors, we normalized feature vectors to ensure the norm is bounded by $1$ before running the method on our dataset. 
	\item \emph{Random} is a baseline that randomly predicts a label based on just the label distribution (and ignoring all the features).
\end{enumerate*}

\begin{figure}[htp]
	\centering
	\begin{subfigure}{0.48\columnwidth}
		\includegraphics[width=\columnwidth]{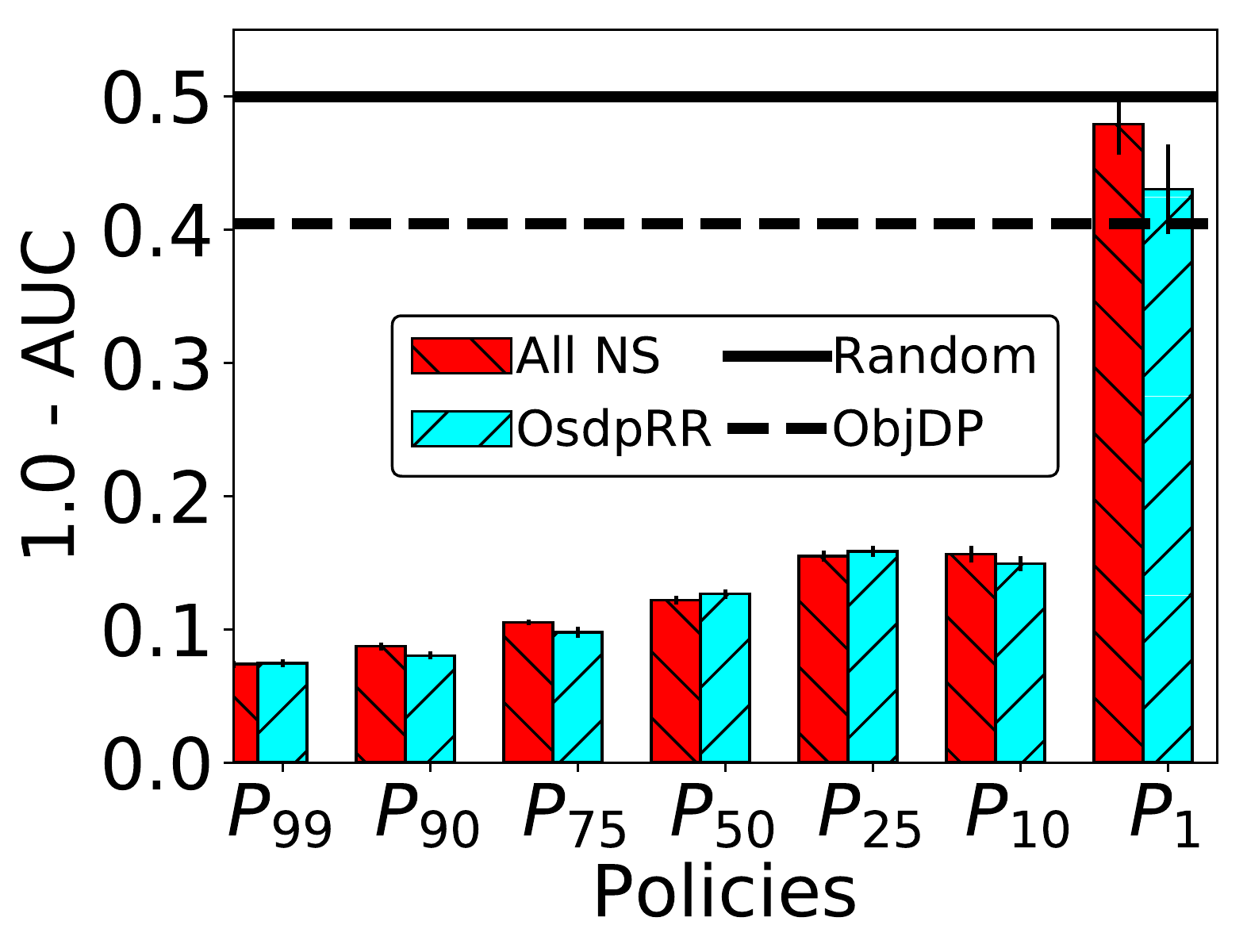}			
		\caption{$\epsilon=1$}	
	\end{subfigure}
	~
	\begin{subfigure}{0.48\columnwidth}
		\includegraphics[width=\columnwidth]{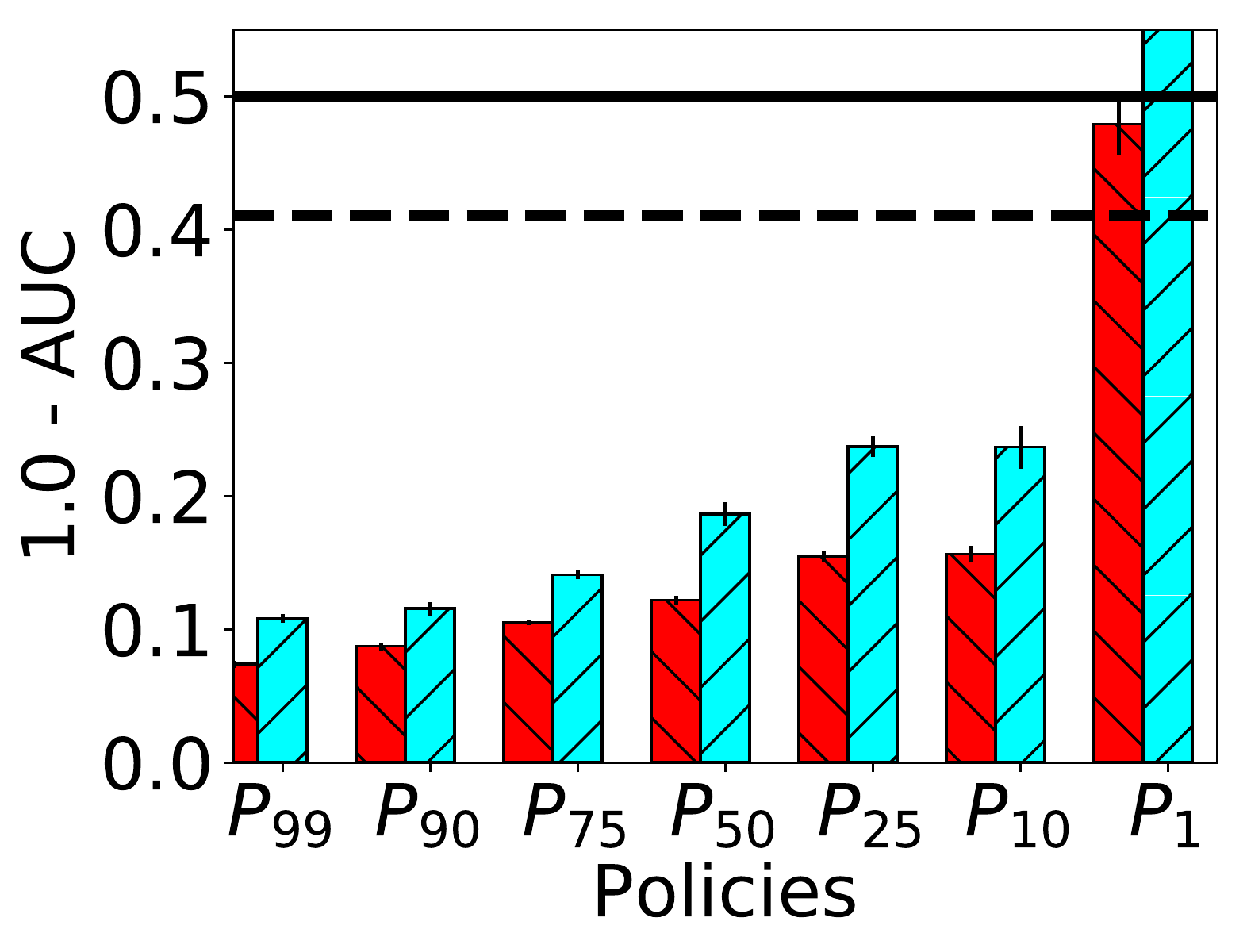}		
		\caption{$\epsilon=0.01$}	
	\end{subfigure}
		\caption{\hl{AUC metric for residents classification.}}
		\label{fig:clas_tipper_auc}
\end{figure}

\noindent  \textbf{Results:} In Figure~\ref{fig:clas_tipper_auc}, we report the error ($1-$AUC) for the policies described in Section \ref{sec:data-tip} for two levels of privacy budget, $\epsilon = 1.0$ and $\epsilon = 0.01$. In each figure, we plot All NS and \rr ~using bars, as their accuracy differs based on the fraction of non-sensitive records in the dataset. We report the error of ObjDP and Random using a straight line.  We see that \rr achieves as low an error as the non-private baseline All NS, and the absolute errors are close to 10\%. As expected, as the fraction of non-sensitive records decreases the error increases (as the number of records used by both All NS and \rr to train the LR model decreases). ObjDP has high error close to that of the Random baseline. This shows that treating all records as sensitive results in a high error and OSDP algorithms alleviate this problem by leveraging on non-sensitive records.

\subsubsection{High Dimensional Histograms}\label{sec:exp-results-high}

\noindent \textbf{Algorithms:}
As noted earlier, under DP the sensitivity of an n-gram  histogram is $64^n$. for that reason, we use \emph{truncation} \cite{KNRS13:analyzing} and choose at most $k$ n-grams from each daily trajectory. This step reduces the sensitivity of releasing the n-gram to $2\cdot k$.  We report the MRE of the Laplace mechanism with truncation of $k=1$ (denoted by LM T1).  We also report the MRE of Laplace mechanism with the optimal choice of $k$ for truncation (denoted by LM T*), 
 this approach does not satisfy DP.  Given the high domain size,  it is intractable to materialize and perturb the complete histogram for LM  when $n$ is large. In our experiments, we only materialize  non-zero counts, so for the MRE computation we analytically computed the contribution of the zero counts to it.

% We used the following formula to estimate the total MRE.
% \begin{align*}
% 	\algoname{MRE} (\mathbf{x}, \mathbf{\tilde{x}})  = \frac{1}{d}\displaystyle\sum_{i: x_i > 0}  |x_i - \tilde{x}_i| / \max(x_i, \delta) +
% 	\frac{1}{d}\displaystyle\sum_{i: x_i = 0}  \frac{1}{\epsilon \cdot \delta}
% \end{align*}
% where $1 / \epsilon$, is the expected $L_1$ error when the true frequency is zero.
% This problem does not arise in case of $\rr$, because it always returns a subset of the non-zero n-grams. 

\begin{figure}[h]
	\centering
	\begin{subfigure}{0.48\columnwidth}
		\includegraphics[width=\columnwidth]{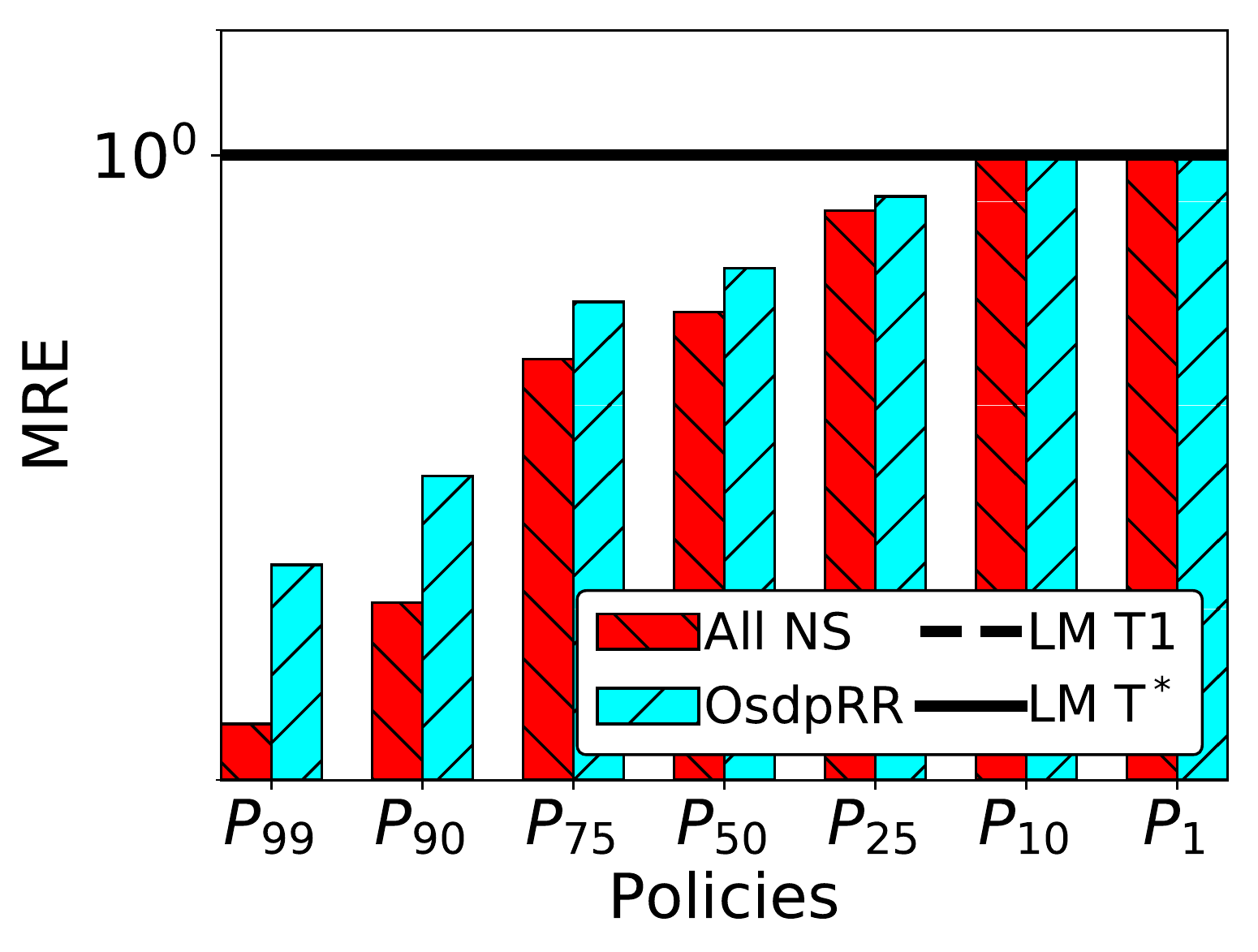}			
		\caption{$\epsilon=1$}	
	\end{subfigure}
	~
	\begin{subfigure}{0.48\columnwidth}
		\includegraphics[width=\columnwidth]{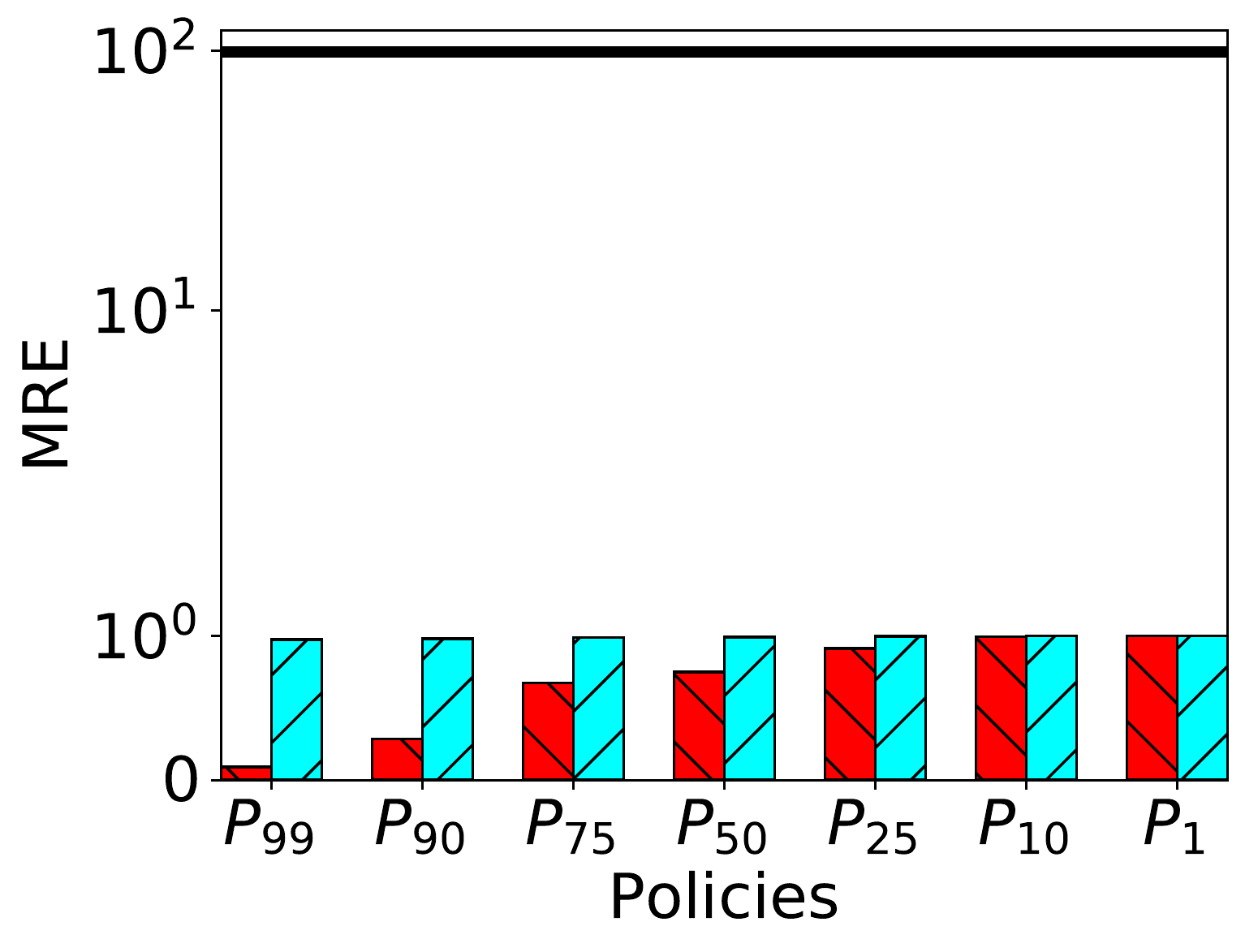}		
		\caption{$\epsilon=0.01$}	
	\end{subfigure}
	\caption{\label{fig:4grams} \hl{Mean Relative Error 4-grams.}}

	\begin{subfigure}{0.48\columnwidth}
		\includegraphics[width=\columnwidth]{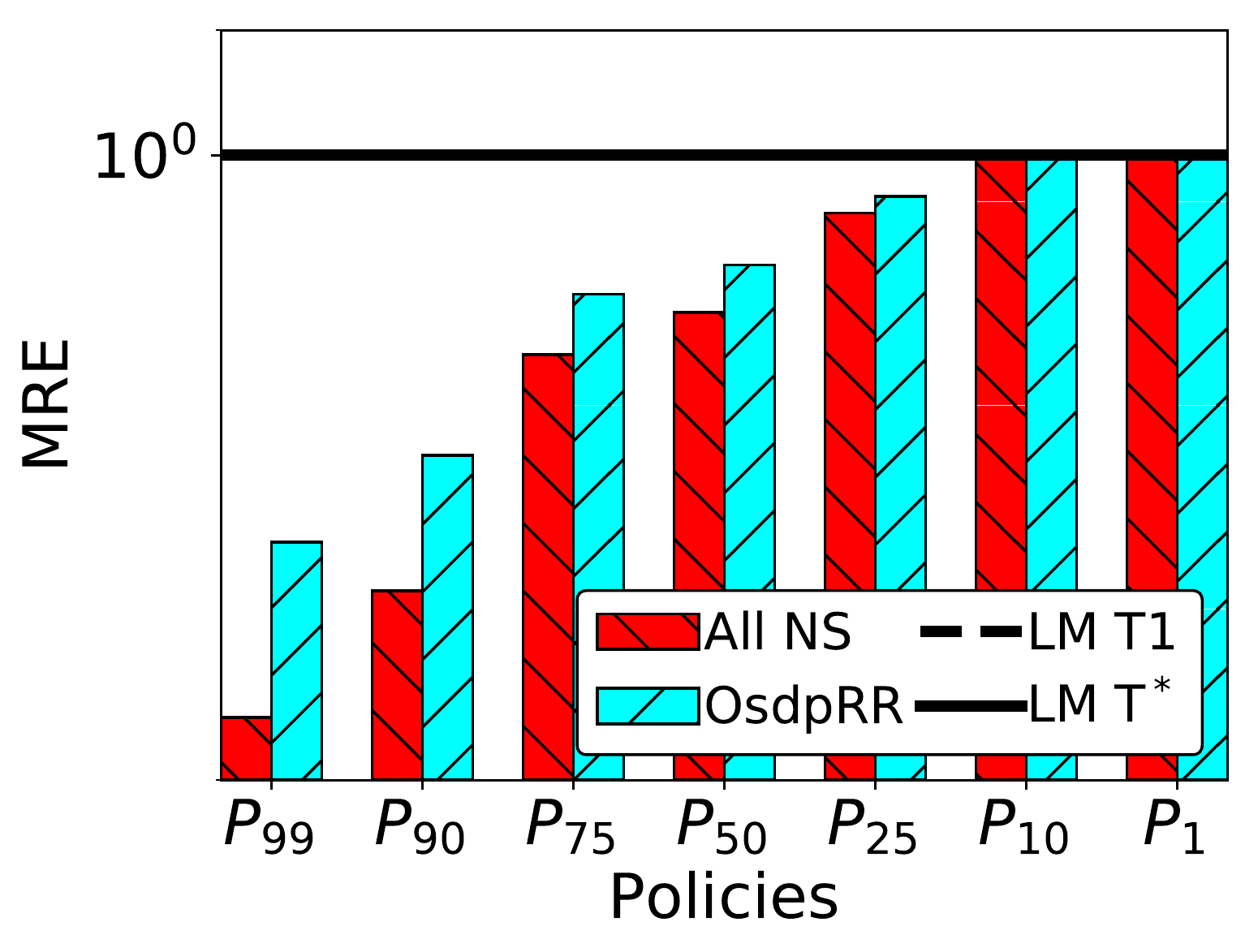}			
		\caption{$\epsilon=1$}	
	\end{subfigure}
	~
	\begin{subfigure}{0.48\columnwidth}
		\includegraphics[width=\columnwidth]{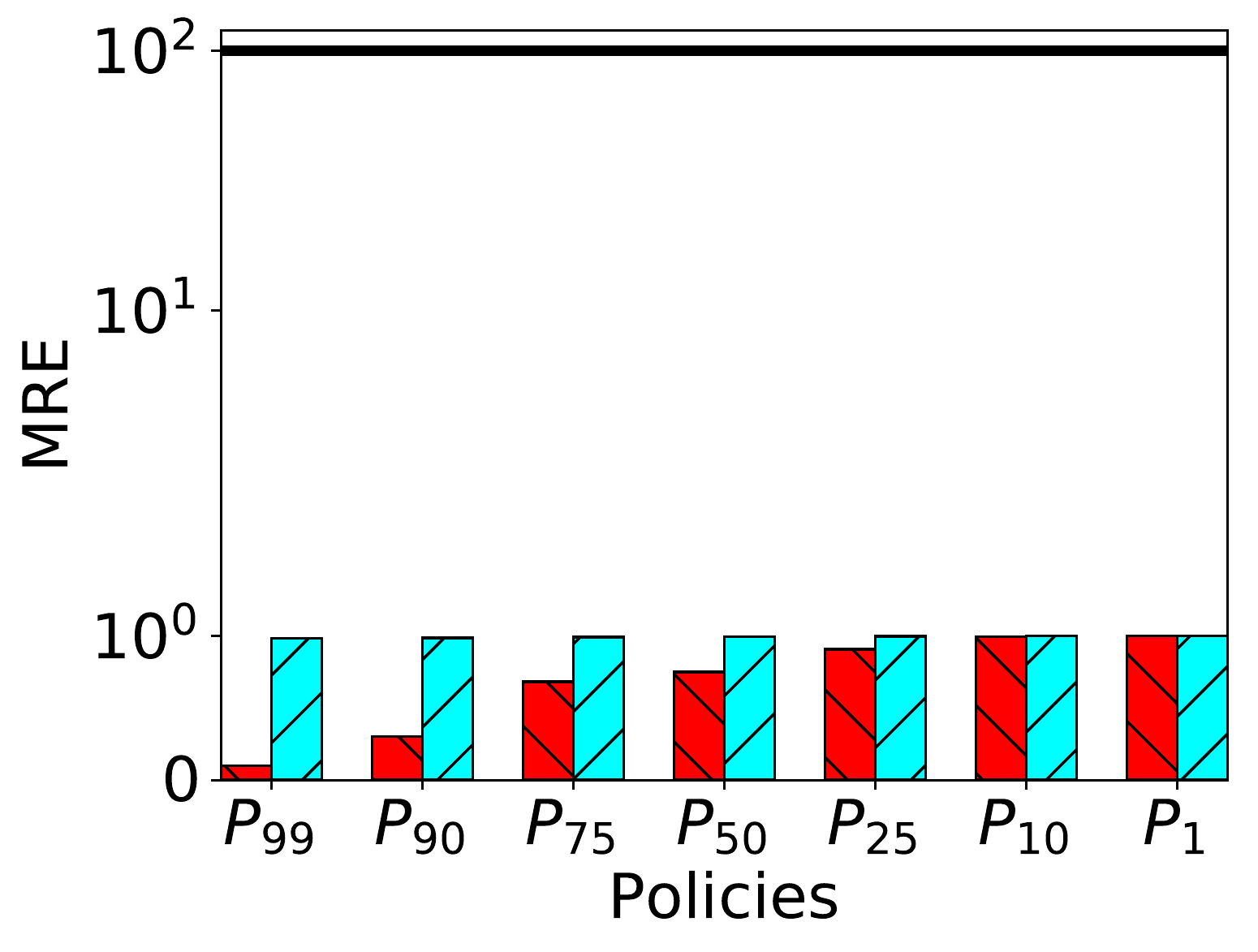}		
		\caption{$\epsilon=0.01$}	
	\end{subfigure}
		\caption{\label{fig:5grams}\hl{Mean Relative Error 5-grams.}}	
\end{figure}

\noindent \textbf{Results:} In Figures \ref{fig:4grams}, \ref{fig:5grams}, we plot the MRE error of the 4-grams and 5-grams for all policies considered.  
We report All NS and \rr error using bars, and the errors of LM T1 and LM T* using straight lines -- since the DP algorithms' error is independent of the policy used. We observed that the optimal choice of truncation parameter $k$ for the 4- and 5-gram counting task was $k=1$. We see that All NS has error smaller than \algoname{OsdpRR}, but the difference is not too large when there are 10-50\% sensitive records in the dataset. 
When $\epsilon = 1$, LM algorithms has error comparable to $\rr$, when $50\%$ of the records are non-sensitive. 
For $\epsilon = 0.01$, the LM algorithms have an order of magnitude poorer performance than the \algoname{OsdpRR} algorithm. 
Thus, we see again that leveraging on the presence of non-sensitive records results in significant utility gains. 
We did not consider more sophisticated DP techniques for releasing high-dimensional histograms. We believe their error will also be an order of magnitude larger than the OSDP algorithm, since they will also have to contend with the high sensitivity and would need to use truncation.

\subsubsection{Low Dimensional Histograms}\label{sec:exp-results-low}
\noindent \textbf{Algorithms:} For the task of releasing low dimensional histograms, we compare a total of $4$ OSDP and $2$ DP algorithms. The DP algorithms are: \algoname{Laplace} mechanism and \algoname{DAWA}, the state of the art DP algorithm for this task according to a recent benchmark study \cite{DPBench}. The OSDP algorithms are: \algoname{OsdpRR}, \algoname{OsdpLaplace}, \algoname{OsdpLaplaceL1} and \algoname{DAWAz}. For \algoname{DAWAz} we used $\rho = 0.1$ fraction of the privacy budget to run \algoname{OsdpRR} to estimate the set of zero counts in the histogram.
%  For clarity, in our figures we show only \algoname{DAWA}, \algoname{OsdpLaplaceL1} (Algorithm~\ref{algo:osp_laplace_l1}), and \algoname{DAWAz} (Algorithm~\ref{algo:dawaz}) -- since these algorithms have the best performance across all settings.

\paragraph{Results: TIPPERS}
\label{sec:exp-results-low-real}

\begin{figure}[htp]
	\centering
	\begin{subfigure}[htp]{0.25\textwidth}
		\centering
		\includegraphics[scale=0.18]{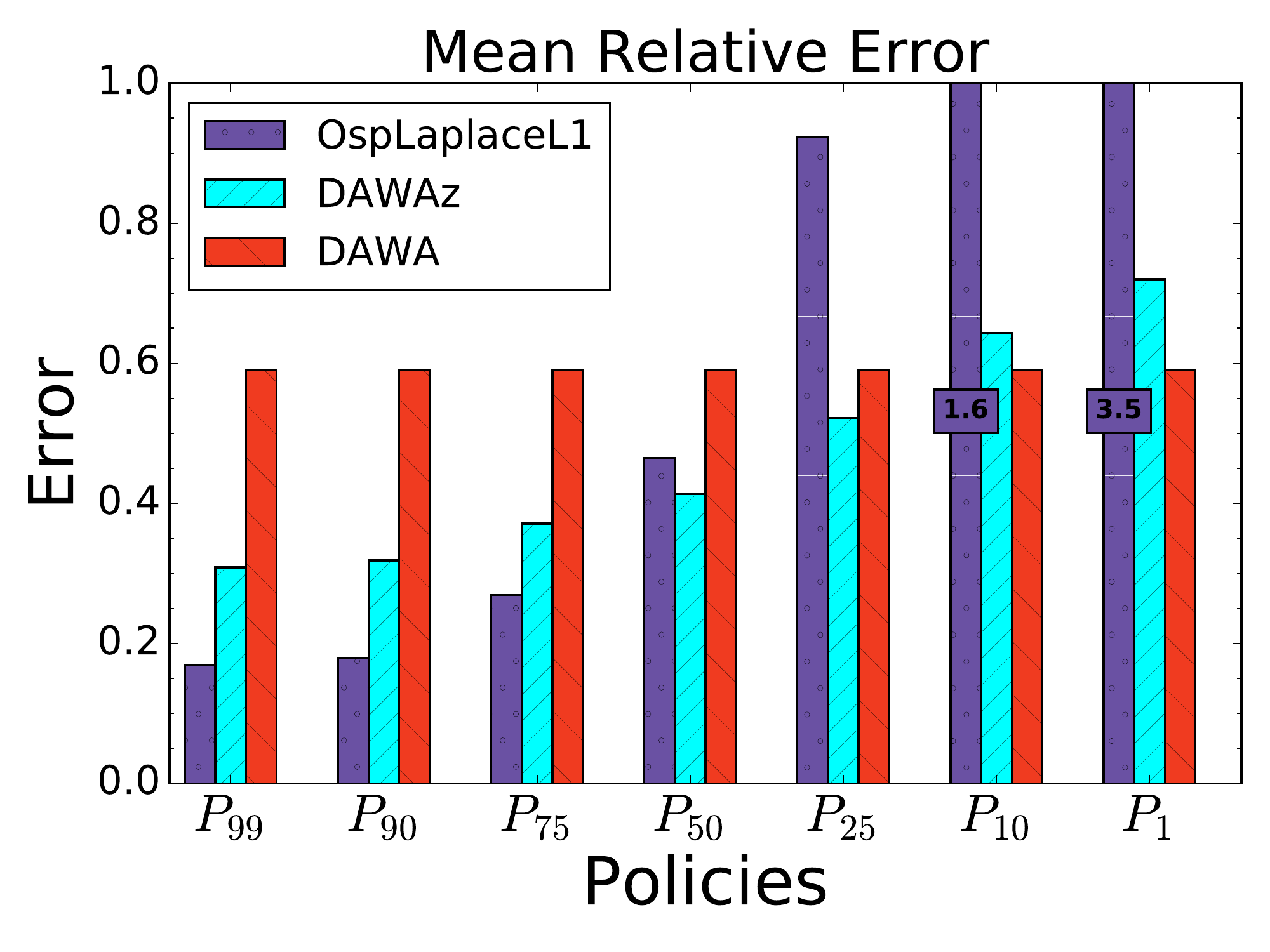}				
				\caption{\label{fig:tippersMRE-1}\hl{ $\epsilon = 1.0$}}
	\end{subfigure}%
	\begin{subfigure}[htp]{0.25\textwidth}
		\centering
		\includegraphics[scale=0.18]{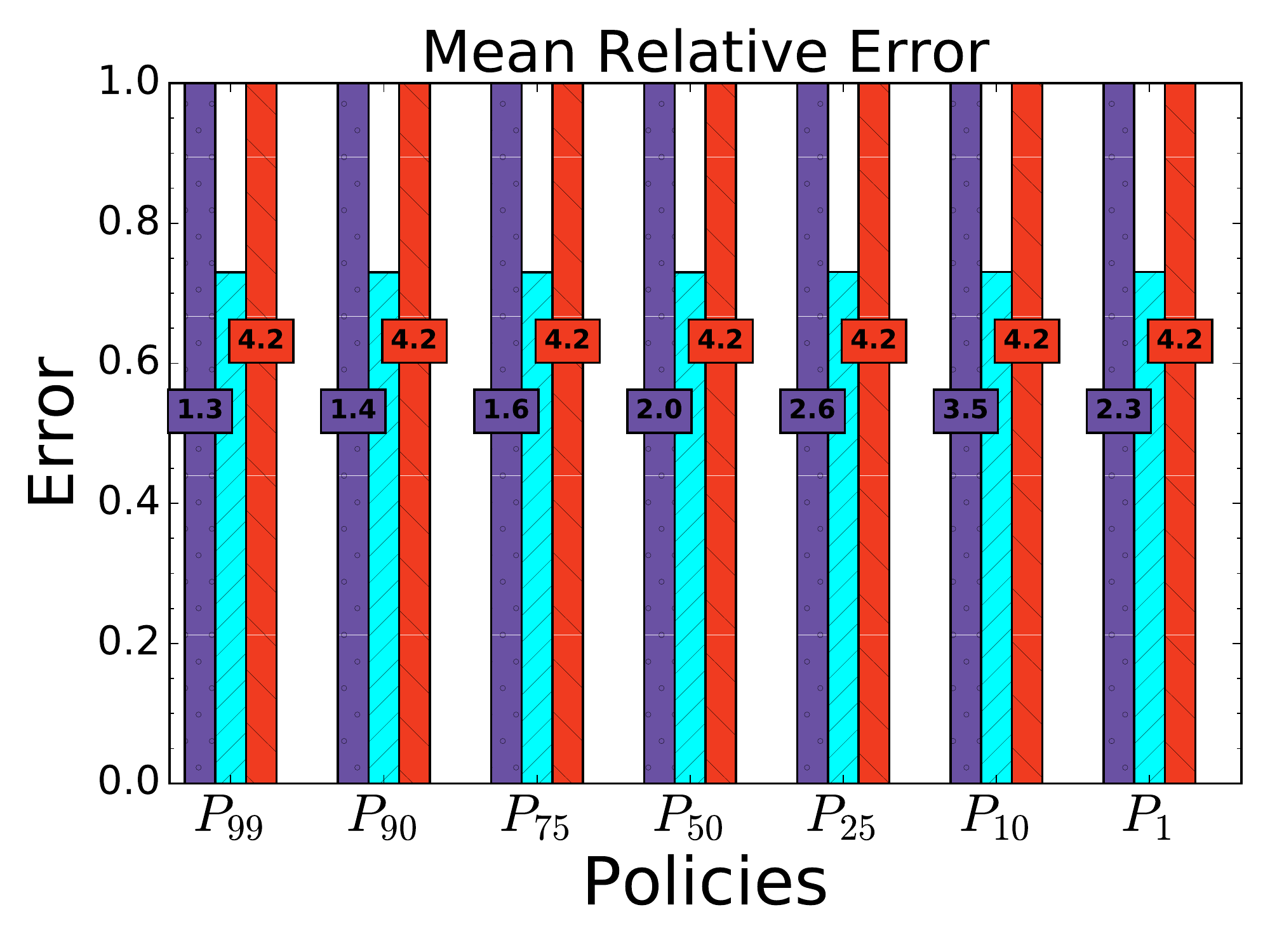}
				\caption{\label{fig:tippersMRE-01} \hl{$\epsilon = 0.01$ }}
	\end{subfigure}%
		\caption{\label{fig:tippersMRE}\hl{ Mean Relative Error on the TIPPERS histogram.}}
\end{figure}

\begin{figure}[htp]
	\centering
	\begin{subfigure}[htp]{0.25\textwidth}
		\centering
		\includegraphics[scale=0.18]{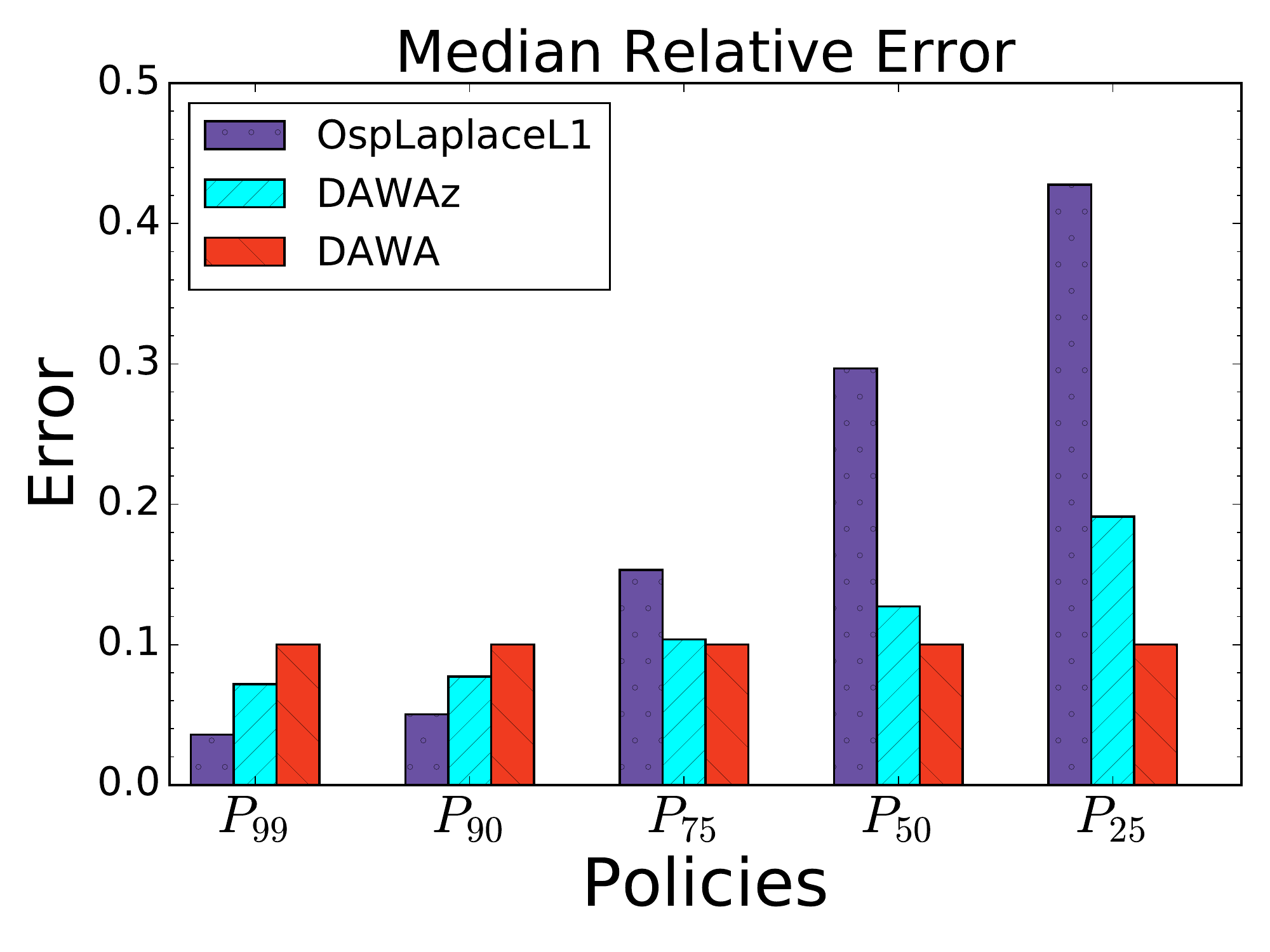}				
				\caption{\label{fig:tipp-rpb50}\hl{ $\algoname{Rel}_{50}$  }}
	\end{subfigure}%
	\begin{subfigure}[htp]{0.25\textwidth}
		\centering
		\includegraphics[scale=0.18]{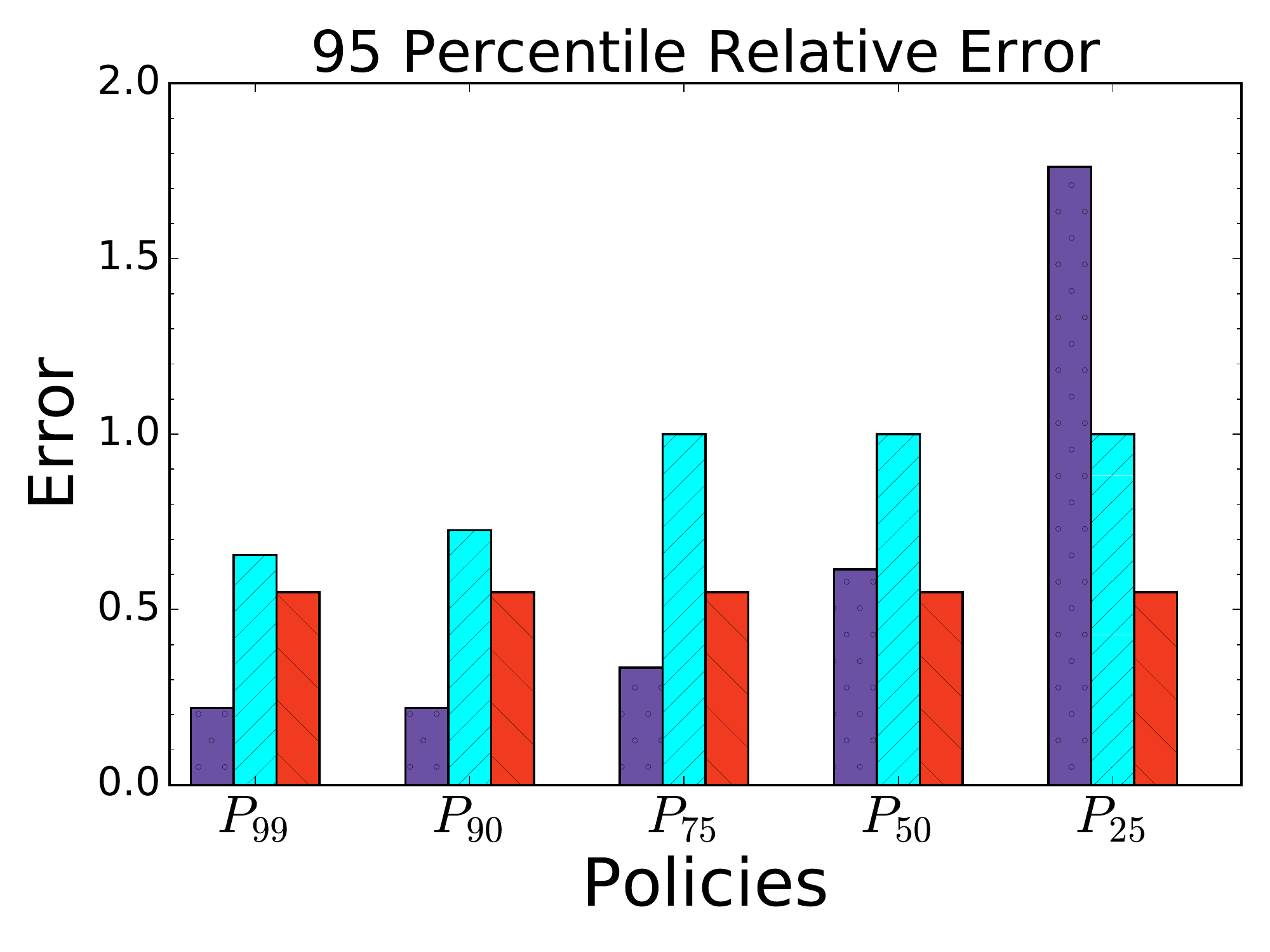}
				\caption{\label{fig:tipp-rpb95} \hl{$\algoname{Rel}_{95}$} }
	\end{subfigure}
		\caption{\label{fig:tipp-rpb} \hl{Relative per Bin Error for the TIPPERS histogram at $\alpha$ percentile, for $\epsilon = 1$. }}
		\label{fig:sparsity}
\end{figure}

\hl{Figure} \ref{fig:tippersMRE}  \hl{reports the mean relative error for two different values of the privacy parameter: $\epsilon \in \{1, ~0.01\} $ across all policies. We see that for $\epsilon = 1$ the OSDP algorithms outperform the DP algorithm for policies with more than $25\%$ non-sensitive records, but for smaller fractions the DP algorithm is clearly more competitive. The insight from this result is that for datasets with mainly sensitive records, using a DP algorithm is preferable. In} \cref{fig:tippersMRE-01} \hl{we report our findings for $\epsilon = 0.01$ where again the OSDP algorithm \textnormal{\textsc{OsdpLaplaceL1}} is outperformed from the DP algorithm at the same $25\%$ mark, but this time  \textnormal{\textsc{DAWAz}} is highly competitive for all policies. This is because the OSDP subroutine of \textsc{DAWAz} over-reports zero bin counts, which is better than adding high scale noise to the true bin counts. For brevity in the rest of our results we report only policies with more than $25\%$ fraction of non-sensitive records and for $\epsilon = 1$.}

In figures \ref{fig:tipp-rpb50}  and \ref{fig:tipp-rpb95} we report the median and 95\%  per bin relative error.  Overall we see that across error metrics, OSDP algorithms offer significant improvements over DAWA for all fractions of non-sensitive values. Even when only half the data records are non-sensitive we see that OSDP algorithms achieve better error rates in MRE and $\algoname{REL}_{95}$, which means that the OSDP algorithms have an advantage in reporting counts of bins with otherwise high error. 
Note that \algoname{DAWAz} is not as competitive as we would expect and is outperformed from \algoname{OspLaplaceL1}. The main reason is that the policy in the $\tippers$ dataset is value based. So some histogram bins contain only sensitive records, while others only consist of non-sensitive records. So \algoname{OsdpLaplaceL1} is able to add normal Laplace noise to the sensitive buckets (ensuring DP) and one-sided noise to non-sensitive buckets (ensuring OSDP). The overall algorithm ensures OSDP by sequential composition.

\paragraph{Results: DPBench-1D}
\label{sec:exp-results-low-benchmark}

\begin{figure}[h]
	\centering
	\begin{subfigure}[htp]{0.25\textwidth}
		\centering
		\includegraphics[scale=0.19]{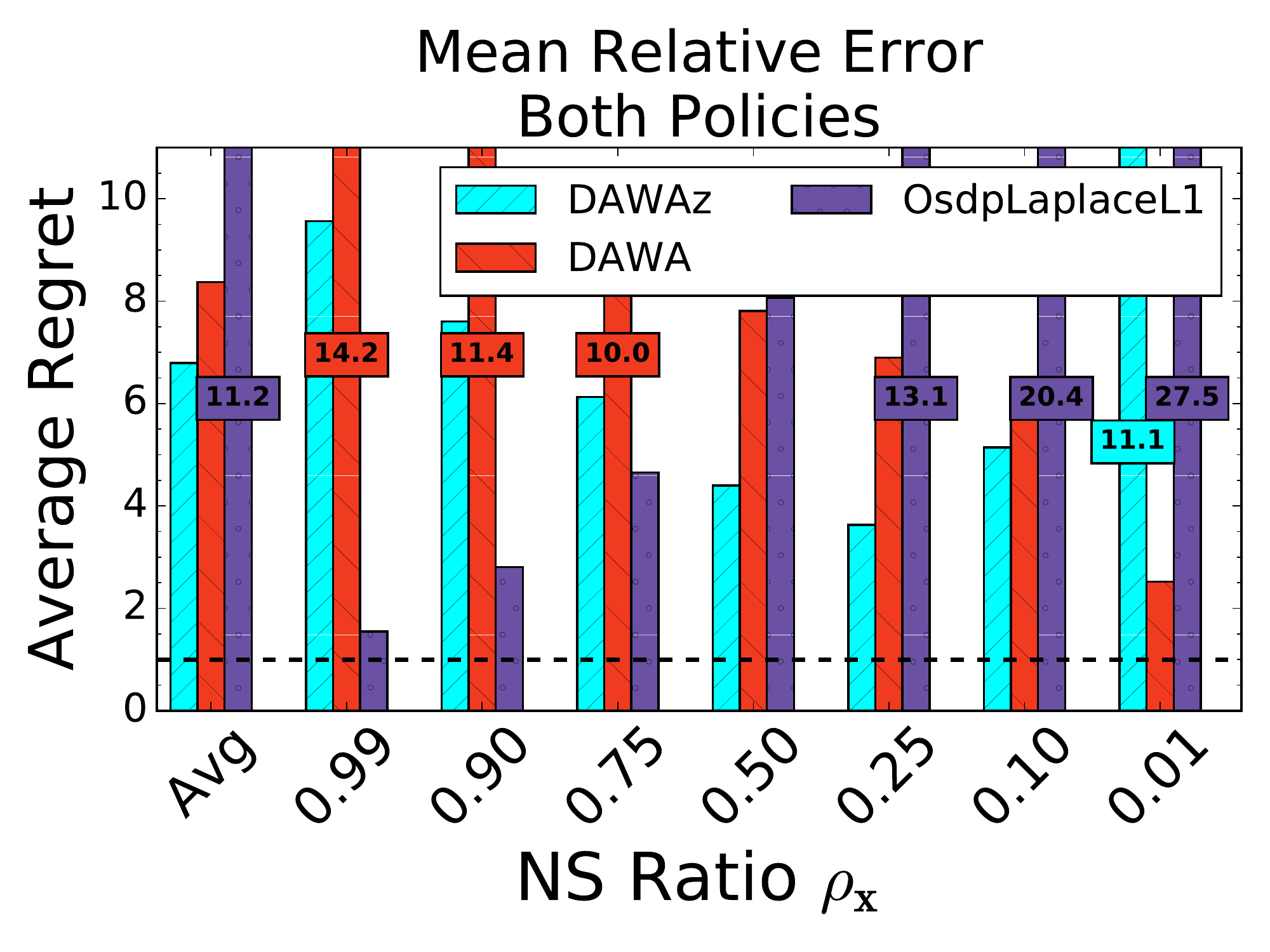}
				\caption{\label{fig:MRE-both-eps1} \hl{$\epsilon = 1.0$}}
	\end{subfigure}%
	\begin{subfigure}[htp]{0.25\textwidth}
		\centering
		\includegraphics[scale=0.19]{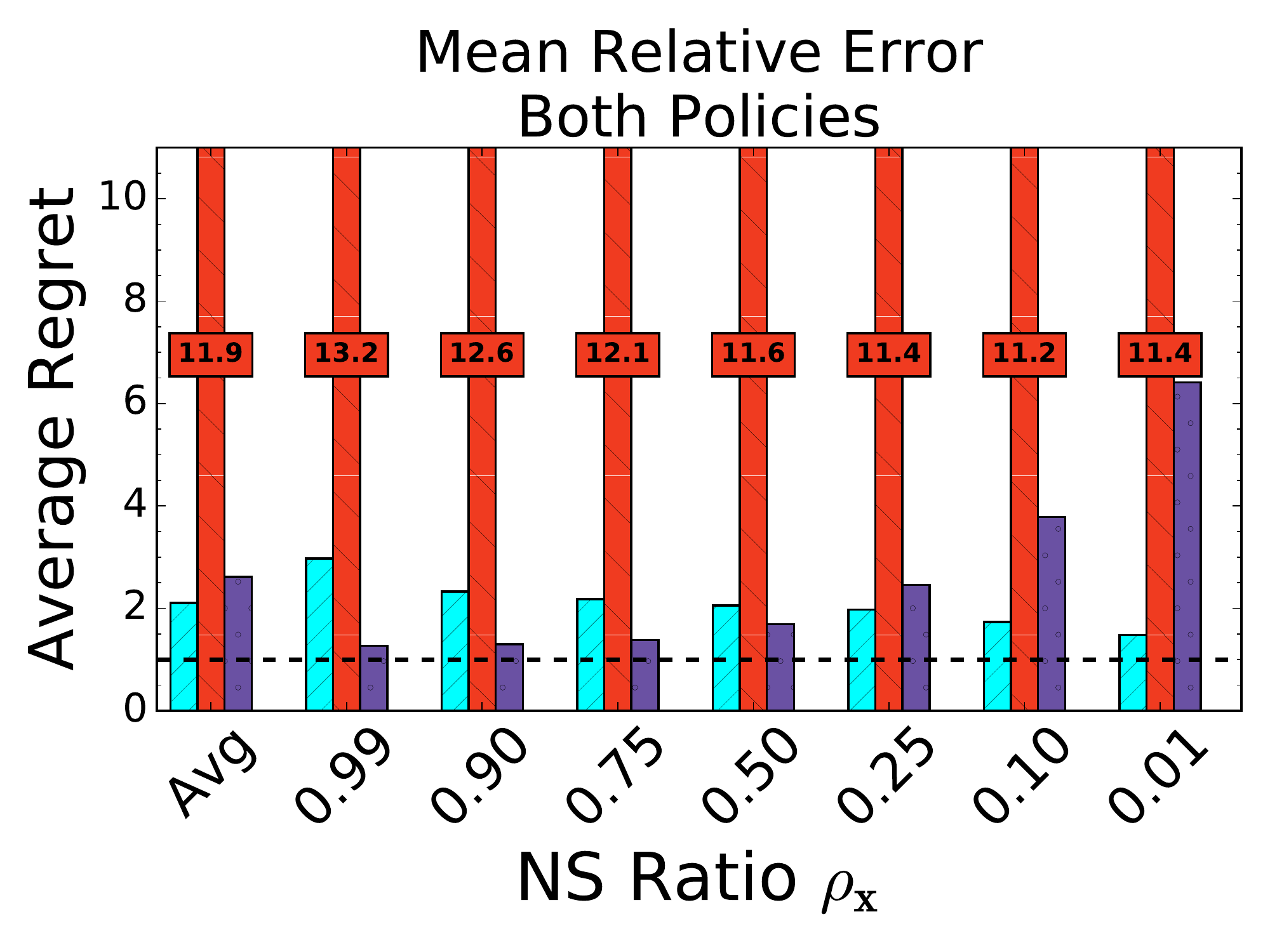}
				\caption{\label{fig:MRE-both-eps01} \hl{$\epsilon = 0.01$} }
	\end{subfigure}
		\caption{\label{fig:MRE-both} \hl{Regret for MRE, across different non-sensitive ratios and different policies.}}
\end{figure}

\begin{figure}[h]
	\centering
	\begin{subfigure}[htp]{0.25\textwidth}
		\centering
		\includegraphics[scale=0.19]{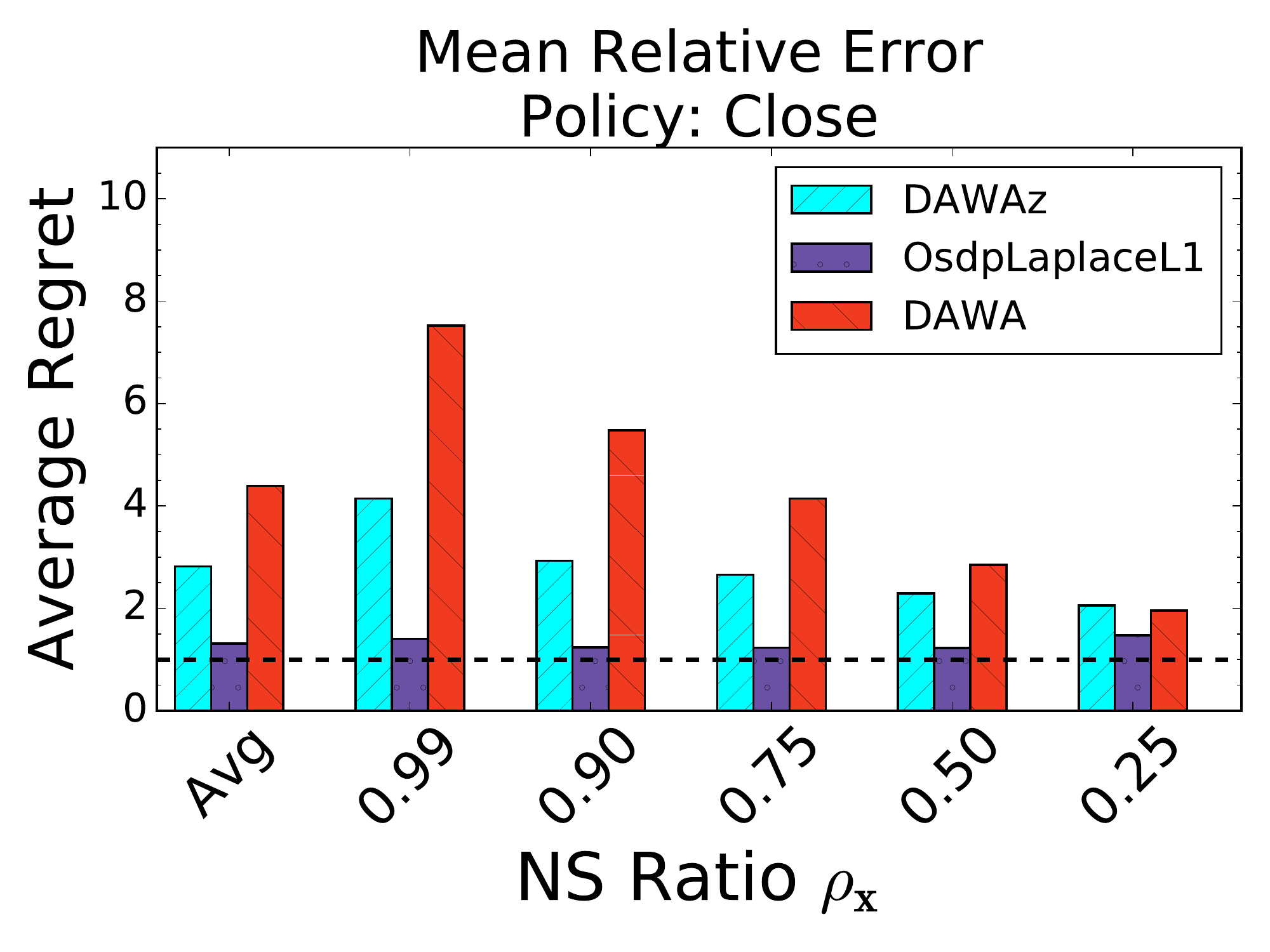}
				\caption{\label{fig:MRE-RP-close} \hl{Close policy.}}
	\end{subfigure}%
	\begin{subfigure}[htp]{0.25\textwidth}
		\centering
		\includegraphics[scale=0.19]{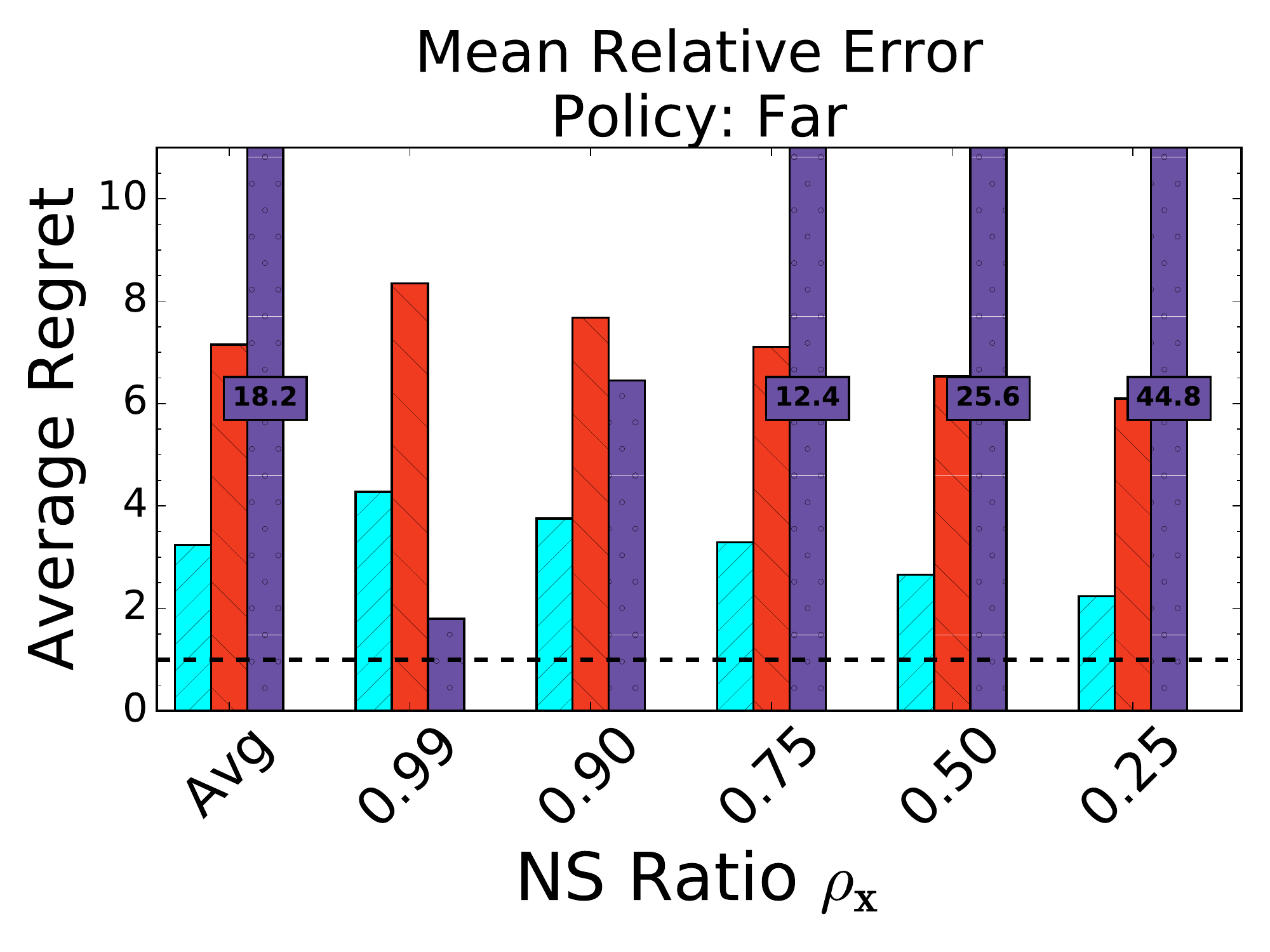}
				\caption{\label{fig:MRE-RP-far} \hl{Far policy.}}
	\end{subfigure}
		\caption{\label{fig:MRE-RP} \hl{Regret for MRE, across different non-sensitive ratios and for fixed policy, for $\epsilon = 1$.}}
		\label{fig:workload}
\end{figure}

\begin{figure}[h]
	\centering
	\begin{subfigure}[htp]{0.25\textwidth}
		\centering
		\includegraphics[scale=0.19]{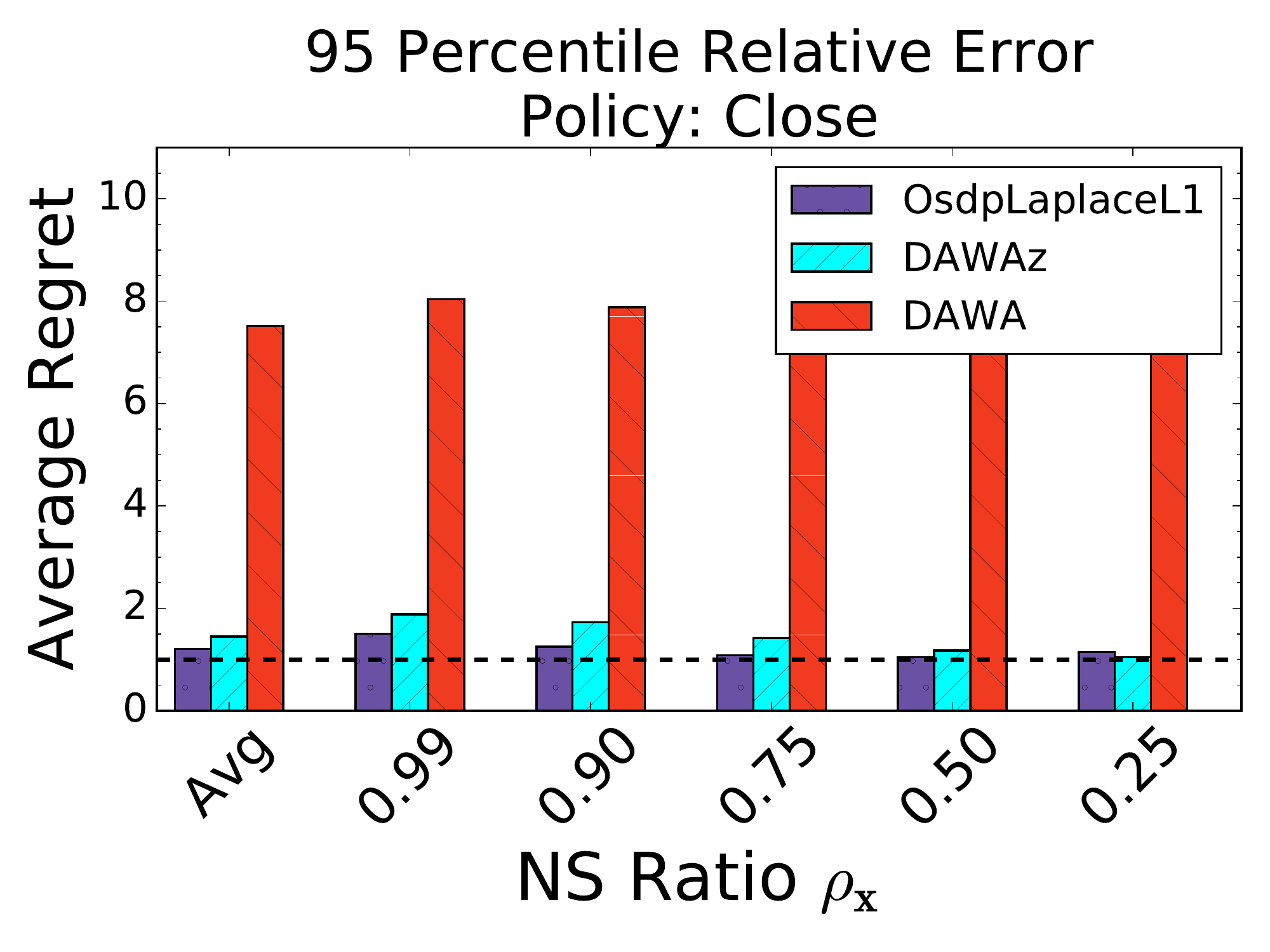}
				\caption{\label{fig:RpB95-RP-close}\hl{ Close policy.}}
	\end{subfigure}%
	\begin{subfigure}[htp]{0.25\textwidth}
		\centering
		\includegraphics[scale=0.19]{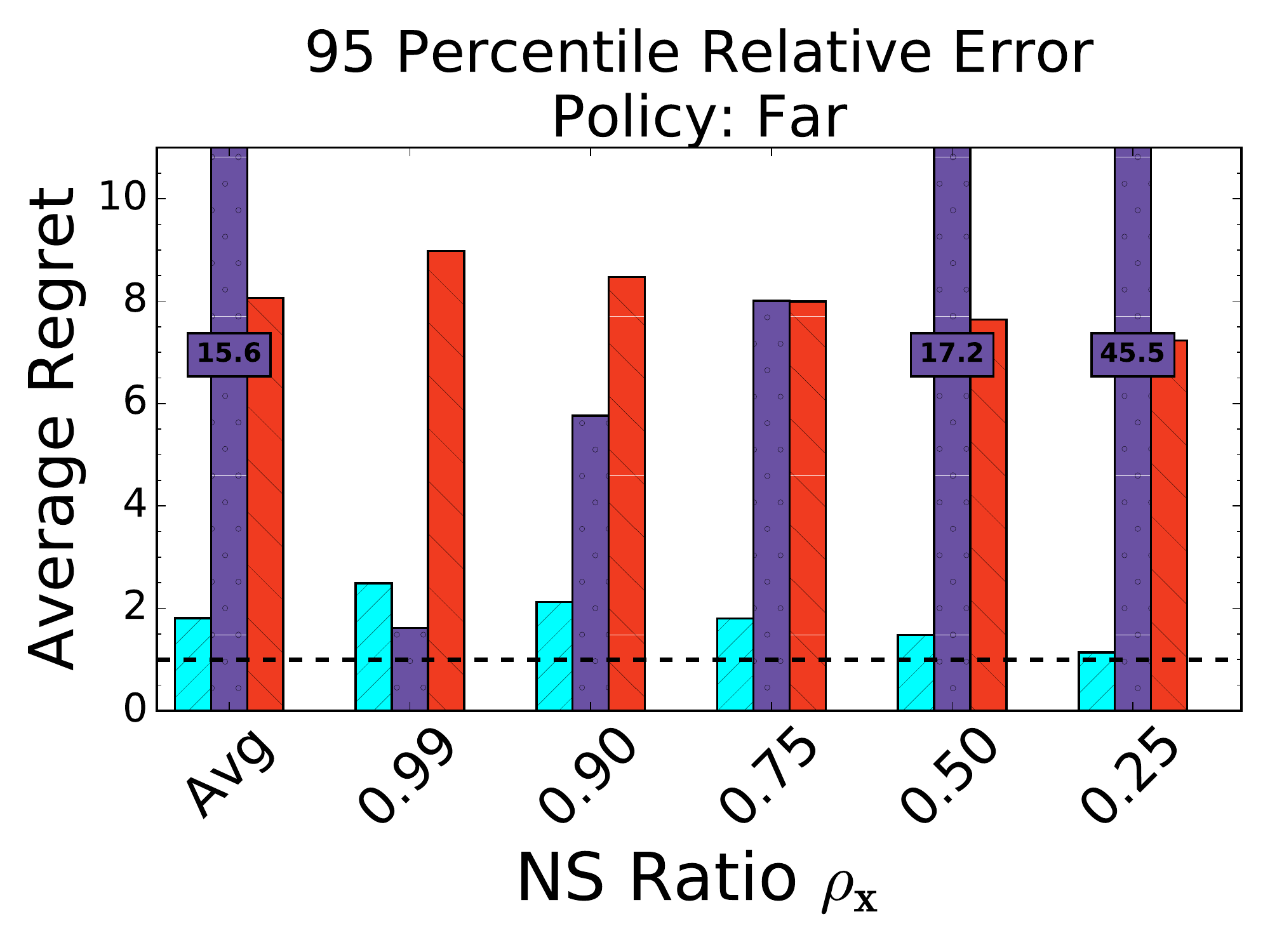}
				\caption{\label{fig:RpB95-RP-far} \hl{Far policy.}}
	\end{subfigure}
		\caption{\label{fig:RpB95-RP}\hl{  Regret for $\algoname{Rel}_{95}$, across different non-sensitive ratios and for fixed policy, for $\epsilon = 1$.}}
		\label{fig:workload}
\end{figure}

\begin{figure}[htp]
	\centering
	\begin{subfigure}[htp]{0.25\textwidth}
		\centering
		\includegraphics[scale=0.19]{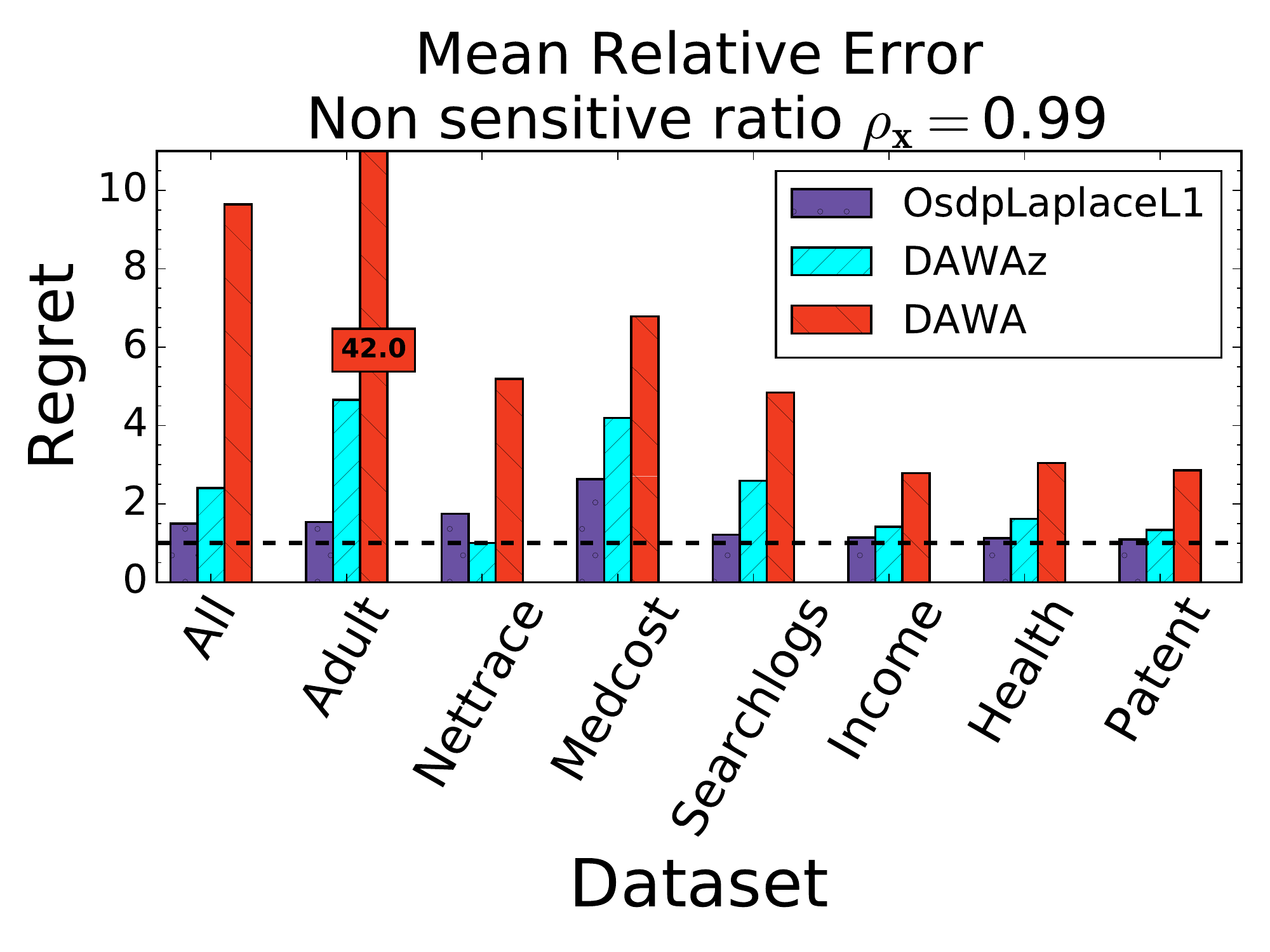}
				\caption{\label{fig:sparse99}$\rho_\mathbf{x} = 0.99$}
	\end{subfigure}%
	\begin{subfigure}[htp]{0.25\textwidth}
		\centering
		\includegraphics[scale=0.19]{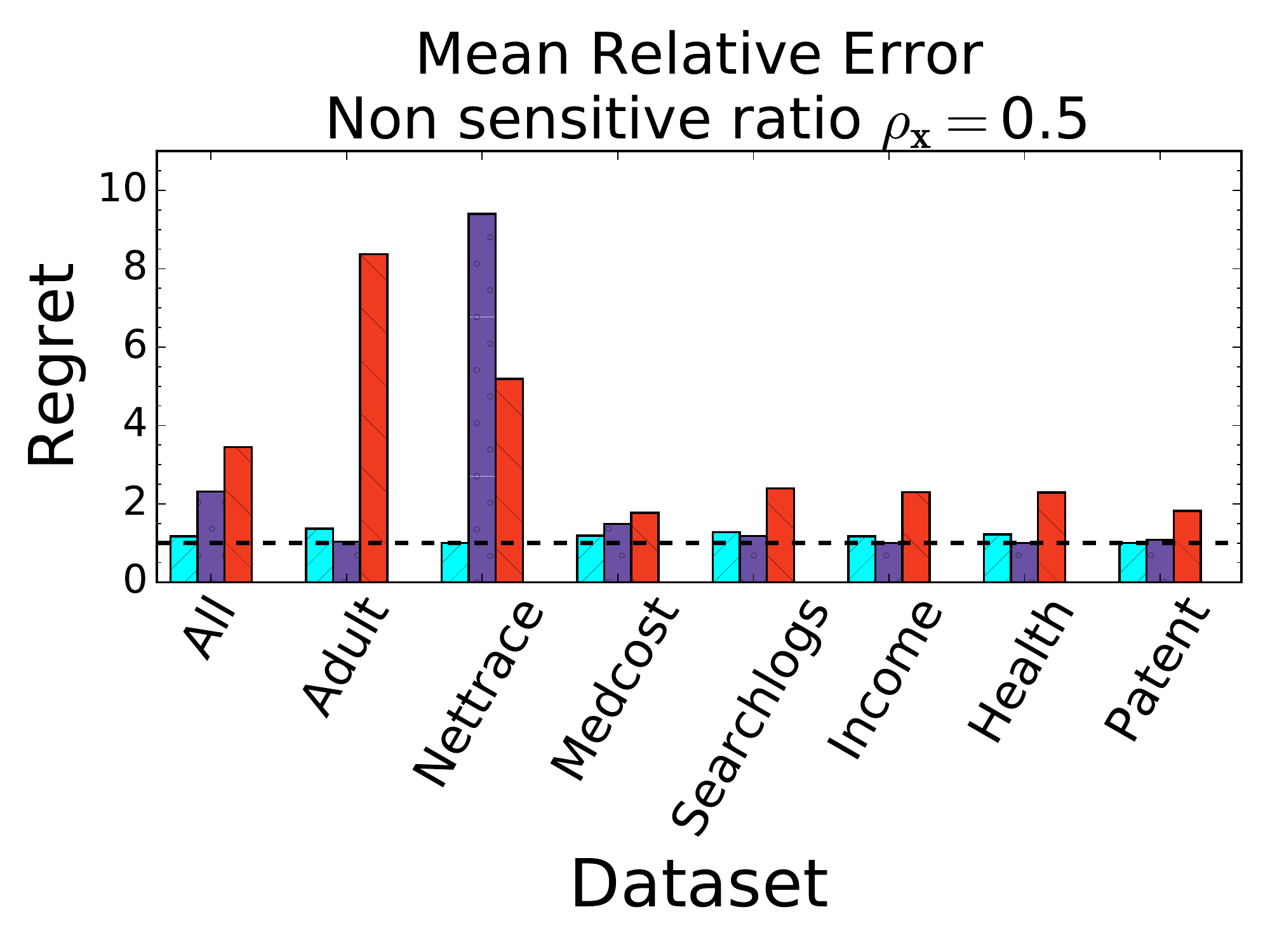}
				\caption{$\rho_\mathbf{x} = 0.50$}
	\end{subfigure}
		\caption{\label{fig:sparse50} Regret for \algoname{MRE} for Close policy and fixed non sensitive ratios.}
		\label{fig:sparsity}
\end{figure}

\begin{figure}[h]
	\centering
	\begin{subfigure}[htp]{0.5\textwidth}
		\centering
		\includegraphics[scale=0.3]{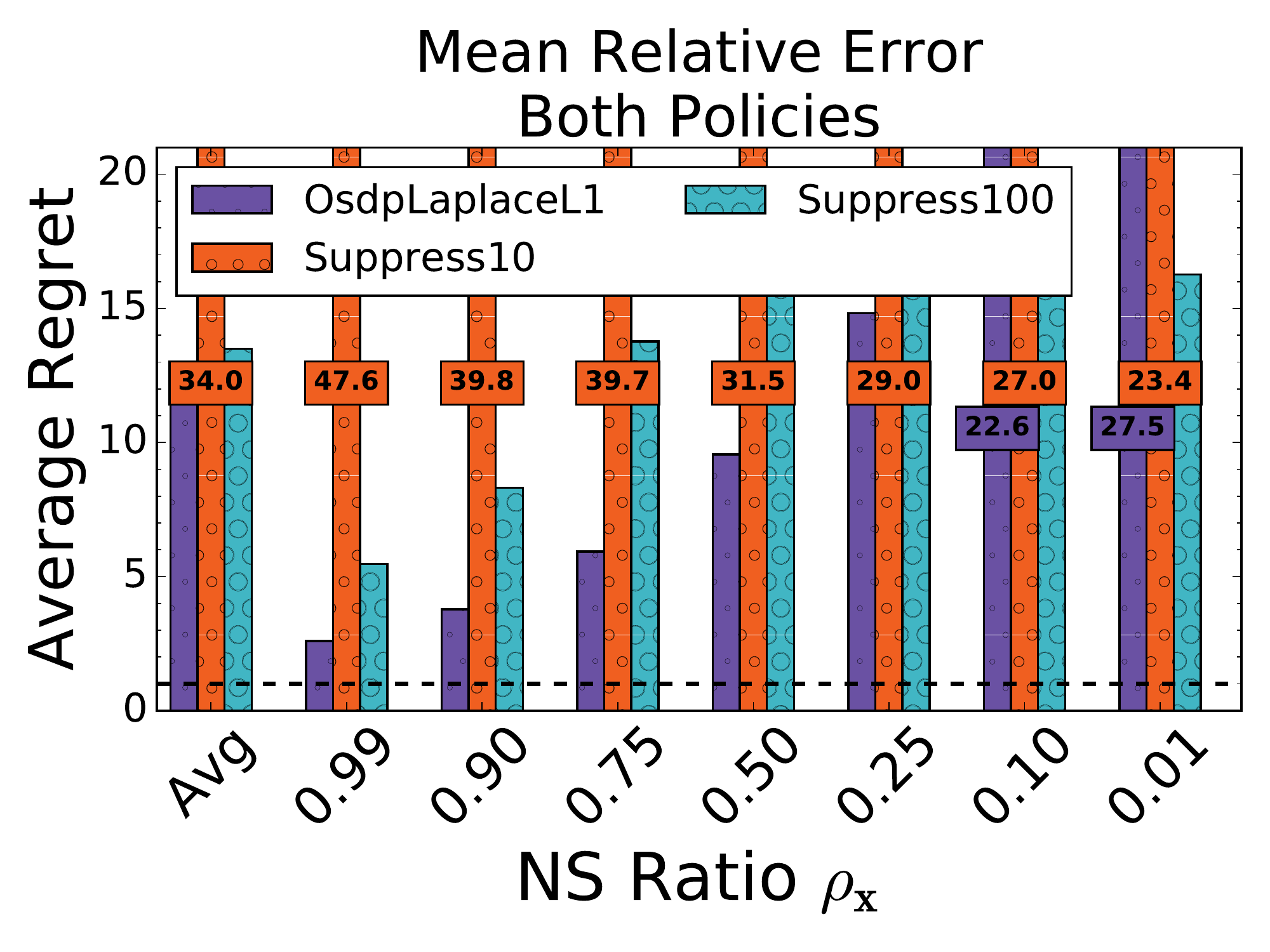}
	\end{subfigure}%
	\caption{\label{fig:MRE-PDP-eps1} \hl{Comparison with PDP algorithms for $\tau =100$ and $\epsilon = 1.0$ measuring the regret of MRE.}}
\end{figure}

In this set of experiments, our goal is to analyze the performance of DP and OSDP algorithms for varying levels of the \textit{closeness} of the non-sensitive histogram and  the \textit{non-sensitive ratio}. Closeness shows how similar $\mathbf{x}$ and $\mathbf{x}_{ns}$ are, and the non-sensitive ratio $\rho_{\mathbf{x}} = \|\mathbf{x}_{ns}\|_1 / \|\mathbf{x}\|_1$ denotes the portion of tuples in $\mathbf{x}_{ns}$. Note that since we report aggregate results of similar inputs (e.g., group inputs by policy),  the aggregated value of the loss function might skew the results due to different error scales between inputs. For this reason, we report  the \textit{regret} of each algorithm w.r.t.  an error measure $\algoname{Err}$ rather than the absolute value of the error. More specifically, for histogram $\mathbf{x}$, budget $\epsilon$, error measure $\algoname{Err}$, and algorithm $\algoname{A} \in \mathcal{A}$ the regret of $\algoname{A}$ is:
\begin{align*}
\text{regret}_\algoname{Err}(A, \mathbf{x}, \epsilon) = \frac{\algoname{Err}(\algoname{A}(\mathbf{x}, \epsilon), \mathbf{x})}{OPT_\mathcal{A} (\algoname{Err}, \mathbf{x, \epsilon})}
\end{align*}

\vskip 3pt \noindent {\bf \hl{Main Results: }}\hl{In figure} \ref{fig:MRE-both}\hl{ we present our results comparing performance across different values of the privacy parameter $\epsilon$. Each figure shows results for both policies, across different values of the non sensitive ratio $\rho_\mathbf{x}$. Note that y-axis reports the average regret for MRE of each algorithm w.r.t. a set of 4 OSDP and 2 DP algorithms, which are averaged over the $7$ histograms from DPBench-1D. We show results only for the most competitive algorithms. The dotted line corresponds to regret equal to $1$ and  since our results are aggregates, single algorithms will rarely have average regret equal to $1$.  Again we observe that low values of $\epsilon$ (i.e., bigger noise scale) favor the hybrid algorithm \textnormal{\textsc{DAWAz}}. The main takeaway is that for ratios $\rho_{\mathbf{x}} \leq 0.25$ the DP algorithm outperforms the ``pure'' OSDP algorithm \textnormal{\textsc{OsdpLaplaceL1}}. Also note that for both $\epsilon$ values the qualitative analysis of the results does not differ by much.  For the above reasons, in our next results we (a) limit $\rho_\mathbf{x} \geq 0.25$ and (b) focus on a single $\epsilon$ value and present separate results for Close and Far policies.}

Next, in figures \ref{fig:MRE-RP} and \ref{fig:RpB95-RP} we present our main results for the MRE and $\algoname{Rel}_{95}$ error measures respectively. The privacy parameter $\epsilon$ is fixed to $1$ and each figure shows the results for a different policy generator.  The y-axis reports the average regret of each algorithm. For Close policies and across all non sensitive ratios and error metrics, OSDP algorithms always outperform DP algorithms, with the highest improvements in high error bins (i.e., $\algoname{Rel}_{95}$). Even for Far policies DAWAz outperforms the best performing DP algorithm DAWA for all non-sensitive ratios. This confirms our insight behind its design, as DAWAz combines the best of both worlds: it  successfully leverages the non-sensitive records to identify zero cells using OSDP algorithms to enhance a state-of-the-art DP algorithm.

%We now focus on the effect of sparsity in the performance of OSDP algorithms. 

In figure \ref{fig:sparsity} we present results for the Close policy and fixed non-sensitive ratios across datasets. Note that these results are not aggregate and in the y-axis we report the regret for a single histogram with a fixed $\rho_\mathbf{x}$ and a fixed policy.  Each group in the x-axis corresponds to a different dataset and the datasets are sorted with descending sparsity (i.e., Adult is the sparsest dataset). Again, the dotted line corresponds to regret equal to $1$. In some cases an algorithm other than the ones shown in the figure achieves a regret of 1 (e.g., \algoname{OsdpRR} has regret $1$ in Adult), but we do not show them since they have very high regrets for other datasets (and to be consistent with our previous figures). Lastly note that Nettrace is a sorted histogram, which highly favors DAWA explaining its sudden drop in regret. 

In the case of the Close policy, OSDP algorithms offer significant performance improvements over using a traditional DP algorithm, across all datasets.  We can see that for the sparser Adult histogram we achieve an improvement on regret of $25\times$, this is because the OSDP algorithms can correctly identify the bins with 0 counts, and incur 0 error on them. As the sparsity of the datasets decreases we see the gap between DAWA (the best performing DP algorithm) and the OSDP algorithms to become smaller. Lastly we observe that as $\rho_\mathbf{x}$ becomes smaller the advantages of \algoname{DAWAz} become more apparent - since \algoname{DAWAz} uses both the sensitive and the non-sensitive part of the histogram.

\noindent {\bf  \hl{Comparison with PDP:}}
\hl{Finally, in} \cref{fig:MRE-PDP-eps1}\hl{, we compare $\algoname{OsdpLaplaceL1}$ with the PDP algorithm $\algoname{Suppress}$ (described in Section} \ref{sec:osp_connection_with_other}) \hl{with thresholds $\tau = \{10\epsilon, 50\epsilon,\allowbreak 100\epsilon\}$ on the benchmark datasets. In figure} \ref{sec:osp_connection_with_other}\hl{ we show the two extreme cases for $\tau =\{10\epsilon, 100\epsilon\}$ and refer to them $\algoname{Suppress10}$ and $\algoname{Suppress100}$ respectively. We observe that  $\algoname{Suppress}$ starts becoming competitive for values of $\tau \geq 100\epsilon$, but this utility comes at a cost of $100$ times worse protection from exclusion attacks (see Theorems} \ref{thm:osp-freedom} and \ref{thm:pdp}).

\section{Extensions and Future Work}
\label{sec:extentions_future}

One-sided differential privacy (OSDP) presents a new direction for privacy research in that it is the first formal privacy definition that exploits the sensitive/non-sensitive dichotomy of data to allow true data to be released with formal privacy guarantees. 
This paper focused on motivating and defining OSDP and highlighting its advantages both in the context of data sharing tasks where differential privacy (DP) either does not apply or offers very low utility, as well as, in
situations where DP performs well, but OSDP appropriately combined with DP offers significant improvement. Below we highlight some of the key extensions that we believe offer interesting  directions of future work. 

\noindent
%{\bf Levels of privacy protection.}
%One interesting direction is to extend the policies to classify data into not just sensitive and non-sensitive, but into multiple levels of sensitivity. For instance, under our current definition records that have $P(r) = 0$ require privacy at the level of $\epsilon$, and $P(r) = 1$ do not require privacy ($\epsilon = \infty$). 
%One could consider a scenario with non-sensitive records have a finite but larger $\epsilon' > \epsilon$. 
%A more general definition would be to define $\pol$ with a larger range; i.e., $P(r)$ could take values in $\{0, 1, \ldots, k\}$, and the definition requires records with $P(r) = i$ have privacy at the level of $\epsilon_i$, where $\epsilon_0 \leq \epsilon_1 \leq \ldots \epsilon_k = \infty$. 
%Extending our algorithms to capture these more general one-sided policies is an interesting avenue for future work. 

%\subsection{Multiple Policies}
%Another extension is one where multiple policies apply to the data. We already encountered this problem when we discussed the composition results in Section~\ref{sec:theory}.If two mechanisms assume different policies $(P_1, \epsilon_1)$ and $(P_2, \epsilon_2)$ on the same dataset, the joint release of both these mechanisms induces a third policy $(\min (P_1, P_2), \epsilon_1 + \epsilon_2)$. \emph{Is there anything we want to say here? Or is it redundant given the previous section?}

\noindent
{\bf Policies that go beyond records.}
In this paper, we considered policy functions that define whether or not a specific record is non-sensitive. One could extend this to the case where a policy specifies a view over the database that is non-sensitive (like in the work on authorization views \cite{rizvi:sigmod2004}), or a policy that determines that a view is sensitive (negative authorization \cite{bertino:tois1999}). While there has been work on whether or not a query leaks any information about a view \cite{machanavajjhala:pods2006,miklau:sigmod2004}, there is no work that bounds the information to leak (like DP)  about the sensitive records in such a  setting. 

\noindent
{\bf One-sided differential privacy and constraints.}
Theorem~\ref{thm:osp-freedom} showed that OSDP algorithms ensure freedom from exlusion attacks as long as an adversary believes the target record is independent of the rest of the database. As in the field of differentially private data analysis problems arise when the records in the database are correlated.  
For instance,  in a location privacy setting, a specific non-sensitive location may be reachable only through a set of locations that are all sensitive. Revealing the fact that a user was in that location (even using our randomized response algorithm) will reveal the fact that the user was in a sensitive location in the previous time point with certainty, thus resulting in a privacy breach. Designing algorithms for OSDP in the presence of constraints in an important and interesting avenue for future work. 
%We note that modeling correlations and constraints in the context of DP has ...
%{\bf Ashwin: point to similar work if any in DP about this}. 

\noindent
{\bf Policy Specification and Enforcement}
OSDP definition and mechanisms assume the presence of an explicit policy function that partitions data into sensitive nad non-sensitive classes. 
As we  discussed in the introduction, there are several application domains where  explicit privacy policies and/or regulations exist that  naturally lead to such
a classification.  Furthermore,  good practice followed by several  data collectors (e.g., e-commerce sites, search engines, etc.)  empowers
users to opt-in or opt-out. Such a practice as well lends itself directly to explicit policy function. Nonetheless, if we are to use OSDP in general application 
context, mechanisms to specify comprehensive policies that dictate data sensitivity or, better still, learn such policies, perhaps through appropriate machine learning
techniques offers rich fertile area for future research. Finally, our focus in this work has been on appropriate privacy definition and mechanisms to realize OSDP. We have
not focused on efficient approach to implement OSDP, especially in online setting wherein user's may dynamically ask queries. 

\section{Conclusions}
\label{sec:conclusion}

In this work we presented a new privacy definition called one-sided privacy that allows for different levels of privacy guarantees (sensitive or non-sensitive) to be applied to different records in the dataset. The level of protection is determined by the value of a record, and hence,  whether a record is sensitive or non-sensitive must also be hidden. This makes the privacy definition unique. We present the fundamental theory behind the privacy definition, compare it to existing definitions and present input and output perturbation algorithms. We present case studies where one-sided privacy can be used, along with effect of our algorithms on the utility of analyzing the data in these scenarios. We concluded with a discussion of extensions and open questions for future work.

\clearpage
{
\bibliographystyle{abbrv}
\bibliography{ref}
}

%%% Old sections %%%
% \input{exp_dp_low_utility}
% \input{appendix_exp_dp_does_not_apply}
% \input{secondary_appendix}
% Use only for the arxiv version

\section{Appendix}
\subsection{Extended OSDP}
We propose a  strict generalization of the definition of OSDP that supports neighboring datasets with different number of records and under which we can show that the parallel composition holds.

\begin{definition}[Extended One-sided $\pol$-Neighbors]
\label{def:extended-osdp-neighbors}
Let $D$, $D'$ be two databases and $\pol$ a policy function. 
$D'$ is an extended one-sided $P$-neighbor of $D$, i.e., $D' \in \eospN{D}$, if and only if $\exists r \in D$ s.t., $P(r) = 0$, such that: $D' = D - \{r\}$ or  $D' = D \cup \{r'\}$, where $r'\in \dom$ and $r' \neq r$.
\end{definition}
% Note that the relation $\eospN{}$ is symmetric.

\begin{definition}[Extended One-sided Differential Privacy]
\label{def:extended-osdp}
	An algorithm $\mech$ satisfies $(\pol, \epsilon)$-Extended one-sided differential privacy (eOSDP), if and only if $\forall \out \subseteq range(\mech)$, and $\forall D, D' 
  \in \eospN{D}$:
  \begin{align*}
    \Pr [\mech(D) \in \out] \leq e^{\epsilon} \Pr [\mech(D') \in \out]
  \end{align*}
\end{definition}

\begin{theorem}[Extended OSDP Equivalence]
\label{thm:ext-equiv}
Any mechanism $\mech$, that satisfies $(\pol, \epsilon)$-eOSDP also satisfies $(\pol, 2\epsilon)$-OSDP.
\end{theorem}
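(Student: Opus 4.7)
The plan is to show that any OSDP-neighbor pair $(D, D')$ can be bridged by a single intermediate database related to both $D$ and $D'$ under the extended neighbor relation $\eospN{\cdot}$, so that two applications of the eOSDP guarantee compose to give the desired $e^{2\epsilon}$ factor.

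First, I would unpack the assumption. Let $D' \in \ospN{D}$, so $D' = D \setminus \{r\} \cup \{r'\}$ for some $r \in D$ with $\pol(r) = \polSensV$ and some $r' \neq r$. The naive attempt is to use $D \setminus \{r\}$ as the intermediate, but this can fail: after deleting the sensitive record $r$, the database might contain no sensitive records at all, and then re-inserting $r'$ (which may itself be non-sensitive) would not qualify as an eOSDP step under \cref{def:extended-osdp-neighbors}, whose insertion branch requires the source database to contain some sensitive record.

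Instead, I would take the intermediate $D^{*} = D \cup \{r'\}$. Then $D^{*} \in \eospN{D}$ holds via the insertion branch with witness $r$: the record $r \in D$ is sensitive and $r' \neq r$, so the conditions of \cref{def:extended-osdp-neighbors} are met. Moreover, $D' \in \eospN{D^{*}}$ via the deletion branch, since $r \in D^{*}$, $\pol(r) = \polSensV$, and $D' = D^{*} \setminus \{r\}$. Applying $(\pol,\epsilon)$-eOSDP to each edge of the path $D \to D^{*} \to D'$ then yields
\begin{equation*}
\Pr[\mech(D) \in \out] \le e^{\epsilon}\Pr[\mech(D^{*}) \in \out] \le e^{2\epsilon}\Pr[\mech(D') \in \out],
\end{equation*}
which is exactly the $(\pol, 2\epsilon)$-OSDP condition.

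The main obstacle, which also explains why the privacy parameter doubles, is precisely the asymmetry between the two branches of \cref{def:extended-osdp-neighbors}: the insertion branch demands the existence of a sensitive record in the source. This forces the order ``insert first, then delete'' and rules out the reverse order whenever $r$ happens to be the unique sensitive record of $D$. A subtlety I would spell out carefully is the multiset interpretation of $\cup$ and $\setminus$, so that both the identity $D^{*} \setminus \{r\} = D'$ and the survival of the witness $r$ through the insertion step are unambiguous; otherwise the bookkeeping of record multiplicities could mask a hidden case (e.g., $r$ appearing multiple times) that requires separate verification.
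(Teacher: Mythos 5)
Your proof is correct and follows essentially the same route as the paper's: both bridge the OSDP neighbors $D$ and $D' = D\setminus\{r\}\cup\{r'\}$ through the intermediate database $D\cup\{r'\}$ (insert first, then delete the sensitive $r$) and compose the eOSDP guarantee twice to obtain the factor $e^{2\epsilon}$. Your additional remark on why the order of operations matters --- the insertion branch of Definition~\ref{def:extended-osdp-neighbors} needs a sensitive witness in the source, which the deletion-first path may destroy --- is a correct and worthwhile clarification that the paper's terser proof (which, incidentally, appears to have a typo writing the intermediate as $D\cup\{r\}$ rather than $D\cup\{r'\}$) leaves implicit.
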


\begin{proof}
Given a database $D$ and $r \in D$ with $P(r) = 0$, let $D' =  D - \{r\} \cup \{r'\}$ an OSDP neighbor of $D$, i.e., $D'\in \ospN{D}$. Then database $D'' = D \cup {r}$ is such that: (a) $D'' \in \eospN{D}$ and (b) $D' \in \eospN{D''}$. Therefore, for mechanism $\mech$ that satisfies $(\pol, \epsilon)$-eOSDP and it holds that $\forall \out \subseteq range(\mech)$:
\begin{align*}
Pr[\mech(D) \in \out ] &\leq e^\epsilon Pr[\mech(D'') \in \out ] \leq e^\epsilon Pr[\mech(D') \in \out ] \implies \\
Pr[\mech(D) \in \out ] &\leq e^{2\epsilon} Pr[\mech(D') \in \out ]
\end{align*}
\end{proof}

Note that the inverse statement of Theorem \ref{thm:ext-equiv} does not necessarily hold, i.e., $(\pol, 2\epsilon)$-OSDP does not imply $(\pol, \epsilon)$-eOSDP.

\begin{theorem}[Parallel Composition]
  \label{thm:osp_parallel_database}
Let $D$ be a database  and $\mathcal{D} = \{ D_1, \cdots, D_n \}$ a partitioning of it. 
Let $\mech_1$, \dots, $\mech_n$ be $(\pol_i, \epsilon_i)$-eOSDP mechanisms and $\polMinRelax$ be the minimum relaxation of  $\pol_1$, \ldots, $\pol_n$. 
Then the composition  $\mech_1 (D_i) \circ \ldots \circ \mech_n (D_n)$ satisfies $(\polMinRelax, \max(\epsilon_i))$-eOSDP.
\end{theorem}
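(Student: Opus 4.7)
The plan is to mimic the standard DP parallel composition proof, but using the extended eOSDP neighboring relation and the policy relaxation machinery already developed for OSDP. First, I would lift the privacy relaxation result (Theorem~\ref{def:relaxation}) to eOSDP: the proof there only uses that $\eospN[1]{D} \subseteq \eospN[2]{D}$ whenever $\pol_1 \polRelax \pol_2$, and this inclusion still holds under Definition~\ref{def:extended-osdp-neighbors} since the policy function is evaluated pointwise on the modified record. Consequently, for each $i$, the fact that $\polMinRelax \polRelax \pol_i$ (by construction of the minimum relaxation) implies that $\mech_i$ also satisfies $(\polMinRelax, \epsilon_i)$-eOSDP. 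This lets us work uniformly with the single policy $\polMinRelax$.

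Next, I would fix a pair $D, D' \in \eospN[\polMinRelaxSuffix]{D}$ and use the fact that the partition $\mathcal{D}=\{D_1,\ldots,D_n\}$ is determined by the record values (so the induced partition of $D'$ differs from that of $D$ in exactly one block). Concretely, the single record $r$ with $\polMinRelax(r)=0$ that is added or removed belongs to a unique partition index $j$, so $D'_j \in \eospN[\polMinRelaxSuffix]{D_j}$ while $D'_i = D_i$ for all $i \ne j$.

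Then, for any output event $\out = \out_1 \times \cdots \times \out_n \subseteq \mathrm{range}(\mech_1)\times\cdots\times\mathrm{range}(\mech_n)$, independence of the coin tosses across the mechanisms gives
\begin{align*}
\Pr[(\mech_1(D_1),\ldots,\mech_n(D_n)) \in \out]
&= \prod_{i=1}^{n} \Pr[\mech_i(D_i) \in \out_i] \\
&\le e^{\epsilon_j} \prod_{i=1}^{n} \Pr[\mech_i(D'_i) \in \out_i] \\
&\le e^{\max_i \epsilon_i}\, \Pr[(\mech_1(D'_1),\ldots,\mech_n(D'_n)) \in \out],
\end{align*}
where the first inequality uses $(\polMinRelax,\epsilon_j)$-eOSDP of $\mech_j$ on the one perturbed block and equality on the remaining $n-1$ blocks. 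Extending from product events to arbitrary measurable $\out$ is a standard measure-theoretic step.

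The main obstacle, I expect, is not the algebra but justifying that a single eOSDP-neighbor change on $D$ affects exactly one block $D_j$. This needs the partitioning to be a fixed deterministic function of record values (so adding or removing $r$ only touches its own block), which should be stated explicitly. A secondary subtlety is that eOSDP's asymmetric neighboring relation (addition or deletion, not substitution) means one must check both the ``delete a sensitive record'' and ``add a record adjacent to a sensitive one'' cases land in the same block $j$; both follow from the value-based partitioning assumption.
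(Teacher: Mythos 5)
Your proof takes essentially the same approach as the paper's: reduce to the single partition block that changes between eOSDP neighbors, lift the relaxation lemma so every $\mech_i$ works under $\polMinRelax$, and bound the product of per-block probabilities by the one factor $e^{\epsilon_j} \le e^{\max_i \epsilon_i}$. You are in fact somewhat more careful than the paper's own (very terse) proof, which leaves implicit both the product decomposition over output events and the assumption that the partitioning is a fixed value-based function of the records.
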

\begin{proof}
Since under eOSDP, for $D' \in \eospN{D}$, we have that $D$ has either one more or one less record than $D'$. Thus, exactly one partition changes in $D'$, i.e., $\exists j \in [n]$, s.t., $D_j \neq D'_j$ and $\forall i \in [n] \backslash \{j\}$: $D_i = D'_i$. Lastly, since any $(P_i, \epsilon_i)$-eOSDP mechanism also satisfies $(\polMinRelax, \epsilon_i)$-eOSDP and only one partition changed between $D$ and $D'$, it follows that $\mech_1 (D_i) \circ \ldots \circ \mech_n (D_n)$ satisfies $(\polMinRelax, \max(\epsilon_i))$-eOSDP.
\end{proof}

%%%%%%%%%%%%%%%%%%%%
% PROOFS
%%%%%%%%%%%%%%%%%%%%
\subsection{Proofs}
\vspace{3mm} \noindent {\bf Proof of Theorem 3.4}

\begin{theorem*}
The $\algoname{Suppress}$ algorithm with threshold $\tau$ satisfies freedom from exclusion attack for $\phi = \tau$.
\end{theorem*}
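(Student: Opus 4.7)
The plan is to mirror the argument of Theorem~\ref{thm:osp-freedom}. Under the product-form prior assumption, applying Bayes' rule to Definition~\ref{def:freedom} reduces the task to bounding, for every $x$ with $\pol(x) = 0$, every $y \in \dom$, and every output event $\out$, the ratio
\begin{equation*}
\frac{\Pr_\mech(\algoname{Suppress}(D_x) \in \out)}{\Pr_\mech(\algoname{Suppress}(D_y) \in \out)} \le e^\tau,
\end{equation*}
where $D_x$ and $D_y$ are the two databases that place the target record at value $x$ and at value $y$, respectively (so under $\Phi_\pol$ we have $\epsilon_x = \epsilon$ and $\epsilon_y \in \{\epsilon, \infty\}$).

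I would then split on the relationship between the threshold $\tau$ and $\epsilon$. In the easy case $\tau \le \epsilon$, both $x$ and $y$ satisfy $\epsilon_r \ge \tau$, so no suppression of the target record occurs and the two post-suppression inputs differ in exactly one record; the inner $\tau$-DP computation then yields the $e^\tau$ bound directly. In the case $\tau > \epsilon$ the sensitive value $x$ is always suppressed: if $y$ is also sensitive, both post-suppression inputs are identical and the ratio equals $1$; if $y$ is non-sensitive, the inputs differ by the presence of the single record $y$, and the $\tau$-DP guarantee again yields the $e^\tau$ bound.

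The main obstacle is the sub-case $\tau > \epsilon$ with $y$ non-sensitive, since the post-suppression inputs differ by a record insertion rather than a substitution and one must appeal to the unbounded-DP form of the inner mechanism. For the mechanisms used in \cite{jorgensen:icde15} this bound still holds with constant $e^\tau$; if only the bounded-DP form of Definition~\ref{def:differential-privacy} were available, the argument would degrade to $e^{2\tau}$ via a two-step substitution through an auxiliary database. Combining all cases produces the claimed $\phi = \tau$ bound, and as a sanity check letting $\tau \to \infty$ recovers the exclusion-attack vulnerability discussed in Section~\ref{sec:osp_connection_with_other}.
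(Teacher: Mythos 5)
Your proof is correct and follows essentially the same route as the paper's: reduce the posterior odds ratio to a worst-case ratio of mechanism output probabilities on databases differing only in the target record, observe that a suppressed sensitive record leaves the output distribution unchanged (giving ratio $1$ when $y$ is also suppressed), and invoke the inner $\tau$-DP guarantee when $y$ survives suppression. You are in fact slightly more careful than the paper, which treats only the regime where $x$ is suppressed and silently relies on the unbounded (add/remove) form of the $\tau$-DP guarantee that you explicitly flag.
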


\begin{proof}
\label{proof:pdp}
Our goal is to bound the posterior odds ratio:
\small % Do this here because hte equation overflows the line
\begin{equation*}
\frac{\PrTheta(r = x | \mech(D) \in \out)}{\PrTheta(r = y | \mech(D) \in \out)} 
	 \le  \frac{\PrTheta(r = x)}{\PrTheta(r = y)} \max_{D_1, D_2} \frac{\Pr_\mech(\mech(D_1) \in \out)}{\Pr_\mech(\mech(D_2) \in \out)}
\end{equation*}
\normalsize

Where $D$ is drawn from $\theta$ and $D_1 = D  \cup \{x \}$ and $D_2 = D \cup \{y\}$, with $x$ a sensitive record and $y$ any record, i.e., $P(x) = 0$ and $P(y) \in \{0, 1\}$. Note that $ \Pr_\mech(\mech(D_1) \in \out) = \Pr_\mech(\mech(D) \in \out)$ for $\algoname{Suppress}$, since $x$ does not participate in the computations of $\mech$. If $P(y) = 0$, the output of $\algoname{Suppress}$ will remain unchanged, since the sensitive record is not considered in the computations at all, i.e.,  $\max_{D_1, D_2} \frac{\Pr_\mech(\mech(D_1) \in \out)}{\Pr_\mech(\mech(D_2) \in \out)} = 1$. If $P(y) = 1$ the record $y$ is used from the mechanism and hence by definition of PDP $\max_{D_1, D_2} \frac{\Pr_\mech(\mech(D_1) \in \out)}{\Pr_\mech(\mech(D_2) \in \out)} \leq \epsilon^\tau$, since $ \Pr_\mech(\mech(D_1) \in \out) = \Pr_\mech(\mech(D) \in \out)$ for $\algoname{Suppress}$. Thus,
\begin{equation*}
	\frac{\PrTheta(r = x | \mech(D) \in \out)}{\PrTheta(r = y | \mech(D) \in \out)}
	\le \frac{\PrTheta(r = x)}{\PrTheta(r = y)} \cdot e^\tau
\end{equation*}

\end{proof}

\end{document}